\newif\ifdraft\draftfalse
\newif\iffull\fullfalse
\newcommand\nk[1]{\textcolor{red}{[#1 -nk]}}
\newcommand\ry[1]{\textcolor{blue}{[#1 -ry]}}
\newcommand\fix[1]{{\color{red}{#1}}}
\newcommand\nk[1]{}
\newcommand\ry[1]{}
\newcommand\fix[1]{#1}
\newcommand\Bot{\mathop{\bot\!\!\!\bot}}
\newcommand\Top{\mathop{\top\!\!\!\top}}
\newcommand\Forall{\texttt{Forall}}
\newcommand\Exists{\texttt{Exists}}
\newcommand\figref[1]{Fig.~\ref{#1}}
\newcommand\tabref[1]{Table~\ref{#1}}
\newcommand\DWF{\mathtt{DWF}}
\newcommand\LLRF{\mathtt{WF}}
\newcommand\MC{\mathtt{MC}}
\newcommand\Nat{\mathbf{N}}
\newcommand\Base[1]{\mathbf{B}_{#1}}
\newcommand\measure[1]{\#{#1}}
\newcommand\meas[2]{\#_{#2}{#1}}
\newcommand\Down[1]{#1^{\downarrow}}
\newcommand\IF{\texttt{if}}
\newcommand\THEN{\texttt{then}}
\newcommand\ELSE{\texttt{else}}
\newcommand\MOD{\texttt{mod}}
\newcommand\muhfl{\textsc{MuHFL}}
\newcommand\sem[1]{\mathbin{[\![}#1\mathbin{]\!]}}
\newcommand\semd[1]{\mathbin{(\!|}#1\mathbin{|\!)}}
\newcommand\Simge{\succeq}
\newcommand\LEmono{\leq}
\newcommand\LEQ{\sqsubseteq}
\newcommand\GEQ{\sqsupseteq}
\newcommand\LUB{\sqcup}
\newcommand\GLB{\sqcap}
\newcommand\GFP{\mathbf{GFP}}
\newcommand\LFP{\mathbf{LFP}}
\newcommand\Z{\mathbf{Z}}
\newcommand\hflz{\textsc{HFL(Z)}}
\newcommand\p{\vdash}
\newcommand\pST{\p_{\mathtt{ST}}}
\newcommand\pSTex{\pST}
\newcommand\stenv{\Gamma}
\newcommand\ST{\mathtt{ST}}
\newcommand\nuhflz{\textsc{$\nu$HFL(Z)}}
\newcommand\FALSE{\mathtt{false}}
\newcommand\TRUE{\mathtt{true}}
\newcommand\sty{\kappa}
\newcommand\fty{\tau}
\newcommand\ord{\mathtt{ord}}
\newcommand\ar{\mathtt{ar}}
\newcommand\INT{\mathtt{Int}}
\newcommand\Prop{\star}
\newcommand\form{\varphi}
\newcommand\ctlstar{CTL$^*$}
\newcommand\fib{\mathit{Fib}}
\newcommand\fibsafe{\mathit{Fib}_{\mathtt{safe}}}
\newcommand\imp{\Rightarrow}
\newcommand\IFF{\Leftrightarrow}
\newcommand\pahfl{{\sc PaHFL}}
\newcommand\rethfl{{\sc ReTHFL}}
\newcommand\defeq{\stackrel{\triangle}{=}}
\newcommand\seq[1]{\widetilde{#1}}
\newcommand\set[1]{\{#1\}}
\newcommand\COL{\mathbin{:}}
\newcommand\exv[1]{v_{#1}} 
\newcommand\cv{u} 
\newcommand\exarg{\mathit{exarg}}
\newcommand\letexp[2]{\mathbf{let}\ #1=#2\ \mathbf{in}\ }
\newcommand\raweq{\approx}
\newcommand\efty{\zeta}
\newcommand\tagty{\alpha}
\newcommand\trT[1]{#1^\dagger}
\newcommand\tagv{t}
\newcommand\FV{\mathbf{FV}}
\newcommand\dom{\mathit{dom}}
\newcommand\restrict[2]{#1\mathbin{\downarrow_{#2}}}
\newcommand\etyfun[3]{(#1,#2)\to #3}
\newcommand\intfun[1]{\INT\to #1}
\newcommand\taguse{\mathrm{T}}
\newcommand\tagnotuse{\mathrm{F}}
\newcommand\judgesimp[4]{\ensuremath{{#1}\vdash{#3}\colon{#4}\ \leadsto\ }} 
\newcommand\judgesimpc[5]{\ensuremath{{#1}\vdash^{#5}{#3}\colon{#4}\ \leadsto\ }} 
\newcommand\envv{\Delta}
\newcommand\envpair[3]{#1:(#2,#3)}
\newcommand\abs[3]{\lambda{#1}^{#2}.\,#3}
\newcommand\gettags{\mathrm{Tags}}
\newcommand\subtype{<:}
\newtheorem{theorem}{Theorem}[section]
\newtheorem{lemma}[theorem]{Lemma}
\newtheorem{remark}{Remark}
\newtheorem{notation}{Notation}
\newtheorem{example}{Example}
\newcommand\qed{$\Box$}
\newenvironment{proof}{\paragraph{Proof}}{\hfill\qed}
\begin{document}

\title{Automatic HFL(Z) Validity Checking for Program Verification}

\author{Naoki Kobayashi \and Kento Tanahashi \\
  The University of Tokyo
 \and Ryosuke Sato\and Takeshi Tsukada \\ Chiba University}




\maketitle
\begin{abstract}
  We propose an automated method for checking the validity of
  a formula of \hflz{}, a higher-order logic with fixpoint operators and integers.
  Combined with Kobayashi et al.'s reduction from higher-order program verification
   to \hflz{} validity checking, our method yields a fully automated, uniform
   verification method for arbitrary temporal properties of higher-order functional programs
   expressible in the modal \(\mu\)-calculus, 
   including termination, non-termination, fair termination, fair non-termination,
   and also branching-time properties.
  We have implemented our method and obtained promising experimental results.
\end{abstract}

\section{Introduction}
\label{sec:intro}
Kobayashi et al.~\cite{KTW18ESOP,DBLP:conf/pepm/WatanabeTO019}
have shown that temporal property verification problems for higher-order functional
programs can be reduced to
the validity checking problem for \hflz{}. \hflz{} is an extension of Viswanathan and
Viswanathan's higher-order fixpoint logic (HFL)~\cite{Viswanathan04} with integers, and the validity checking
problem asks whether or not a given \hflz{} formula (without modal operators) is valid.
The reduction provides a uniform approach to the temporal property verification
of higher-order functional programs. Automatic validity checkers have been implemented for
the first-order fragment of \hflz~\cite{DBLP:conf/sas/0001NIU19}
and \nuhflz{}~\cite{DBLP:conf/sas/IwayamaKST20,DBLP:conf/aplas/KatsuraIKT20},
the fragment of \hflz{} without least fixpoint operators.
The former~\cite{DBLP:conf/sas/0001NIU19}
enables automated verification of temporal properties of first-order programs,
and the latter~\cite{DBLP:conf/sas/0001NIU19,DBLP:conf/sas/IwayamaKST20,DBLP:conf/aplas/KatsuraIKT20}
enables automated verification of safety properties of higher-order programs.
This line of work provides a streamlined, general approach to automated verification of temporal
properties of programs. Despite the generality of the approach, it has been reported that
Kobayashi et al.'s tool~\cite{DBLP:conf/sas/0001NIU19}
outperformed Cook and Koskinen's method specialized for CTL verification~\cite{Cook2013a}.

 Following the above line of research, we propose an automated (sound but
 incomplete\footnote{Incompleteness is inevitable because the validity checking problem
   for \hflz{} is undecidable in general.})
   method of validity
checking for the \emph{full} fragment of \hflz{}. By combining the proposed method with
the above-mentioned reduction~\cite{KTW18ESOP,DBLP:conf/pepm/WatanabeTO019}, we can obtain 
a fully automated verification method for arbitrary regular temporal properties of higher-order programs.
The properties that can be verified in a uniform manner using our method
include safety~\cite{Jhala08,Terauchi10POPL,KSU11PLDI,SUK13PEPM,Ong11POPL,DBLP:journals/pacmpl/PavlinovicSW21,zhu_2015},
termination~\cite{Kuwahara2014Termination}, non-termination~\cite{Kuwahara2015Nonterm},
fair termination~\cite{MTSUK16POPL}, and fair non-termination~\cite{Watanabe16ICFP},
for which separate methods and tools have been developed so far.
Furthermore, our method can also be used for automatic verification
of branching-time properties (typically expressed by formulas of CTL, CTL*, and
the modal \(\mu\)-calculus),
which have not been supported by previous automated methods/tools for
higher-order program verification, to our knowledge.

\tabref{tab:fixpoint-logics} compares \hflz{} with
other fixpoint logics studied in the context of automated program verification.
Program verification by reduction to the satisfiability problem of
 constrained Horn clauses (CHCs) has recently been studied actively as a uniform
method for automated verification of first-order programs~\cite{Bjorner15}.
As discussed in \cite{DBLP:conf/sas/0001NIU19},
the CHC satisfiability problem corresponds to the validity checking problem
for the first-order fragment of \hflz{} with only the greatest fixpoint operators;
thus CHCs can be used to verify safety properties of first-order programs, but
extensions~\cite{DBLP:conf/cav/BeyenePR13,DBLP:conf/sas/BjornerMR13}
are required to reason about other properties such as
liveness and termination. The Mu-Arithmetic studied by Kobayashi
et al.~\cite{DBLP:conf/sas/0001NIU19} allows arbitrary alternations of
greatest and least fixpoint operators, and can be used for verification of
arbitrary regular properties of first-order programs, but not higher-order ones.
Burn et al.~\cite{DBLP:journals/pacmpl/BurnOR18} studied a higher-order extension
of CHCs, and Katsura
et al.~\cite{DBLP:conf/sas/IwayamaKST20,DBLP:conf/aplas/KatsuraIKT20}
studied the corresponding fragment of \hflz{} called \nuhflz{}. They
can be used for verification of safety properties of
higher-order programs, but not arbitrary temporal properties.
The automated method for \hflz{} validity checking developed in this paper
enables a uniform approach to automated verification of arbitrary regular
temporal properties of higher-order programs.

\begin{table}[tbp]
  \caption{Fixpoint logics for program verification}
  \label{tab:fixpoint-logics}
  \begin{tabular}{|l|l|l|}
    \hline
    & \(\mu\)- or \(\nu\)-only & both \(\mu\) and \(\nu\)\\
    \hline
    first-order  &
    \begin{minipage}{0.38\textwidth}
      Constrained Horn Clauses (CHCs)\\
      \cite{Bjorner15,DBLP:conf/vmcai/JaffarSV06,DBLP:journals/sttt/DelzannoP01}
      \end{minipage} & Mu-Arithmetic~\cite{DBLP:conf/sas/0001NIU19}\\
\hline    
higher-order &
    \begin{minipage}{0.38\textwidth}
Higher-order CHCs~\cite{DBLP:journals/pacmpl/BurnOR18},
\nuhflz{}~\cite{DBLP:conf/sas/IwayamaKST20,DBLP:conf/aplas/KatsuraIKT20}
\end{minipage}
    &
    \begin{minipage}{0.38\textwidth}
    \hflz{}~\cite{KTW18ESOP,DBLP:conf/pepm/WatanabeTO019}
\end{minipage}\\
\hline
\end{tabular}
\end{table}

Our approach to automated validity checking of \hflz{} formulas has been
inspired by the approach of Kobayashi et al.~\cite{DBLP:conf/sas/0001NIU19} for
first-order \hflz{}
and that of Fedyukovich et al.~\cite{freqterm} for termination analysis
of first-order programs.
We approximate a given \hflz{} formula with a formula of \nuhflz{},
the fragment of \hflz{} without the least-fixpoint operator. We can then use
existing solvers~\cite{DBLP:conf/sas/IwayamaKST20,DBLP:conf/aplas/KatsuraIKT20} to prove
the validity of the \nuhflz{} formula.
\nk{The following explanation may be too technical for
  introduction. Should we give an overview of the method after
  introducing the syntax in the next section?}
The idea of removing the least-fixpoint operator is as follows.
Suppose we wish to prove that \((\mu X.\varphi(X))\,y\) holds for every integer \(y\),
where \(\mu X.\varphi(X)\) represents the \emph{least} predicate
such that \(X = \varphi(X)\);
for example, \(\mu X.\lambda z.X(z)\) is equivalent to \(\lambda z.\FALSE\), and
\(\mu X.\lambda z.z=0\lor X(z-1)\) is equivalent to \(\lambda z.z\ge 0\).
By a standard property of least fixpoints, \(\mu X.\varphi(X)\) can be underapproximated by
a formula of the form
\(\varphi^e(\lambda z.\FALSE) \equiv \underbrace{\varphi(\cdots \varphi}_e(\lambda z.\FALSE)\cdots)\),
where \(e\) is an expression denoting a non-negative integer.
(The formula \(\varphi^e(\lambda z.\FALSE)\) is actually represented by using
  the greatest fixpoint operator, as discussed later.)
We then use an existing \nuhflz{} validity checker to check the validity of \((\varphi^e(\lambda z.\FALSE))\,y\).
If \((\varphi^e(\lambda z.\FALSE))\,y\) is valid, then we can conclude that the original
formula \((\mu X.\varphi(X))\,y\) is also valid.
Otherwise, we increase the value of \(e\) to improve the precision
and run the \nuhflz{} validity checker again.
Following Kobayashi et al.'s work on the first-order case~\cite{DBLP:conf/sas/0001NIU19},
we consider as \(e\) an expression of the form \(c_0+c_1|x_1|+\cdots+c_k|x_k|\), where \(x_1,\ldots,x_k\)
are the integer variables in scope, and gradually increase the coefficients \(c_0,\ldots,c_k\).

The new challenge in this paper for dealing with the higher-order case is how to incorporate
the values of higher-order variables into \(e\).
To this end, for each function argument, we add an extra integer argument that represents information
about the function argument (thus, a predicate of the form \(\lambda f.\varphi\) would be transformed to
\(\lambda (\exv{f},f).\varphi'\), where \(\exv{f}\) is the extra integer argument that represents information
about \(f\), and used in \(\varphi'\) to compute the value of \(e\) above).
The idea of adding extra arguments has been inspired by the work of Unno et al.~\cite{UnnoTK13} on
relatively complete verification of safety properties of higher-order functional programs,
but we have devised a different, more systematic method for inserting extra arguments.
To avoid the insertion of unnecessary extra arguments, we also propose a type-based static analysis
to estimate necessary extra arguments.

The contributions of this paper are summarized as follows.
\begin{enumerate}
\item An extension of Kobayashi et al.'s method for the first-order \hflz{}~\cite{DBLP:conf/sas/0001NIU19},
  to obtain an automated validity checking method for full \hflz{}.
\item A method of adding extra arguments for higher-order arguments, to improve the precision.
\item An optimization to avoid the insertion of unnecessary extra arguments.
\item A theoretical characterization of the power of our method (Section~\ref{sec:disc}).
  We compare our method with previous popular methods for proving termination,
  such as those using lexicographic linear ranking functions and disjunctive well-founded relations.
\item An implementation and experiments on the proposed methods above (Section~\ref{sec:exp}). According to the experiments,
  our tool outperformed previous verification tools specialized 
  for 
  verification of termination~\cite{Kuwahara2014Termination}, non-termination~\cite{Kuwahara2015Nonterm},
  fair termination~\cite{MTSUK16POPL}, and fair non-termination~\cite{Watanabe16ICFP}.
  We have also confirmed that our tool can verify properties of higher-order programs
  that were not supported by previous automated tools,
  including branching-time properties. 
\end{enumerate}

The rest of this paper is structured as follows.
Section~\ref{sec:pre} reviews \hflz{} and its connection to program verification.
Section~\ref{sec:method} describes our method for \hflz{} validity checking.
Section~\ref{sec:disc} gives some theoretical characterization of the power of our method
by comparing it with previous methods for proving termination and liveness properties.
Section~\ref{sec:exp} reports experimental results.
Section~\ref{sec:related} discusses related work and Section~\ref{sec:conc} concludes the paper.

\section{Preliminaries}
\label{sec:pre}

This section reviews \hflz{} and its application to program verification.
\hflz{} is an extension of Viswanathan and
Viswanathan's higher-order fixpoint logic (HFL)~\cite{Viswanathan04}\footnote{We omit modal operators in this paper. The modal operators are unnecessary for the general reduction from program verification problems~\cite{DBLP:conf/pepm/WatanabeTO019}.}
with integers.

\subsection{\hflz{}}
The set of types, ranged over by \(\sty\), is given by:
\[
\begin{array}{l}
\sty \mbox{ (types)} ::= \INT \mid \fty \qquad 
\fty \mbox{ (predicate types)} ::= \Prop \mid \sty \to \fty.
\end{array}
\]
Here, \(\Prop\) is the type of propositions, and \(\INT\) is the type of integers.
A predicate type \(\fty\) is of the form \(\sty_1\to\cdots \to \sty_k\to \Prop\),
which describes \(k\)-ary (possibly higher-order) predicates on values of types \(\sty_1,\ldots,\sty_k\).
For example, \(\INT\to (\INT\to\Prop)\to \Prop\) is the type of binary predicates that takes
an integer and a predicate on integers as arguments.
For a type \(\sty\), we define the order and arity of \(\sty\), written
\(\ord(\sty)\) and \(\ar(\sty)\) respectively, by:
\[
\begin{array}{l}
\ord(\INT)=\ord(\Prop)=0\qquad
\ord(\sty\to\fty) = \max(\ord(\fty), \ord(\sty)+1)\\
\ar(\INT)=\ar(\Prop)=0\qquad
\ar(\sty\to\fty) = \ar(\fty)+1.
\end{array}
\]

The syntax of \hflz{} formulas is given as follows.

\[
  \begin{array}{l}
    \form \mbox{ (formulas) } ::=
    x \mid \form_1\lor \form_2 \mid \form_1\land\form_2\\
\qquad\qquad  \qquad     \mid \mu x^\fty.\form \mid \nu x^\fty.\form \quad\mbox{ (fixpoint operators) }\\
\qquad\qquad  \qquad     \mid \form_1\form_2 \mid \lambda x^\sty.\form\quad\ \ \mbox{ (\(\lambda\)-abstractions and applications)}\\
\qquad\qquad\qquad  \mid \form\, e \mid e_1\fix{\ge} e_2\qquad
  \mbox{ (extension with integers)}\\
  \ \   e \mbox{ (integer expressions) }::= n \mid x \mid  e_1+e_2 \mid e_1\times e_2
\end{array}
\]
Here, \(x\) and \(n\) are metavariables for
variables and integers respectively.
The formulas \(\mu x^\fty.\form\)  and \(\nu x^\fty.\form\) respectively
denote the least and greatest predicates \(x\) such that \(x=\form\).
For example, \(\mu x^{\Prop}.x\) and \(\nu x^{\Prop}.x\) are equivalent to \(\FALSE\)
(which can be expressed as \fix{\(0\ge 1\)})
and \(\TRUE\) (which can be expressed as \fix{\(0\ge 0\)}) respectively.
The variable \(x\) is bound in \(\mu x^\fty.\form\), \(\nu x^\fty.\form\), and \(\lambda x^\fty.\form\).
As usual, we implicitly assume \(\alpha\)-renaming of bound variables. We write
\([\form_1/x]\form_2\) for the capture-avoiding substitution of \(\form_1\) for
all the free occurrences of \(x\) in \(\form_2\).
We often omit the type annotation.
Henceforth, we often use shorthand notations like
\(e_1=e_2\) (for \fix{\(e_1\ge e_2\land e_2\ge e_1\)}) and \(e_1-e_2\) (for \(e_1 + (-1)\times e_2\))
and treat them as if they were primitives.

We consider only formulas well-typed under
the simple type system given in \figref{fig:st}.
In the figure, \(\stenv\) denotes a type environment of the form \(x_1\COL\sty_1,\ldots,x_k\COL\sty_k\),
which is considered a function that maps \(x_i\) to \(\sty_i\) for \(i\in\set{1,\ldots,k}\).
For example, \(\mu x^{\INT\to\Prop}.x\,1\) is rejected as ill-typed (since
the fixpoint variable \(x\) and \(x\,1\) have different types, violating \rn{T-Mu}).
\begin{figure}[tbp]
  \begin{multicols}{2}
    \typicallabel{T-Plus}
  \infrule[T-Var]{}{\stenv,x\COL\sty \pST x:\sty}
  \infrule[T-Or]{\stenv\pST \form_1:\Prop\andalso \stenv\pST \form_2:\Prop}
  {\stenv\pST\form_1\lor\form_2:\Prop}
  \infrule[T-And]{\stenv\pST \form_1:\Prop\andalso \stenv\pST \form_2:\Prop}
  {\stenv\pST\form_1\land\form_2:\Prop}
  \infrule[T-Mu]{\stenv,x\COL\fty\pST \form:\fty}
          {\stenv\pST\mu x^\fty.\form:\fty}
  \infrule[T-Nu]{\stenv,x\COL\fty\pST \form:\fty}
          {\stenv\pST\nu x^\fty.\form:\fty}
  \infrule[T-App]{\stenv\pST \form_1:\fty_2\to\fty\andalso \stenv\pST\form_2:\fty_2}
          {\stenv\pST\form_1\form_2:\fty}
          
  \infrule[T-Abs]{\stenv,x\COL\sty\pST\form:\fty}
          {\stenv\pST\lambda x^\sty.\form:\sty\to\fty}
  \infrule[T-AppInt]{\stenv\pST \form:\INT\to\fty\andalso \stenv\pST e:\INT}
          {\stenv\pST\form\,e:\fty}
  \infrule[T-Ge]{\stenv\pST e_1:\INT\andalso \stenv\pST e_2:\INT}
          {\stenv\pST e_1\fix{\ge} e_2:\Prop}
  \infrule[T-Int]{}
          {\stenv\pST n:\INT}
  \infrule[T-Plus]{\stenv\pST e_1:\INT\andalso \stenv\pST e_2:\INT}
          {\stenv\pST e_1+ e_2:\INT}
  \infrule[T-Mult]{\stenv\pST e_1:\INT\andalso \stenv\pST e_2:\INT}
          {\stenv\pST e_1\times e_2:\INT}
\end{multicols}
  \caption{Simple Type System for \hflz{}}
  \label{fig:st}
\end{figure}

\begin{notation}
  \label{not:eq}
For readability, we sometimes represent fixpoint formulas by using fixpoint equations.
For example, we call \(\alpha x.\lambda y.\form\) (where \(\alpha\) is \(\mu\) or \(\nu\))
``the predicate \(x\) defined by the equation
\(x\,y=_\alpha \form\)''. The latter presentation of fixpoint formulas is in general
called \emph{hierarchical equation systems} (HES)~\cite{DBLP:conf/popl/KobayashiLB17,KTW18ESOP}.
We omit the formal definition of the HES representation and use it only informally in this paper.
\qed
\end{notation}

\fix{We review the formal semantics of \hflz{} formulas.
For each simple type \(\sty\), we define the partially ordered set
\(\sem{\sty}=(\semd{\sty}, \LEQ_{\sty})\) by:
\[
\begin{array}{l}
  \semd{\INT}=\Z \qquad \LEQ_{\INT}=\set{(n,n)\mid n\in \Z})\\
  \semd{\Prop}=\set{\Bot,\Top}\qquad \LEQ_{\Prop}=\set{(\Bot,\Bot),(\Bot,\Top),(\Top,\Top)}\\
  \semd{\sty\to\fty} =
  \set{f\in \semd{\sty}\to\semd{\fty}\mid
    \forall x,y\in \semd{\sty}.x\LEQ_{\sty}y \imp f(x)\LEQ_{\fty} f(y)}\\
  \LEQ_{\sty\to\fty} =
  \set{(f,g)\in \semd{\sty\to\fty}\times\semd{\sty\to\fty}\mid
    \forall x\in \semd{\sty}.f(x)\LEQ_{\fty} g(x)}.\\
\end{array}
\]
Here, \(\Z\) denotes the set of integers.
For each \(\fty\), \(\sem{\fty}\) (but not \(\sem{\INT}\)) forms a complete lattice.
We write \(\Bot_\fty\) (\(\Top_\fty\)) for
the least (greatest, resp.) element of \(\sem{\fty}\), and
\(\GLB_\fty\) (\(\LUB_\fty\), resp.) for the greatest lower bound (least upper bound, resp.) operation with respect to \(\LEQ_\fty\).
We also define the least and greatest fixpoint operators
\(\LFP_\fty,\GFP_\fty \in \semd{(\fty\to\fty)\to\fty)}\) by:
\[
\begin{array}{l}
  \LFP_\fty(f) = \GLB \set{g\in\semd{\fty}\mid f(g)\LEQ_\fty g}\qquad
  \GFP_\fty(f) = \LUB \set{g\in\semd{\fty}\mid g\LEQ_\fty f(g)}.
\end{array}
\]
Note that \(\LFP_\fty\) and \(\GFP_\fty\) are well-defined,
since every element of \(\semd{\fty\to\fty}\) is a monotonic function over a complete lattice.
By Tarski's fixpoint theorem, \(\LFP_\fty(f)\) and \(\GFP_\fty(f)\) coincide with the least and greatest fixpoint of \(f\), respectively.

For a simple type environment \(\stenv\), we write \(\semd{\stenv}\)
for the set of maps \(\rho\) such that
\(\dom(\rho)=\dom(\stenv)\) and \(\rho(x)\in\semd{\stenv(x)}\) for each
\(x\in\dom(\rho)\).

For each valid type judgment \(\stenv\pST \form:\sty\), 
its semantics \(\sem{\stenv\pST \form:\sty}\in \semd{\stenv}\to \semd{\sty}\)
is defined by:
\[
\begin{array}{l}
  \sem{\Gamma,x\COL\sty\pST x\COL\sty}(\rho) = \rho(x)\\
  \sem{\Gamma\pST\form_1\lor\form_2:\Prop}{\rho}
  = \sem{\Gamma\pST\form_1:\Prop}{\rho}\LUB_{\Prop}
  \sem{\Gamma\pST\form_2:\Prop}{\rho}\\
  \sem{\Gamma\pST\form_1\land\form_2:\Prop}{\rho}
  = \sem{\Gamma\pST\form_1:\Prop}{\rho}\GLB_{\Prop}
  \sem{\Gamma\pST\form_2:\Prop}{\rho}\\
  \sem{\Gamma\pST \mu x^\fty.\form:\fty}{\rho}
  = \LFP(\lambda v\in\semd{\fty}.\sem{\Gamma,x\COL\fty\pST\form:\fty}(\rho\set{x\mapsto v}))\\
  \sem{\Gamma\pST \nu x^\fty.\form:\fty}{\rho}
  = \GFP(\lambda v\in\semd{\fty}.\sem{\Gamma,x\COL\fty\pST\form:\fty}(\rho\set{x\mapsto v}))\\
  \sem{\Gamma\pST \lambda x^\sty.\form:\sty\to\fty}{\rho}
  = \lambda w\in\semd{\sty}.\sem{\Gamma,x\COL\sty\pST\form:\fty}(\rho\set{x\mapsto w})\\
  \sem{\Gamma\pST \form_1\form_2:\fty}{\rho}
  =
  \sem{\Gamma\pST \form_1:\fty_2\to \fty}{\rho}\,
  (\sem{\Gamma\pST \form_2:\fty_2}{\rho})\\
  \sem{\Gamma\pST \form\,e:\fty}{\rho}
  =
  \sem{\Gamma\pST \form:\INT\to \fty}{\rho}\,
  (\sem{\Gamma\pST e:\INT}{\rho})\\
  \sem{\Gamma\pST e_1\ge e_2:\Prop}{\rho}
  = \left\{\begin{array}{ll}
  \Top &\mbox{if $\sem{\Gamma\pST e_1:\INT}{\rho}\ge\sem{\Gamma\pST e_2:\INT}{\rho}$}\\
  \Bot &\mbox{otherwise}
  \end{array}\right.\\
  \sem{\Gamma\pST n:\INT}{\rho}=n\\
  \sem{\Gamma\pST e_1+ e_2:\INT}{\rho}
  =\sem{\Gamma\pST e_1:\INT}{\rho}+\sem{\Gamma\pST e_2:\INT}{\rho}\\
  \sem{\Gamma\pST e_1\times e_2:\INT}{\rho}
  =\sem{\Gamma\pST e_1:\INT}{\rho}\times\sem{\Gamma\pST e_2:\INT}{\rho}\\
\end{array}
\]}
For a closed formula \(\form\) of type \(\fty\), we often just write \(\sem{\form}\) for
\(\sem{\emptyset\pST\form:\fty}\).
We write \(\form_1\equiv_{\Gamma,\fty} \form_2\) when
\(\sem{\Gamma\pST \form_1:\fty}=\sem{\Gamma\pST \form_2:\fty}\),
and often omit the subscripts \(\Gamma\) and \(\fty\).
Note in particular that the following laws hold (under an appropriate assumption on types):
(i) \(\alpha x.\form \equiv [\alpha x.\form/x]\form\) for \(\alpha\in\set{\mu,\nu}\),
(ii) \((\lambda x.\form_1)\form_2 \equiv [\form_2/x]\form_1\) (\(\beta\)-equality),
and (iii) \(\form \equiv \lambda x.\form\,x\) (\(\eta\)-equality).

\begin{example}
  \label{ex:simple}
  Consider the formula \(\mu x^{\INT\to\Prop}.\lambda y.y=0\lor x(y-1)\),
  which denotes the least predicate on integers that satisfies the equivalence
  \(x\equiv \lambda y.y=0\lor x(y-1)\).
  To understand what the formula means, let us expand the equality as follows.
  \[
  \begin{array}{lll}
    x&\equiv \lambda y.y=0 \lor x(y-1)\\
    & \equiv \lambda y.y=0 \lor (\lambda y.y=0 \lor x(y-1))(y-1) &\mbox{(expand $x$)}\\
    & \equiv \lambda y.y=0 \lor y-1=0 \lor x(y-2) & \mbox{($\beta$-reduction)}\\
    & \equiv \lambda y.y=0 \lor y-1=0 \lor (\lambda y.y=0 \lor x(y-1))(y-2)\quad &\mbox{(expand $x$)}\\
    & \equiv \lambda y.y=0 \lor y-1=0 \lor y-2=0 \lor x(y-3)& \mbox{($\beta$-reduction)}\\
    & \equiv \cdots.
  \end{array}
  \]
  Thus, the formula \(\mu x^{\INT\to\Prop}.\lambda y.y=0\lor x(y-1)\) is equivalent to \(\lambda y.y\ge 0\). \qed
\end{example}

\begin{example}
  \label{ex:quantifiers}
  Consider the formula
  \(\nu x^{(\INT\to\Prop)\to\Prop}.\lambda p.p(0)\land x(\lambda y.p(y+1))\),
  which represents the greatest predicate \(x\)
  such that
  \(x(p) \equiv p(0)\land x(\lambda y.p(y+1))\) for every unary predicate
  \(p\) on integers. We can expand the equality as follows.
    \[
  \begin{array}{ll}
    x(p) &\equiv p(0) \land (\lambda p.p(0)\land x(\lambda y.p(y+1)))(\lambda y.p(y+1))\\
    &\equiv p(0) \land p(1) \land x(\lambda y.p(y+2))
    \equiv p(0) \land p(1) \land p(2)\land x(\lambda y.p(y+3))
    \equiv \cdots.
  \end{array}
  \]
  Thus, 
  \(\nu x^{(\INT\to\Prop)\to\Prop}.\lambda p.p(0)\land x(\lambda y.p(y+1))\)
  is equivalent to \(\lambda p.\forall y\ge 0.p(y)\).
  In this manner, universal quantifiers can be expressed by using
  the greatest fixpoint operator \(\nu\) (note that
  \(\forall y.p(y)\) can be expressed as \(\forall y\ge 0.p(y)\land p(-y)\)).
  Similarly, existential quantifiers can be expressed by using
  the least fixpoint operator \(\mu\). Henceforth, we use
  quantifiers as if they were primitives.
  \qed
\end{example}

The \emph{validity checking problem} for \hflz{} (or, the \hflz{} validity checking problem)
asks 
whether a given (closed) \hflz{} formula \(\form\) is valid (i.e. whether \(\sem{\form}=\Top\)).
The problem is undecidable in general; we aim to develop an incomplete but sound
method for proving or disproving the validity of \hflz{}.

The fragment of \hflz{} without the least fixpoint operator \fix{\(\mu\)}
is called \nuhflz{}. In this paper, we shall develop an automated method
for \hflz{} validity checking, by reduction to \nuhflz{} validity checking;
for \nuhflz{} validity checking, a few tools
are available~\cite{DBLP:journals/pacmpl/BurnOR18,DBLP:conf/sas/IwayamaKST20,DBLP:conf/aplas/KatsuraIKT20}.


\subsection{Applications of \hflz{} to Program Verification}
\label{sec:app}
Watanabe et al.~\cite{DBLP:conf/pepm/WatanabeTO019} have shown that
given a higher-order functional program \(P\) and a formula \(A\)
of the modal \(\mu\)-calculus (or, equivalently, an alternating parity tree automaton),
one can effectively construct an \hflz{} formula \(\form_{P,A}\) such that
the program \(P\) satisfies the property described by \(A\), if and only if
the formula \(\form_{P,A}\) is valid. Various temporal property verification problems
(including safety, termination, CTL, LTL, \ctlstar{} verification)
can thus be reduced to the \hflz{} validity checking problem and solved in a uniform manner.
Here, we just give some examples, instead of reviewing the general reduction.

Let us consider the following OCaml program.
\begin{verbatim}
let rec fib x k = if x<2 then k x else fib (x-1) (fun y -> fib (x-2) (fun z -> k(y+z))
\end{verbatim}
The function \texttt{fib} computes the Fibonacci number in the continuation-passing style.
The termination of \texttt{fib x (fun r->())}\footnote{Here, we restrict the answer type
  of the continuation to the type \texttt{unit} of the unit value
  \texttt{()}, which is mapped to the type
  \(\Prop\) of propositions by the translation to \hflz{} validity checking.}
for all \texttt{x} can be reduced
to the validity of \(\forall x.{\fib}\,x\,(\lambda r.\TRUE)\), where the predicate
\({\fib}\) is defined by:
\[
\begin{array}{l}
\fib^{\INT\to(\INT\to\Prop)\to\Prop}\, x\,k =_\mu 
(x<2 \imp k\,x)\land (x\ge 2\imp \fib\,(x-1)\,\lambda y.\fib\,(x-2)\,\lambda z.k(y+z)).
\end{array}
\]
Here, \(b\imp \form\) abbreviates \(\neg b\lor \form\).
The equation above defines \(\fib\) as the least predicate that
satisfies the equation (recall Notation~\ref{not:eq}).
The formula mimics the structure of
the program. In particular,
the parts ``\(x<2 \imp k\,x\)'' and
``\(x\ge 2\imp \cdots\)'' respectively correspond to the then-part and the else-part
of the function definition of \texttt{fib}.

The property ``\texttt{fib x (fun r->assert(r>=x))} never fails for any \texttt{x}''
can be expressed by \(\forall x.\fibsafe\, x\,(\lambda r.r\ge x)\), where \(\fibsafe\) is defined by:
\[
\begin{array}{l}
\fibsafe^{\INT\to(\INT\to\Prop)\to\Prop}\, x\,k =_\nu 
(x<2 \imp k\,x)\land (x\ge 2\imp \fibsafe\,(x-1)\,\lambda y.\fibsafe\,(x-2)\,\lambda z.k(y+z)).
\end{array}
\]
As in the examples above, (i) the \hflz{} formula obtained by the reduction mimics
the structure of the original program, and (ii) liveness properties (like termination)
are expressed by using the least fixpoint operator \(\mu\), and safety properties
(like partial correctness) are
expressed by using the greatest fixpoint operator \(\nu\).

For the automated verification of first-order programs, it has been
a popular approach to reduce verification problems
to the satisfiability problem for Constrained Horn Clauses (CHC)~\cite{Bjorner15}.
Since the satisfiability problem for CHC (where data domains are restricted to integers) can be reduced to the validity checking
problem for the first-order \nuhflz{}~\cite{DBLP:conf/sas/0001NIU19},
the program verification framework based on \hflz{} can be considered an extension
of the CHC-based program verification framework with higher-order
features and fixpoint alternations.
\hflz{} can also be viewed as an extension of HoCHC (higher-order CHC)~\cite{DBLP:journals/pacmpl/BurnOR18} with
fixpoint alternations (recall \tabref{tab:fixpoint-logics}).

\section{Reduction from \hflz{} to \nuhflz{}}
\label{sec:method}
\subsection{Overview of \hflz{} Validity Checking}

Fig.~\ref{fig:hflz} shows the overall flow of our \hflz{} validity checking
method. Given a \hflz{} formula \(\form\),
we approximate \(\form\) by a \nuhflz{} formula \(\form'\), by removing
all the least fixpoint formulas (of the form \(\mu x.\psi\)).
The formula \(\form'\) is an \emph{under}-approximation of
\(\form\), in the sense that if \(\form'\) is valid, then so is \(\form\).
We then check whether \(\form'\) is valid by using an existing validity checker
for \nuhflz{}~\cite{DBLP:conf/sas/IwayamaKST20,DBLP:conf/aplas/KatsuraIKT20}.
If \(\form'\) is valid, then we can conclude that \(\form\) is also valid.
Otherwise, we refine the approximation of \(\form\) and repeat the cycle.
As the procedure in Fig.~\ref{fig:hflz} can only conclude the validity of
a given formula, we actually run the procedure for a given formula \(\form\)
and its negation \(\neg\form\) (which can also be represented as a \hflz{} formula,
by taking the dual of each operator) in parallel. If the procedure for \(\neg\form\)
returns ``valid'', then we can conclude that \(\form\) is invalid.
Since \hflz{} validity checking is undecidable in general,
the whole procedure is of course sound but incomplete: for some input,
the procedure may repeat the cycle indefinitely, or the backend \nuhflz{} validity
checker~\cite{DBLP:conf/sas/IwayamaKST20,DBLP:conf/aplas/KatsuraIKT20}
may not terminate.

The main technical issue in the procedure sketched above is how to approximate
\(\mu\)-formulas, which is the focus of the rest of
this section. We first discuss a basic method in Section~\ref{sec:basic},
and then discuss how to improve the precision of the approximation by adding
extra arguments for higher-order predicates in Section~\ref{sec:ho}.
We then further improve the approximation by removing redundant extra arguments
in Section~\ref{sec:opt}.
\begin{figure}[tp]
  \begin{center}
    \includegraphics[scale=0.5]{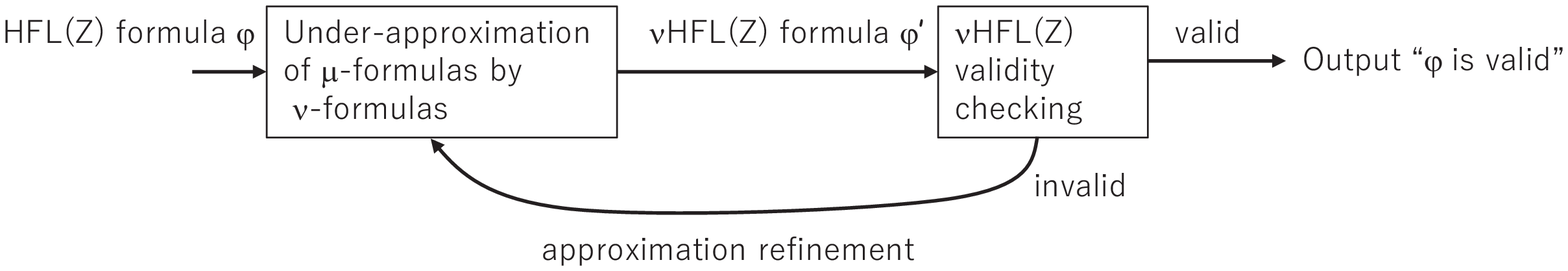}
  \end{center}
  \caption{Overall flow of \hflz{} validity checking.}
  \label{fig:hflz}
\end{figure}

\subsection{Basic Method}
\label{sec:basic}

As mentioned in Section~\ref{sec:intro}, the basic idea is to
under-approximate each \(\mu\)-formula \(\mu x^{\fty}.\form(x)\) (where
\(\form\) has type \(\fty\to\fty\)) by 
\(\form^n (\bot_\fty) \defeq \underbrace{\form(\cdots(\form(}_n \bot_\fty))\cdots)\).
Here, \(\bot_\fty\) is the least formula of type \(\fty\), defined by:
\(\bot_{\Prop} = \FALSE\) and \(\bot_{\sty\to\fty} = \lambda y^\sty.\bot_{\fty}\).
To see that
\(\form^n (\bot_\fty)\) is an underapproximation of \(\mu x^{\fty}.\form(x)\),
recall that \(\sem{\form}\) is a monotonic function.
Thus,
we have:
\begin{align*}
  \sem{\form^n (\bot_\fty)}&= \sem{\form}^n(\Bot_\fty)\\
  &\LEQ_{\fty}  \sem{\form}^n(\LFP_{\fty}(\sem{\form})) \quad \mbox{(by
    \(\Bot_\fty \LEQ_{\fty}\LFP(\sem{\form})\) and the monotonicity of \(\sem{\form}\))}\\
  &= \LFP_{\fty}(\sem{\form}) \quad \mbox{(by the definition of \(\LFP_{\fty}\))}\\
  &= \sem{\mu x^{\fty}.\form(x)}.
\end{align*}

Since the appropriate number \(n\) may depend on the values of free variables
in \(\form\), we actually use the \(\nu\)-formula \(\psi\,n\) to
represent \(\form^n (\bot_\fty)\), where \(\psi\) is:
\[
\nu x^{\INT\to\fty}.\lambda \cv^{\INT}.\lambda \seq{y}^{\seq{\sty}}.\cv>0 \land
\form\,(x\,(\cv-1))\,\seq{y}
\]
Here, \(\seq{y}^{\seq{\sty}}\) denotes a sequence \(y_1^{\sty_1},\ldots,y_\ell^{\sty_\ell}\)
and we assume that \(\fty = \sty_1\to\cdots \to\sty_\ell\to\Prop\).
Note that the original formula
\(\mu x^{\fty}.\form\,x\) is \(\eta\)-equivalent to
\(\mu x^{\fty}.\lambda \seq{y}^{\seq{\sty}}.\form\,x\,\seq{y}\); thus,
the main differences of the formula \(\psi\,n\) from the original formula are:
(i) an extra integer argument \(u\) has been added to \(x\), to count the number of
iterations \(n\), and (ii) the least fixpoint operator \(\mu\) has been replaced by
the greatest fixpoint operator \(\nu\).
We can confirm that the formula \(\psi\,n\) is equivalent to 
\(\form^n (\bot_\fty)\) as follows.
For \(n=0\), we have:
\begin{align*}
&  \psi\,n
  \equiv
(\lambda \cv^{\INT}.\lambda \seq{y}^{\seq{\sty}}.\cv>0 \land
  \form\,(\psi\,(\cv-1))\,\seq{y})\, 0 
  \equiv
\lambda \seq{y}^{\seq{\sty}}.0>0 \land
\form\,(\psi\,(-1))\,\seq{y} 
\equiv
\lambda \seq{y}^{\seq{\sty}}.\FALSE 
\equiv\bot_{\fty}.
\end{align*}
and for \(n>0\),
\[
\begin{array}{rcll}
  \psi\,n
  &\equiv&
(\lambda \cv^{\INT}.\lambda \seq{y}^{\seq{\sty}}.\cv>0 \land
  \form\,(\psi\,(\cv-1))\,\seq{y})\, n \qquad& \mbox{(unfolding \(\nu\))}\\
  &\equiv&
\lambda \seq{y}^{\seq{\sty}}.n>0 \land
  \form\,(\psi\,(n-1))\,\seq{y} & \mbox{($\beta$-equality)}\\
  &\equiv&
\lambda \seq{y}^{\seq{\sty}}.
  \form\,(\psi\,(n-1))\,\seq{y}& \mbox{(by assumption \(n>0\))}\\
  &\equiv&
\lambda \seq{y}^{\seq{\sty}}.
  \form\,(\form^{n-1}(\bot_{\fty}))\,\seq{y}& \mbox{(by induction on \(n\))}\\
  &\equiv&
  \form^{n}(\bot_{\fty})& \mbox{(by $\eta$-equality)}
\end{array}
\]
When \(\form\) contains free integer variables \(w_1,\ldots,w_m\), 
the number \(n\) needed to properly approximate the original formula
may depend on them. Thus, we actually replace the number \(n\) with
an expression \(c|w_1|+\cdots + c|w_m|+d\), where \(c\) and \(d\) are
some non-negative integers.
Due to the monotonicity of \(\psi\,n\) with respect to \(n\), 
we can improve the precision by increasing the values of \(c\) and \(d\).

The method sketched above is a generalization of Kobayashi et al.'s method~\cite{DBLP:conf/sas/0001NIU19}
for
the first-order fragment of \hflz{} (called Mu-Arithmetic) to full \hflz{}
(which was in turn a generalization
 the method of Fedyukovich et al.~\cite{freqterm} for termination analysis).
We give a few examples below.
Example~\ref{ex:partial-app} highlights a subtle issue caused by the generalization
to the higher-order case.
  
\begin{example}
\label{ex:simple-revisited}  
  Recall the formula \(\form_1\defeq\mu x^{\INT\to\Prop}.\lambda y.y=0\lor x(y-1)\)
  in Example~\ref{ex:simple}.
  Suppose we wish to prove the validity of
  \(\forall w.w<0 \lor \form_1\,w\).
  Based on the method sketched above, we approximate \(\form_1\) with
  \(
\psi\,(c|y|+d)
\),
where \(\psi\) is:
\[
\nu x^{\INT\to\INT\to\Prop}.\lambda \cv^{\INT}.\lambda y^{\INT}.\cv>0 \land
(y=0\lor x(\cv-1)(y-1)).
\]
The resulting formula
\(\forall w.w<0 \lor \psi\,(c|w|+d)\,w\)
can automatically be proved valid for \(c=d=1\), by using an existing \nuhflz{} validity checker
like \rethfl{}~\cite{DBLP:conf/aplas/KatsuraIKT20}.

To confirm the validity of
\(\forall w.w<0 \lor \psi\,(|w|+1)\,w\) \emph{manually},
it suffices to observe that
(the semantics of) \(\lambda \cv.\lambda y.y\ge 0\land \cv>y\) is a post-fixpoint of
\(\form_0 \defeq \lambda x^{\INT\to\INT\to\Prop}.\lambda \cv^{\INT}.\lambda y^{\INT}.\cv>0 \land
(y=0\lor x(\cv-1)(y-1))\).
Indeed, we have:
\begin{align*}
  &  \form_0(\lambda \cv.\lambda y.y\ge 0\land \cv>y)\\
  &\equiv \lambda \cv.\lambda y.\cv>0
    \land (y=0\lor (\lambda \cv.\lambda y.y\ge 0\land \cv>y)(\cv-1)(y-1))\\
  &\equiv\lambda \cv.\lambda y.\cv>0
    \land (y=0\lor (y-1\ge 0\land \cv-1>y-1)\\
  &\equiv\lambda \cv.\lambda y.(\cv>0\land y=0)\lor (y\ge 1\land \cv>y)\\
  &\equiv\lambda \cv.\lambda y.y\ge 0\land \cv>y.
\end{align*}
Thus, \(\lambda \cv.\lambda y.y\ge 0\land \cv>y\) is a post-fixpoint
(and actually also a fixpoint)
of \(\form_0\);
hence, we have
\[\sem{\lambda \cv.\lambda y.y\ge 0\land \cv>y} \LEQ \GFP(\form_0)=\sem{\psi}.\]
Therefore, we have:
\begin{align*}
    \sem{\forall w.w<0 \lor \psi\,(|w|+1)\,w} 
 & \sqsupseteq
  \sem{\forall w.w<0 \lor (\lambda \cv.\lambda y.y\ge 0\land \cv>y)\,(|w|+1)\,w}\\
  & =
  \sem{\forall w.w<0 \lor (w\ge 0\land |w|+1>w)} 
  = \Top.
\end{align*}
\end{example}
  \begin{remark}
    The transformation sketched above passes around extra arguments of the form
    \(c|x_1|+\cdots +c|x_k|+d\).
    The use of the absolute value operator is actually problematic for backend solvers for
    \nuhflz{}.
    Thus, in the actual implementation,
    we replace each formula of the form \(\form\,(c|x_1|+\cdots +c|x_k|+d)\)
    with
    \[\lambda \seq{y}.\forall u.(\bigwedge_{s_1,\ldots,s_k\in\set{-1,1}} u\ge cs_1x_1+\cdots+cs_kx_k+d)
    \imp \form\,u\,\seq{y}.\]
    For example, the formula
    \(\lambda y.\psi\,(c|y|+d)\,y\) above is actually replaced by:
   \[ \lambda y.\forall u.u\ge cy+d\land u\ge -cy+d \imp \psi\,u\,y.\]
   Note that due to the monotonicity of \(\psi\,u\,y\) with respect to \(u\),
   the replacement does not change the semantics of formulas. \qed
   \end{remark}

  \begin{example}
    \label{ex:simple-var}
    Suppose that we wish to prove the validity of
  \(\forall w.w<0 \lor \form_1\,(2w)\),
  instead of the formula   \(\forall w.w<0 \lor \form_1\,w\),
  in Example~\ref{ex:simple-revisited}.
  In this case, the approximate formula
  \(\forall w.w<0 \lor \psi\,(c|w|+d)\,(2w)\)
  is \emph{invalid} when \(c=d=1\). In fact, for \(w=1\),
  \begin{align*}
    \psi\,(|w|+1)\,(2w)  & \equiv
    \psi\,2\,2 \equiv \psi\,1\,1\equiv \psi\,0\,0\equiv \FALSE.
  \end{align*}
  In such a case, we proceed to the approximation refinement step in
  Fig.~\ref{fig:hflz} to increase the values of \(c\) and \(d\).
  By increasing the values of \(c\) and \(d\) to \(2\),
  we obtain a better approximation:
  \(\forall w.w<0 \lor \psi\,(2|w|+2)\,(2w)\), which can be proved valid.
  \qed
\end{example}
  
\begin{example}
  Recall the formula \(\form_2\defeq \forall x.\fib\,x\,(\lambda r.\TRUE)\) in Section~\ref{sec:app}
  (which was obtained by encoding the termination problem for Fibonacci function),
  where \(\fib\) is defined by:
  \[
\begin{array}{l}
\fib^{\INT\to(\INT\to\Prop)\to\Prop}\, x\,k =_\mu 
(x<2 \imp k\,x)\land (x\ge 2\imp \fib\,(x-1)\,\lambda y.\fib\,(x-2)\,\lambda z.k(y+z)).
\end{array}
\]
The formula \(\form_2\) can be approximated by
\(\forall x.\fib'\,(c|x|+d)\,x\,(\lambda r.\TRUE)\),
where \(\fib'^{\INT\to\INT\to(\INT\to\Prop)\to\Prop}\) is defined by:
\[
\begin{array}{l}
\fib'\,\cv\,x\,k =_\mu 
\cv>0\,\land 
(x<2 \imp k\,x)\,\land\\\qquad \qquad\qquad \qquad
(x\ge 2\imp \fib'\,(\cv-1)\,(x-1)\,\lambda y.\fib'\,(\cv-1)\,(x-2)\,\lambda z.k(y+z)).
\end{array}
\]
The resulting formula can be proved valid for \(c= d=1\). \qed
\end{example}

The following example involves a partial application of a predicate defined by \(\mu\).
\begin{example}
  \label{ex:partial-app}
  Consider the formula \(\forall x.x\ge 0\imp G\, (F\, x)\, 0\) where \(G\) and \(F\) are defined by:
  \[
  \begin{array}{l}
    G\,f\,y\,=_\nu f\,y \land G\,f\,(y+1)\qquad
    F\,x\,y\,=_\mu x+y\le 0 \lor F\,(x-1)\,y.
  \end{array}
  \]
  Here, to approximate the formula \(F\), the number of unfoldings should be at least
  \(x+y+1\), but the value of \(y\) is not available in the partial application \(F\,x\).
  To remedy the problem, it suffices to \(\eta\)-expand \(F\,x\) and replace the main
  formula with \(\forall x.x\ge 0\imp G (\lambda y.F\, x\,y) 0\).
  We can then apply the approximation as sketched above, and obtain
  \(\forall x.x\ge 0\imp G\,(\lambda y.F\,(c_1|x|+c_2|y|+d)\,x\,y)\,0\), where:
  \[
  \begin{array}{l}
    G\,f\,y\,=_\nu f\,y \land G\,f\,(y+1)\qquad
    F\,u\,x\,y\,=_\nu (u>0)\land (x+y\le 0 \lor F\,(u-1)\,(x-1)\,y).
  \end{array}
  \]
  The resulting formula can be proved valid for \(c_1=c_2=d=1\).
  In this manner, each partial application of a least-fixpoint predicate 
  is \(\eta\)-expanded before the predicate is approximated by a greatest-fixpoint predicate.
  \qed
\end{example}

\begin{remark}
  \label{rem:ack}
\newcommand\Ack{\textit{Ack}}
The method above is not sufficient, for example, for proving the termination of
Ackermann function:
\begin{verbatim}
  let rec ack y z k = if y=0 then k(z+1) else if z=0 then ack (y-1) 1 k
                      else ack y (z-1) (fun x-> ack (y-1) x k).
\end{verbatim}
The termination of \texttt{ack m n (fun r->())} is expressed by:
\(\Ack\;m\,n\,\lambda r.\TRUE\), where
\[
\begin{array}{ll}
  \Ack\,y\,z\,k =_\mu
  & (y=0\imp k(z+1)) 
  \land (y\ne 0\land z=0\imp \Ack\,(y-1)\,1\,k)\\
  & \land (y\ne 0\land z\ne 0\imp \Ack\,y\,(z-1)\,(\lambda x.\Ack\,(y-1)\,x\,k)).
\end{array}
\]
Adding a single parameter \(u\) to count the number of unfoldings
\[
\begin{array}{l}
  \Ack'\;u\,y\,z\,k =_\nu 
  u>0\land \big( 
   (y=0\imp k(z+1)) 
  \ \land (y\ne 0\land z=0\imp \Ack'\,(u-1)\,(y-1)\,1\,k)\\\qquad\qquad\qquad\qquad
  \ \land (y\ne 0\land z\ne 0\imp \Ack'\,(u-1)\,y\,(z-1)\,(\lambda x.\Ack'\,(u-1)\,(y-1)\,x\,k))\big)
\end{array}
\]
does not work, since the depth of the recursive calls of \texttt{ack y z k} is not linear
in \texttt{y} and \texttt{z}.
As suggested by Kobayashi et al.~\cite{DBLP:conf/sas/0001NIU19} for the first-order case,
to deal with the example above, we need to prepare \emph{two} counters
and approximate
\(\Ack\;m\,n\,(\lambda r.\TRUE)\)
by
\(\Ack'\;(c|m|+c|n|+d)\;(c|m|+c|n|+d)\;m\,n\,(\lambda r.\TRUE)\), where:
\[
\begin{array}{l}
  \Ack'\;u_1\;u_2\,y\,z\,k =_\nu \\\quad
  u_1>0\land u_2>0\land \big( 
   (y=0\imp k(z+1)) 
  \ \land (y\ne 0\land z=0\imp \Ack''\,u_1\,u_2\,(y-1)\,1\,k)\\\qquad\qquad\qquad\qquad
  \ \land (y\ne 0\land z\ne 0\imp \Ack''\,u_1\,u_2\,y\,(z-1)\,(\lambda x.\Ack''\,u_1\,u_2\,(y-1)\,x\,k))\big)\\
  \Ack''\;u_1\;u_2\,y\,z\,k =_\nu 
  \Ack'\;u_1\;(u_2-1)\;y\,z\,k   \hfill \mbox{(decrement \(u_2\), or)}\\  \qquad\qquad\qquad\qquad
   \lor   \Ack'\;(u_1-1)\;(c|y|+c|z|+d)\;y\,z\,k\hfill \mbox{(decrement \(u_1\), and reset \(u_2\))}.
\end{array}
\]
In general, given a formula \(X\,\seq{v}\) where \(X\) is defined by:
\( X\,\seq{y} =_\mu \form(X)\),
we can approximate it with
\(\forall u_{k-1},\ldots,
 u_0\ge c( u_j\Sigma |v_i|)+d.X_{\MC}\,u_{k-1}\,\cdots\,u_0\,\seq{v}\),
where:
\begin{align*}
 & X_{\MC}\,u_{k-1}\,\cdots\,u_0\,\seq{y} =_\nu \\ & \quad
  (u_{k-1}\ge 0\land \cdots \land u_0\ge 0)\land\\ & \quad
  \form(\lambda \seq{y}'. 
  \forall u'_{k-2},\ldots,u'_0\ge
  c(\Sigma_{0\le j< k} u_j + \Sigma |y'_i|)+d. 
  X_{\MC}\,(u_{k-1}-1)\,u_{k-2}'\,\cdots\,u'_0\,\seq{y}'\\  &\qquad\quad
  \lor\forall u'_{k-3},\ldots,u'_0\ge
  c(\Sigma_{0\le j< k} u_j + \Sigma |y'_i|)+d.X_{\MC}\,u_{k-1}\,(u_{k-2}-1)\,u'_{k-3}\,\cdots\,u'_0\,\seq{y'}\\&\qquad\quad
  \lor \cdots \lor
  X_{\MC}\,u_{k-1}\,\cdots\,u_1\,(u_0-1)\,\seq{y}')
\end{align*}
Here, the notation \(\forall u_{k-i},\ldots,u_1\ge e.\form\) abbreviates
\(\forall u_{k-i},\ldots,u_1. u_{k-i}\ge e\land \cdots \land u_{1} \ge e
\imp \form\); in particular, \(e\) is the lower-bound for
all the variables \(u_{k-i},\ldots,u_1\).
As we discuss in Section~\ref{sec:disc}, 
the basic method with a single counter \(u\) is analogous
to (but strictly more powerful than)
termination verification using single linear ranking functions,
and the extension with multiple counters is
strictly more powerful than
methods based on
lexicographic linear ranking functions~\cite{DBLP:conf/tacas/CookSZ13,DBLP:conf/cav/BradleyMS05}
and disjunctive well-founded relations based on linear ranking functions.
Since the extension with multiple counters is orthogonal to the extensions discussed below,
we focus on the method using a single counter below.
\qed
\end{remark}

\begin{example}
  Let us consider the formula:
  \[\big(\nu f.\lambda x.(\mu g.\lambda y.(y=0\land f(x+1))\lor (y\ne 0\land g\,(y-1)))x\big)0,\]  whose ``alternation depth'' (as defined for the modal \(\mu\)-calculus)~\cite{DBLP:reference/mc/BradfieldW18} is 2.
  The formula corresponds to the property that the function \(f\) is called
  infinitely often in the following OCaml-like program:
\begin{verbatim}
  let rec f x = 
    let rec g y = if y=0 then f(x+1) else g(y-1) in g x
  in f 0.
\end{verbatim}
By our approximation,  we obtain \(F\,0\), where:
\begin{align*}
  & F\,x =_\nu G\,(c|x|+d)\,x\,x\qquad
   G\,u\,x\,y =_\nu u>0\land ((y=0\land F\,(x+1))\lor (y\ne 0\land G\,(u-1)\,x\,(y-1))
\end{align*}
(the second parameter \(x\) of \(G\) is introduced by lambda lifting), which can be proved valid for \(c=d=1\). \qed
\end{example}

\begin{remark}
  \label{rem:computability}
  From the viewpoint of computability theory, our approach of reducing \hflz{} validity checking
  to \nuhflz{} validity checking has the following fundamental limitation.
  The \hflz{} validity checking problem is \(\Pi^1_1\)-hard and
  \(\Sigma^1_1\)-hard, since the fair termination problem (which is \(\Pi^1_1\)-complete~\cite{DBLP:journals/jacm/Harel86})
  and its dual can be reduced to \hflz{} validity checking; see also \cite{DBLP:conf/lics/Tsukada20}.
  In contrast, (the validity checking problem for) the \(\nuhflz{}\)- formula obtained by our reduction
  belongs to \(\Pi^0_1\)
  (in other words,
  the set of valid \nuhflz{} formulas is co-recursively enumerable),
  since
  the validity of a \nuhflz{} formula can be \emph{disproved} by unfolding greatest fixpoint formulas
  a finite number of times and showing the resulting formula is invalid.
  This implies that there is no complete, effective procedure to reduce \hflz{} validity checking to
  \nuhflz{} validity checking.
  Despite this theoretical limitation, however, as reported in Section~\ref{sec:exp}
  (where the benchmark set includes instances of the fair termination problem),
  our method can solve many instances of the \hflz{} validity checking problem that have been obtained
  from actual program verification problems.
  This kind of phenomenon has often been observed in the context of automated program verification: 
  the fair termination verification problem (which is \(\Pi^1_1\)-complete) has been solved
  by a reduction to the safety property
  verification problem (which is \(\Pi^0_1\)-complete, hence much easier in theory) in \cite{Cook07POPL,MTSUK16POPL}. See also the discussion in Section~\ref{sec:disc}.
  \qed
\end{remark}

\begin{remark}
  \label{rem:complete-fragment}
  A reader may expect a syntactic characterization of the class of \hflz{} formulas
  for which our method is complete with respect to the hypothetical completeness
  of the backend solver for \nuhflz{} validity checking.
  Our method is indeed complete for the \(\mu\)-only fragment of \hflz{} (i.e.,
  the fragment without the greatest fixpoint operators), in that given a
  closed valid \(\mu\)-only formula \(\form\), our procedure eventually terminates
  and concludes that the formula is valid (see \cite{ESOP2018full}, Lemma~6).
  It seems difficult to give
  a clear syntactic characterization of a larger, more useful class of \hflz{} formulas
  for which our method is complete. As discussed in Example~\ref{ex:quantifiers},
  universal and existential quantifiers \(\forall x.p(x)\) and
  \(\exists x.p(x)\) can be expressed by
  \(\Forall\;p\) and \(\Exists\;p\) respectively, where:
  \begin{align*}
&    \Forall\;p =_\nu p\,0 \land \Forall\;(\lambda x.p(x-1))\land  \Forall\;(\lambda x.p(x+1))\\
&    \Exists\;p =_\mu p\,0 \lor \Exists\;(\lambda x.p(x-1))\lor \Exists\;(\lambda x.p(x+1))\\
  \end{align*}
  By passing around the predicates \(\Forall\) and \(\Exists\) above
  through higher-order predicates, 
   one can express arbitrary nesting of quantifiers to realize
   any \(\Sigma^0_n\) and \(\Pi^0_n\) formulas (for any \(n\))
   without any \emph{syntactic}
   nesting of greatest and least fixpoint operators. In contrast,
   as mentioned in Remark~\ref{rem:computability},
   the \(\nuhflz{}\)- formula obtained by our reduction
   belongs to \(\Pi^0_1\).
   Instead of trying to give a syntactic characterization,
   in Section~\ref{sec:disc}, we compare the class of formulas for
   which our method is complete with those for which previous representative methods
   are complete, and show that the former is strictly larger than the latter.
\qed
\end{remark}
\subsection{Adding Extra Arguments for Higher-Order Values}
\label{sec:ho}
\newcommand\All{\mathit{All}}
\newcommand\Succ{\mathit{Succ}}
\newcommand\Pred{\mathit{Pred}}
To deal with higher-order predicates, we need to extend the basic method
to take function arguments into account.
We explain the method and our solution through an example.
 Let us consider the formula \(\All\;(\lambda k.k\,0)\), where \(\All\) and \(F\) are defined by:
  \[
  \begin{array}{l}
    \All 
    \; x^{(\INT\to\Prop)\to\Prop} =_\nu
    F\,x \land \All\;(\Succ\, x)\qquad
    F\;x^{(\INT\to\Prop)\to\Prop} =_\mu
    x(\lambda y.y=0) \lor F\;(\Pred\,x)\\
    \Succ\;x^{(\INT\to\Prop)\to\Prop}\;k =_\nu x(\lambda y.k(y+1))\qquad
    \Pred\;x^{(\INT\to\Prop)\to\Prop}\;k =_\nu x(\lambda y.k(y-1)).
  \end{array}
  \]
  This is a higher-order variant of \(\forall y.y\ge 0\imp 
  (\mu f^{\INT\to\Prop}.\lambda y.y=0\lor f(y-1))y\) considered in Example~\ref{ex:simple},
  where an integer \(y\) has been replaced by a higher-order-predicate
  \(\lambda k.k\,y\) of type \((\INT\to\Prop)\to\Prop\).

  Since \(F\) is defined by \(\mu\), we remove it by approximating it with \(\nu\).
  The basic translation in Section~\ref{sec:basic} would yield:
  \[
  \begin{array}{l}
    \All' 
    \; x^{(\INT\to\Prop)\to\Prop} =_\nu
    F'\,d\,x \land \All'\;(\Succ\, x)\\
    F'\;u\;x^{(\INT\to\Prop)\to\Prop} =_\mu
    u>0\land (x(\lambda y.y=0) \lor F'\;(u-1)\;(\Pred\,x)).\\
  \end{array}
  \]
   (We have omitted the definitions of \(\Succ\) and \(\Pred\) as they are unchanged.)
  Here, the argument \(u\) of \(F'\)
  represents the number of unfoldings for the original predicate \(F\),
  and the bound \(d\) for \(u\) is a \emph{constant}.
  This is because there is no integer variable in the scope of the body of \(\All\).
  Since the value of \(u\) should actually be greater than the value represented by \(x\),
  the formula \(\All'\,(\lambda k.k\,0)\)
  is invalid; thus, we fail to prove the validity of
   the original formula \(\All\,(\lambda k.k\,0)\).

   To remedy the problem above, we add to \(\All'\) an extra integer argument that
   represents information about \(x\). We thus refine the approximation of
   \(\All\;(\lambda k.k\,0)\) to
   \(\All''\;(d', \lambda k.k\,0)\), where \(\All'\) is defined by:
  \[
  \begin{array}{l}
    \All''\; (v_x^{\INT},  x^{(\INT\to\Prop)\to\Prop}) =_\nu
    F'\,(cv_x+d)\,x \land \All''\;(c'v_x+d', \Succ\, x)
  \end{array}
  \]
  For technical convenience, we have extended the syntax of formulas with pairs
  (which can be removed by the standard currying transformation). The new argument \(v_x\)
  of \(\All''\) carries information about \(x\),
  which is updated to \(c'v_x+d'\) upon a recursive call of \(\All'\).
  The predicate \(F'\) remains the same, and \(c',d'\) are some positive integer constants.
  The formula \(\All''\;(d', \lambda k.k\,0)\) can now be proved valid
  for \(c=d=c'=d'=1\).

  The idea of adding extra arguments above has been inspired by
  Unno et al.'s method of adding extra arguments for relatively complete
  refinement type inference~\cite{UnnoTK13}. Unlike in the case of
  Unno et al.'s method~\cite{UnnoTK13}, however, our method above satisfies
  the monotonicity property on extra arguments. Because the extra arguments are
  used only for computing the lower-bound of the number of unfoldings of \(\mu\)-formulas,
  the precision of the approximation monotonically increases with respect to
  the values of coefficients \(c',d'\) (see Theorem~\ref{th:monotonicity} given later). Thus, like the values of \(c,d\),
  we just need to monotonically increase the values of \(c',d'\) to refine the precision
  of approximation. In contrast, in Unno et al.'s method, 
  a rather complex procedure is required
  to infer appropriate extra arguments in a counterexample-guided manner.

  \subsection{Optimization Transformation}
\label{sec:opt}
\label{sec:methodOptimization}
  A remaining issue is how to decide where 
  we should insert extra integer arguments.
  A naive way would be to add an extra integer argument to \emph{every}
  function argument, but then too many arguments would be introduced,
  causing a burden for the backend validity checker for \nuhflz{}.
  For example, for the above example, the naive approach would yield:
  \[
  \begin{array}{l}
    \All 
    \;(v_x, x) =_\nu 
   (F\,(cv_x+d)\,(c'v_x+d',x)) \land (\All\;(c'v_x+d', \Succ\, (c'v_x+d',x)))\\
    F\;u\;(v_x,x) =_\nu 
    (u>0)\land (x(c'v_x+d', \lambda y.y=0) \lor F\;(u-1)\;(c'v_x+d', \Pred\, (c'v_x+d',x)))\\
    \Succ\;(v_x,x)\;(v_k, k) =_\nu x\,(c'v_x+c'v_k+d',\lambda y.k(y+1))\\
    \Pred\;(v_x,x)\;(v_k, k) =_\nu x\,(c'v_x+c'v_k+d', \lambda y.k(y-1)).
  \end{array}
  \]
  The extra arguments \(v_x, v_k\) for \(F,\Succ\), and \(\Pred\) would however be redundant,
  because they do not flow to the argument \(u\) of \(F\).

  We introduce below types for representing
  where extra arguments should be inserted, and
  formalize the translation from \hflz{} formulas
  to \nuhflz{} formulas as a type-based transformation.

  We first extend types with \emph{tags}.
  The sets of \emph{tagged argument types} and
  \emph{tagged (predicate) types}, ranged over by \(\alpha\) and \(\efty\) respectively,
  are given by:
  \[
  \begin{array}{l}
  \tagty \mbox{ (tagged argument types) } ::= \INT \mid (\efty, \tagv)\\
  \efty \mbox{ (tagged predicate types)} ::= \Prop \mid \tagty\to\efty\qquad
  \tagv \mbox{ (tags) } ::= \taguse\mid \tagnotuse.\\
  \end{array}
  \]
  A tagged type \((\efty,\taguse)\)
  represents the type of a predicate argument for which an extra integer argument
  is required for an approximation of some \(\mu\)-formula.
  For example, recall the predicate \(\All\) in Section~\ref{sec:ho}:
    \[
  \begin{array}{l}
    \All 
    \; x^{(\INT\to\Prop)\to\Prop} =_\nu
    F\,x \land \All\;(\Succ\, x)\qquad
    F\;x^{(\INT\to\Prop)\to\Prop} =_\mu \cdots.
  \end{array}
  \]
The argument \(x\) of \(\All\) should have type
  \(((\INT\to\Prop,\tagnotuse)\to\Prop, \taguse)\),
  because information about \(x\) is required to estimate how often \(F\) should
  be unfolded, whereas information about the argument of \(x\) is not.

  We formalize the (optimized) transformation from
  \hflz{} formulas to \nuhflz{} formulas as
  a type-based transformation relation
  \(\judgesimp{\envv}{\rho}{\form}{\efty}{\form'}\)
  where: (i) \(\envv\),
  called an tagged type environment,
  is a finite map from variables to tagged argument types, 
  (ii) \(\form\) and \(\form'\) are the input and output of the transformation.
  We also use an auxiliary transformation relation
  \(\judgesimp{\envv}{\rho}{\form}{\tagty}{\form'}\) for the translation of
  an argument.

  We need to introduce some notations to define the transformation relation.
  For a type environment \(\envv\) and
  a set \(V\) of variables, we write \(\restrict{\envv}{V}\) for the restriction of
  \(\envv\) to \(V\), i.e., \(\set{x\COL\tagty\in\Delta\mid x\in V}\).
  We write \(\gettags(\Delta)\) for the set of outermost tags in \(\Delta\), defined by:
  \[
    \gettags(\Delta) = \set{\tagv\mid x\COL (\efty,\tagv)\in\Delta}.
    \]
    For tagged argument types \(\tagty\) and \(\tagty'\), we write
    \(\tagty\raweq\tagty'\) when they are identical except their outermost tags,
    i.e., if either \(\tagty=\tagty'=\INT\), or \(\tagty=(\efty,\tagv)\) and \(\tagty=(\efty,\tagv')\)
    for some \(\efty, \tagv\), and \(\tagv'\).
    For tagged types \(\tagty\) and \(\efty\), we write
    \(\ST(\tagty)\) and \(\ST(\efty)\) for the simple types obtained by removing
    the tags. We also write \(\ST(\Delta)\) for the simple type environment
    defined by \(\ST(x_1\COL\tagty_1,\ldots,x_k\COL\tagty_k)=
    x_1\COL\ST(\tagty_1),\ldots,x_k\COL\ST(\tagty_k)\).
  We sometimes write \(\letexp{x}{\form_1}{\form_2}\) for \((\lambda x.\form_2)\form_1\).
  
  The transformation relations are defined by the rules in \figref{fig:trans}.
  The first two rules are for the translation of arguments.
  As specified in \rn{Tr-TagT}, if the tag is \(\taguse\), then we add an extra
  argument \(\exarg(\restrict{\envv}{\FV(\form)})\),
  where \(\exarg\) is defined by:\footnote{For the sake of simplicity, we do not distinguish between the coefficients
    \(c,d\) for estimating the number of unfoldings of \(\mu\)-formulas,
    and \(c',d'\) for computing extra arguments. The actual implementation reported in
  Section~\ref{sec:exp} distinguishes between \(c,d\) and \(c',d'\).}

    \[\exarg(\envv)=d+c(\Sigma_{x\COL\INT\in \envv}  |x| +
    \Sigma_{x\COL(\efty,\taguse)\in\envv} \exv{x}).\]
  It is a linear combination of (the absolute value of) original integer variables \(x\)
  and auxiliary integer variables \(\exv{x}\).
  We fix the name of the auxiliary integer variable associated with \(x\)
  to \(\exv{x}\), and assume that it does not clash with the names of other variables.
  We ensure that \(\exv{x}\) always takes a non-negative integer value, so that we need
   not take the absolute of \(\exv{x}\) in \(\exarg(\envv)\).
   The condition \(\gettags(\restrict{\Delta}{\FV(\form)}) \subseteq \set{\taguse}\)
   requires that all the free variables of \(\form\) are either integer variables or
   tagged with \(\taguse\), so that the extra argument \(\exarg(\envv)\) can be properly calculated.

   The rules from \rn{Tr-Var} to \rn{Tr-AppInt} just transform formulas in a compositional manner,
   with integer expressions unchanged.
   In \rn{Tr-Abs}, \(p_{x,\tagty}\) denotes the pattern defined by:
\[
  p_{x,\tagty} =
  \left\{\begin{array}{ll}
         (\exv{x},x)& \mbox{if $\tagty$ is of the form $(\efty,\taguse)$}\\
         x& \mbox{otherwise}\end{array}\right.\]
  For example, we have
  \[
  \judgesimp{\emptyset}{}{\lambda x.x\,1}{(\INT\to\Prop,\taguse)\to\Prop}{\lambda (v_x,x).x\,1}.
  \]
  Here, \(\emptyset\) denotes the empty type environment.
  In the rule \rn{Tr-Nu}, the auxiliary integer variable \(v_x\) associated with 
  \(x\) is prepared when \(\tagv=\taguse\). It is necessary in a case where \(x\) is passed
  to another function. For example,
  \(\nu x.f\,x\) (where \(f:(\INT\to\Prop,\taguse)\to\INT\to\Prop\)) is translated to
  \[\nu x.\letexp{v_x}{c v_f+d}{f\,(v_x,x)}.\]

  The key rule is \rn{Tr-Mu}.
  To see how \(\mu x.\form\) should be transformed,
  let us consider 
  \((\mu x.\form)\form_1\,\cdots\,\form_n\),
  where \(\mu x.\form\) is applied to actual
  arguments \(\form_1\,\cdots\,\form_n\). We estimate the number of unfoldings of
  \(\mu x.\form\) by gathering information from \((\mu x.\form)\form_1\,\cdots\,\form_n\).
  Thus, all the predicate variables in \(\mu x.\form\) and arguments should
  be tagged with \(\taguse\), as required by the third premise
  \(\gettags(\restrict{(\envv,y_1\COL\tagty'_1,\ldots,y_n\COL\tagty'_n)}{\FV(\form\,y_1\,\cdots\,y_n)})
  \subseteq \set{\taguse}\).
  To transform the subformula \(\form\), however, we need not require that the arguments of
  \(x\) should be tagged with \(\taguse\), when they are not passed to another
  least fixpoint formula in \(\form\).
  Thus, the types of arguments of \(\mu x.\form\) and those
  of \(x\) (inside \(\form\)) may be different
  in their outermost tags, as indicated in the second premise \(\tagty_i\raweq\tagty_i'\).
  For example, it is allowed that
  \(\mu x.\form\) has type \((\INT\to\Prop,\taguse)\to\Prop\) but
  \(x\) has type \((\INT\to\Prop,\tagnotuse)\to\Prop\) in \(\form\).
  The fourth premise (\(\form'''=\cdots\)) is analogous to the second premise of
  the rule \rn{Tr-Nu} explained above.
  The last premise (\(\form'=\cdots\)) takes care of the actual approximation of
  the \(\mu\)-formula by \(\nu\)-formula.
  The number of unfoldings of the \(\mu\)-formulas is represented by
\(\exarg(\restrict{\envv}{\FV(\mu x.\form)})\), and it is passed through
the extra parameter \(u\).

Finally, \rn{Tr-Sub} is the rule for subsumption, which allows, for example,
to convert a formula of type \((\efty,\tagnotuse)\to\Prop\) to that of type \((\efty,\taguse)\to\Prop\)
(but not in the opposite direction). The subtyping relation
\(\efty'\subtype\efty\leadsto\form\) is defined in \figref{fig:subtyping}.
As usual, the subtyping relation on predicate types is contravariant in the argument type,
and covariant in the return type. Since we need the corresponding coercion function
to achieve the transformation,
we have defined the subtyping relation as a ternary relation \(\efty'\subtype\efty\leadsto\form\),
where \(\form\) is a function to convert a formula of type \(\efty'\) to that of \(\efty\).

  

  %
\begin{figure}
    \infrule[Tr-TagT]{
      \judgesimp{\envv}{\rho}{\form}{\efty}{\form'}
      \andalso
      \gettags(\restrict{\Delta}{\FV(\form)}) \subseteq \set{\taguse}
    }{
      \judgesimp{\envv}{\rho}{\form}{(\efty,\taguse)}
                {(\exarg(\restrict{\envv}{\FV(\form)}),\form')}
    }
    \infrule[Tr-TagF]{
      \judgesimp{\envv}{\rho}{\form}{\efty}{\form'} 
    }{
      \judgesimp{\envv}{\rho}{\form}{(\efty,\tagnotuse)}{\form'}
    }
    \infax[Tr-Var]{
      \judgesimp{\fix{\envv}, \envpair{x}{\efty}{\tagv}}{\rho}{x}{\efty}{x}
    }
    \infrule[Tr-Or]{
      \judgesimp{\envv}{\rho}{\form_1}{\Prop}{\form_1'} \andalso
      \judgesimp{\envv}{\rho}{\form_2}{\Prop}{\form_2'}
    }{
      \judgesimp{\envv}{\rho}{\form_1\lor \form_2}{\Prop}{\form_1'\lor \form_2'}
    }
    %
    \infrule[Tr-And]{
      \judgesimp{\envv}{\rho}{\form_1}{\Prop}{\form_1'} \andalso
      \judgesimp{\envv}{\rho}{\form_2}{\Prop}{\form_2'}
    }{
      \judgesimp{\envv}{\rho}{\form_1\land \form_2}{\Prop}{\form_1'\land \form_2'}
    }
    %
    \infrule[Tr-Ge]{\ST(\envv)\pST e_1\ge e_2:\Prop
    }{
      \judgesimp{\envv}{\rho}{e_1\ge e_2}{\Prop}{e_1\ge e_2}
    }
    %
    \infrule[Tr-Abs]{
      \judgesimp{\envv,\,x\COL \tagty}{\rho}{\form}{\efty}{\form'}
    }{
      \judgesimp{\envv}{\rho}{\abs{x}{}{\form}}{\tagty\to\efty}{\abs{p_{x,\tagty}}{}{\form'}}
    }
    %
    \infrule[Tr-App]{
      \judgesimp{\envv}{\rho}{\form_1}{\tagty\to\efty}{\form_1'} \andalso
      \judgesimp{\envv}{\rho}{\form_2}{\tagty}{\form_2'}
    }{
      \judgesimp{\envv}{\rho}{\form_1\ \form_2}{\efty}{\form_1'\ \form_2'}
    }
    %
    \infrule[Tr-AppInt]{
      \judgesimp{\envv}{\rho}{\form}{\intfun{\efty}}{\form'}
      \andalso \ST(\envv)\pST e:\INT
    }{
      \judgesimp{\envv}{\rho}{\form\ e}{\efty}{\form'\ e}
    }
    \infrule[Tr-Nu]{
      \judgesimp{\envv,\envpair{x}{\efty}{\tagv}}{\rho}{\form}{\efty}{\form''}
      \\
      \form'=\left\{\begin{array}{ll}
      \letexp{\exv{x}}{\exarg(\restrict{\envv}{\FV(\nu x.\form)})}\form''
      &\mbox{if $\tagv=\taguse$}\\
      \form'' & \mbox{if $\tagv=\tagnotuse$}
      \end{array}\right.
    }{
      \judgesimp{\envv}{\rho}{\nu x.\form}{\efty}{\nu x.\form'}
    }
    %
    \infrule[Tr-Mu]{
      \judgesimp{\envv,\envpair{x}{\tagty_1\to\ldots\to\tagty_n\to\Prop}{\tagv}}{\rho}{\form}{\tagty_1\to\ldots\to\tagty_n\to\Prop}{\form''}\\
      \tagty_i\raweq\tagty_i'\mbox{ for each $i\in\set{1,\ldots,n}$}\\
      \gettags(\restrict{(\envv,y_1\COL\tagty'_1,\ldots,y_n\COL\tagty'_n)}{\FV(\form\,y_1\,\cdots\,y_n)})\subseteq\{\taguse\}\\
      \form'''=\left\{\begin{array}{ll}
      \letexp{\exv{x}}{\exarg(\restrict{\envv}{\FV(\mu x.\form)})}\form''
      &\mbox{if $\tagv=\taguse$}\\
      \form'' & \mbox{if $\tagv=\tagnotuse$}
      \end{array}\right.\\
      \form'=
      \big(\nu x.\lambda u.\lambda z_1\cdots z_n.u>0\land\hfill\\
      \qquad ([x(u-1)/x]\form''')\,z_1\,\cdots\,z_n\big)\,
      \exarg(\restrict{\envv}{\FV((\mu x.\form)y_1\,\cdots\,y_n)})
    }
            {
      \judgesimp{\envv}{\rho}{\mu x.\form}{
        \tagty_1'\to\ldots\to\tagty_n'\to\Prop}{\lambda p_{y_1,\tagty_1'}\cdots p_{y_n,\tagty_n'}.\form'\, p_{y_1,\tagty_1}\cdots\, p_{y_n,\tagty_n}}
    }
    %
    
    \infrule[Tr-Sub]{
      \judgesimp{\envv}{\rho}{\form}{\efty'}{\form'} \andalso
      \efty'\subtype\efty\leadsto \form_1
    }{
      \judgesimp{\envv}{\rho}{\form}{\efty}{\form_1(\form')}
    }
    \caption{Type-based Transformation Rules.}
    \label{fig:trans}
  \end{figure}
\begin{figure}
    \infax[S-Int]{
      \INT\subtype \INT \leadsto \lambda x.x
    }
    %
    \infax[S-Prop]{
      \Prop\subtype \Prop \leadsto \lambda x.x
    }
    %
    \infrule[S-Fun]{
      \tagty\subtype \tagty' \leadsto \form_1\andalso
      \efty'\subtype \efty \leadsto\form_2
    }{
      \tagty'\to\efty'\subtype \tagty\to\efty
      \leadsto \lambda f.\lambda \fix{p_{x,\tagty}}.\form_2(f(\form_1(\fix{p_{x,\tagty}})))
    }
    %
    \infrule[S-TaggedTT]{
      \efty'\subtype\efty\leadsto \form\\
    }{
      (\efty',T) \subtype (\efty,T)
      \leadsto \lambda (v_x,x).(v_x,\form\,x)
    }
    \infrule[S-TaggedTF]{
      \efty'\subtype\efty\leadsto \form\\
    }{
      (\efty',T) \subtype (\efty,F)
      \leadsto \lambda (v_x,x).\form\,x
    }
    \infrule[S-TaggedFF]{
      \efty'\subtype\efty\leadsto \form\\
    }{
      (\efty',F) \subtype (\efty,F)
      \leadsto \form
    }
    \caption{Subtyping Rules.}
    \label{fig:subtyping}
  \end{figure}
  
  \begin{example}
    \label{ex:tr}
    Recall the example of the formula \(\All\;(\lambda k.k\,0)\) in Section~\ref{sec:ho}.
    In the standard (non-equational) notation, the formula is expressed by
    \(\form_{\All}\;(\lambda k.k\,0)\) where:
    \[
    \begin{array}{l}
      \form_{\All}\defeq \nu \All.\lambda x.\form_F\,x \land \All\;(\form_{\Succ}\, x)\qquad
      \form_F \defeq \mu F.\lambda x.x(\lambda y.y=0) \lor F\;(\form_{\Pred}\,x)\\
      \form_{\Succ} \defeq \lambda x.\lambda k.x(\lambda y.k(y+1))\qquad
      \form_{\Pred} \defeq \lambda x.\lambda k.x(\lambda y.k(y-1)).
    \end{array}
    \]
    Let \(\efty_{\Succ}\) be
    \[\etyfun{\etyfun{\intfun{\Prop}}{\tagnotuse}{\Prop}}{\tagnotuse}{\etyfun{\intfun{\Prop}}{\tagnotuse}{\Prop}}.\]
    Then we have:
    \[
    \begin{array}{l}
      \judgesimp{\emptyset}{}{\form_{\Succ}}{\efty_{\Succ}}{\form_{\Succ}}\qquad
      \judgesimp{\emptyset}{}{\form_{\Pred}}{\efty_{\Succ}}{\form_{\Pred}}.
      \end{array}
    \]
    Let \(\efty_{F}\) and \(\efty'_{F}\) be defined by:
    \[
    \begin{array}{l}
     \efty_F\defeq \etyfun{\etyfun{\intfun{\Prop}}{\tagnotuse}{\Prop}}{\taguse}{\Prop}\qquad
     \efty'_F\defeq \etyfun{\etyfun{\intfun{\Prop}}{\tagnotuse}{\Prop}}{\tagnotuse}{\Prop}.
    \end{array}
    \]
    The body of \(\form_F\) is transformed to itself under \(F\COL(\efty'_F,\tagnotuse)\):
    \[
    F\COL(\efty'_F,\tagnotuse)\p
    \lambda x.x(\lambda y.y=0) \lor F\;(\form_{\Pred}\,x):\efty'_F\leadsto
    \lambda x.x(\lambda y.y=0) \lor F\;(\form_{\Pred}\,x).
    \]
    By using \rn{Tr-Mu}, we obtain
    \(
    \judgesimp{\emptyset}{}{\form_F}{\efty_F}{\form'_F}\),
    where
    \[\form'_F \defeq
    \lambda (\exv{x},x).
    (\nu F.\lambda u.\lambda x.u>0
    \land (\lambda x.x(\lambda y.y=0) \lor F\;(u-1)\;(\form_{\Pred}\,x))\;x)\,
    (c\exv{x}+d)\,x.\]
    Let \(\efty_{\All}\) be:
    \(\etyfun{\etyfun{\intfun{\Prop}}{\tagnotuse}{\Prop}}{\taguse}{\Prop}\).
    Then we have
    \(
    \judgesimp{\emptyset}{}{\form_{\All}}
              {\efty_{\All}}{\form'_{\All}}
     \)
     where
     \[
     \form'_{\All} \defeq \nu\All.\lambda (v_x,x).\form'_F\,(cv_x+d,x)\land\All\,(cv_x+d, \form_{\Succ}\,x).
     \]
     As a result, the whole formula \(\form_{\All}\;(\lambda k.k\,0)\) is translated to
     \( \form'_{\All}\;(d, \lambda k.k\,0)\).
     By rewriting the resulting formula in the equational form,
     we get \(\All\;(d,\lambda k.k\,0)\), where:
     \[
     \begin{array}{l}
       \All\,(v_x,x) =_\nu F'\,(cv_x+d,x)\land \All\,(cv_x+d,\Succ\,x)\qquad
       F'\,(v_x,x) =_\nu  F\,(c\exv{x}+d)\,x\\
       F\,u\,x=_\nu u>0
       \land (x(\lambda y.y=0) \lor F\;(u-1)\;(\Pred\,x))\qquad
       \cdots.
      \end{array}
     \]
     By inlining \(F'\), we can further simplify the equations to:
     \[
     \begin{array}{l}
       \All\,(v_x,x) =_\nu F\,(c^2\exv{x}+cd+d)\,x\land \All\,(cv_x+d,\Succ\,x)\\
       F\,u\,x=_\nu  u>0
       \land (x(\lambda y.y=0) \lor F\;(u-1)\;(\Pred\,x))\\
       \Succ\,x\,k=_\nu x(\lambda y.k(y+1))\qquad
       \Pred\,x\,k=_\nu x(\lambda y.k(y-1)).
     \end{array}
     \]
     \qed
     \end{example}

\iffull  
\subsection{Correctness of the Transformation}

  We discuss correctness of the transformation defined in the previous subsection.
  
  We first show that the output of the transformation is a well-typed formula
  (which implies, in particular, extra variables are appropriately passed around).
  Since we have extended the syntax of the target language with pairs,
  we extend simple types by:
  \[
\begin{array}{l}
\sty \mbox{ (extended simple types)} ::= \INT \mid \fty \mid \INT\times \fty\\
\fty \mbox{ (extended predicate types)} ::= \Prop \mid \sty \to \fty.
\end{array}
\]
and
 extend the typing rules in \figref{fig:st} with the following rules.
\infrule[T-Pair]{\stenv\pST e:\INT\andalso \stenv\pSTex \form:\fty}
      {\stenv\pSTex (e,\form):\INT\times\fty}
\infrule[T-PAbs]{\stenv, v_x:\INT, x:\fty_1\pSTex \form:\fty_2}
        {\stenv\pSTex \lambda (v_x,x).\form: \INT\times\fty_1\to\fty_2}

For tagged types \(\tagty\) and \(\efty\), the corresponding simple types
\(\trT{\tagty}\) and \(\trT{\efty}\) are defined by:
\[
\begin{array}{l}
\trT{(\efty,\taguse)} = \INT\times \trT{\efty}\\
\trT{(\efty,\tagnotuse)} = \trT{\efty}\\
\trT{\INT} = \INT\\
\trT{(\tagty\to\efty)}=\trT{\tagty}\to\trT{\efty}.
\end{array}
\]
We extend the operation to type environments by:
\[\trT{(x_1\COL\tagty_1,\ldots,x_k\COL\tagty_k)} =
p_{x_1,\tagty_1}\COL\trT{\tagty_1},\ldots,p_{x_k,\tagty_k}\COL\trT{\tagty_k}.\]
Here, \((v_x,x)\COL\INT\times\fty\) is considered a shorthand for \(v_x\COL\INT,x\COL\fty\).
For example, \[
\trT{(x\COL (\Prop,\tagnotuse), y\COL(\INT\to\Prop,\taguse))}
= x\COL\Prop, v_y\COL\INT, y\COL\INT\to\Prop.\]

The following lemma states that the output of the transformation is a well-typed
formula.
\begin{lemma}
  If \(\judgesimp{\envv}{}{\form}{\fty}{\form'}\),
  then \(\trT{\envv}\pST \form':\trT{\fty}\).
  If \(\judgesimp{\envv}{}{\form}{\tagty}{\form'}\),
  then \(\trT{\envv}\pST \form':\trT{\tagty}\).
  In particular,
  \(\judgesimp{\emptyset}{}{\form}{\Prop}{\form'}\)
  implies \(\emptyset\pST \form':\Prop\).
\end{lemma}
\begin{proof}
  This follows by straightforward induction on the derivations of
  \(\judgesimp{\envv}{}{\form}{\fty}{\form'}\) and
  \(\judgesimp{\envv}{}{\form}{\tagty}{\form'}\).
\end{proof}

The following theorem states that our transformation provides a sound underapproximation
of \hflz{} formulas.
\begin{theorem}[soundness]
  \label{th:soundness}
  Suppose \(\judgesimp{\emptyset}{}{\form}{\Prop}{\form'}\).
  If \(\form'\) is valid, then so is \(\form\).
\end{theorem}

To prove Theorem~\ref{th:soundness} above,
we extend the semantics of \hflz{} formulas defined in Section \ref{sec:pre} with pairs introduced in Section~\ref{sec:ho}.
\[
\begin{array}{l}
  \semd{\INT\times\fty} = \set{(n,w)\mid n\in \Z,w\in\semd{\fty}}\\
  \LEQ_{\INT\times\fty} = \set{((n,w),(n,w'))\in \semd{\INT\times\fty}\times
    \semd{\INT\times\fty}
    \mid 
    w\LEQ_{\fty}w'}\\
  \sem{\Gamma\pST (e,\form):\INT\times\fty}(\rho)=
  (\sem{\Gamma\pST e:\INT}\rho,  \sem{\Gamma\pST \form:\fty}\rho)\\
  \sem{\Gamma\pST \lambda (v_x,x^{\fty_1}).\form: \INT\times\fty_1\to\fty}{\rho}
  = \\\qquad
  \lambda (n,w)\in \Z\times \semd{\fty_1}.
  \sem{\Gamma,v_x\COL\INT, x\COL\fty_1\pST\form:\fty}(\rho\set{v_x\mapsto n, x\mapsto w})\\
\end{array}  
\]
We define
the approximation relation
\(\Simge_{\efty}\subseteq \semd{\ST(\efty)}\times \semd{\trT{\efty}}\) by:
\[
\begin{array}{l}
  \Simge_{\INT} = \set{(n,n)\mid n\in \Z}\\
  \Simge_{\Prop} = \set{(\Top,\Bot),(\Top,\Top),(\Bot,\Bot)}\\
  \Simge_{(\efty,\tagnotuse)} = \Simge_{\efty}\\
  \Simge_{(\efty,\taguse)} = \set{(w, (n,w'))\mid n\in \Z, w\Simge_{\efty}w'}\\
  \Simge_{\tagty\to\efty} =
   \set{(f,f')\mid \forall w,w'.w\Simge_{\tagty}w'\imp f\,w\Simge_{\efty}f'\,w'}.
\end{array}  
\]
For \(\envv\), we define \(\Simge_{\envv}\subseteq \semd{\ST(\envv)}\times \semd{\trT{\envv}}\) by:
\[\rho \Simge_{\envv}\rho'
\IFF \forall x\in\dom(\envv).\rho(x)\Simge_{\envv(x)}\rho'(p_{x,\envv(x)}).\]
\begin{lemma}
  \label{lem:soundness-of-trans}
  If \(\judgesimp{\envv}{}{\form}{\efty}{\form'}\)
  and \(\rho \Simge_{\envv} \rho'\), then
  \[\sem{\ST(\envv)\pST \form:\ST(\efty)}\rho
     \Simge_{\efty} \sem{\trT{\envv}\pST \form':\trT{\efty}}\rho'.\]
\end{lemma}
\begin{proof}
  This follows by induction on the derivation of
  \(\judgesimp{\envv}{}{\form}{\efty}{\form'}\), with case analysis on the last
  rule. Since the other cases are trivial, we discuss only the case for \rn{Tr-Mu}.
  Suppose that the last rule used for deriving
  \(\judgesimp{\envv}{}{\form}{\efty}{\form'}\) 
  is \rn{Tr-Mu}.
  Then, we have:
  \[
  \begin{array}{l}
    \form=\mu x.\form_1\\
    \efty=\tagty_1\to\cdots\to\tagty_n\to\Prop\\
    \efty'=\tagty'_1\to\cdots\to\tagty'_n\to\Prop\\
    \judgesimp{\envv,\envpair{x}{\efty'}{\tagv}}{\rho}{\form_1}{\efty'}{\form_1''}\\
      \tagty_i\raweq\tagty_i'\mbox{ for each $i\in\set{1,\ldots,n}$}\\
      \gettags(\restrict{(\envv,y_1\COL\tagty_1,\ldots,y_n\COL\tagty_n)}{\FV(\form_1\,y_1\,\cdots\,y_n)})\subseteq \{\taguse\}\\
      \form_1'''=\left\{\begin{array}{ll}
      \letexp{\exv{x}}{\exarg(\restrict{\envv}{\FV(\mu x.\form_1)})}\form_1''
      &\mbox{if $\tagv=\taguse$}\\
      \form_1'' & \mbox{if $\tagv=\tagnotuse$}
      \end{array}\right.\\
      \form_1'=
      \big(\nu x.\lambda u.\lambda z_1\cdots z_n.u>0\land\hfill\\
      \qquad ([x(u-1)/x]\form_1''')\,z_1\,\cdots\,z_n\big)\,
      \exarg(\restrict{\envv}{\FV((\mu x.\form)y_1\,\cdots\,y_n)})\\
    \form'=\lambda p_{y_1,\tagty_1}\cdots p_{y_n,\tagty_n}.\form_1'\, p_{y_1,\tagty_1'}\cdots\, p_{y_n,\tagty_n'}\\
   \end{array}
  \]
  By the induction hypothesis, for any \(w,w'\) such that
  \(w\Simge_{(\efty',t)} w'\), we have:
  \[
  \begin{array}{l}\sem{\ST(\envv),x\COL\ST(\efty')\pST \form_1:\ST(\efty')}\rho\set{x\mapsto w}\\
  \Simge_{\efty} \sem{\trT{\envv},\trT{(x\COL(\efty',t))}
    \pST \form_1'':\trT{\efty'}}\rho'\set{p_{x,(\efty',t)}\mapsto w'}.
  \end{array}\]
  Let \(w''\) be the second element of \(w'\) if \(t=\taguse\), and \(w''=w'\) otherwise.
  Then, from the relation above and the definition of \(\form_1'''\), we obtain:
  \[
  \begin{array}{l}\sem{\ST(\envv),x\COL\ST(\efty')\pST \form_1:\ST(\efty')}\rho\set{x\mapsto w}\\
  \Simge_{\efty} \sem{\trT{\envv}, x\COL \trT{\efty'}
    \pST \form_1''':\trT{\efty'}}\rho'\set{x\mapsto w''}.
  \end{array}
  \]
  Therefore, we have:
  \[
  \begin{array}{l}
    \lambda w\in\semd{\ST(\efty')}.
    \sem{\ST(\envv),x\COL\ST(\efty')\pST \form_1:\ST(\efty')}\rho\set{x\mapsto w}\\
    \Simge_{(\efty',\tagnotuse)\to\efty'}
    \lambda w''\in\semd{\trT{\efty'}}.
    \sem{\trT{\envv}, x\COL \trT{\efty'}
    \pST \form_1''':\trT{\efty'}}\rho'\set{x\mapsto w''}.
  \end{array}
  \]
  Let \(m\) be
  \(\sem{\trT{\envv}\pST\exarg(\restrict{\envv}{\FV((\mu x.\form)y_1\,\cdots\,y_n)}):\INT}\rho'\).
  It follows by easy induction on \(m\) that
  \[
  \sem{\trT{\envv}\pST\form_1':\trT{\efty'}}\rho'
  =
  (\lambda w\in\semd{\trT{\efty'}}.
    \sem{\trT{\envv}, x\COL \trT{\efty'}
    \pST \form_1''':\trT{\efty'}}\rho'\set{x\mapsto w})^m (\Bot_{\trT{\efty'}}).
    \]
    Thus, we have:
    \[
    \begin{array}{l}
      \sem{\ST(\envv)\pST \form:\ST(\efty)}\rho\\
      \GEQ_{\ST(\efty)}
(\lambda w\in\semd{\ST(\efty')}.
      \sem{\ST(\envv),x\COL\ST(\efty')\pST \form_1:\ST(\efty')}\rho\set{x\mapsto w})^m (\Bot_{\ST(\efty')})\\
      \Simge_{\efty'}
  (\lambda w\in\semd{\trT{\efty'}}.
    \sem{\trT{\envv}, x\COL \trT{\efty'}
      \pST \form_1''':\trT{\efty'}}\rho'\set{x\mapsto w})^m (\Bot_{\trT{\efty'}})\\
= \sem{\trT{\envv}\pST\form_1':\trT{\efty'}}\rho'.    
    \end{array}
    \]
    Therefore, we have
      \(\sem{\ST(\envv)\pST \form:\ST(\efty)}\rho
     \Simge_{\efty} \sem{\trT{\envv}\pST \form':\trT{\efty}}\rho'\) as required.
\end{proof}

Theorem~\ref{th:soundness} follows as an immediate corollary of the above lemma.
\begin{proof}[Proof of Theorem~\ref{th:soundness}]
  A special case of Lemma~\ref{lem:soundness-of-trans},
  where \(\envv=\emptyset\) and \(\efty=\Prop\).
\end{proof}

The transformation relation defined in the previous section was implicitly parameterized
by the constants \(c\) and \(d\).
To make them explicit, let us write \(\judgesimpc{\emptyset}{}{\form}{\Prop}{c,d}{\form'}\).
The theorem below states that the precision of the approximation is monotonic with
respect to \(c\) and \(d\).
The theorem justifies our approximation refinement process in \figref{fig:hflz},
which just increases the values of \(c\) and \(d\).

\begin{theorem}[monotonicity of the approximation]
  \label{th:monotonicity}
  Assume
  \(0\le c_1\le c_2\) and \(0\le d_1\le d_2\).
  Suppose 
  \(\judgesimpc{\emptyset}{}{\form}{\Prop}{c_1,d_1}{\form'^{(c_1,d_1)}}\) and
  \(\judgesimpc{\emptyset}{}{\form}{\Prop}{c_2,d_2}{\form'^{(c_2,d_2)}}\)
  are obtained by the same derivation except the values of \(c,d\).
    If \(\form'^{(c_1,d_1)}\) is valid, then so is \(\form'^{(c_2,d_2)}\).
\end{theorem}

To prove Theorem~\ref{th:monotonicity}, we define another family
of relations \(\set{\LEmono_{\efty}}_\efty\) parameterized by \(\efty\).
\[
\begin{array}{l}
  \LEmono_{\INT} = \set{(n,n)\mid n\in \Z}\\
  \LEmono_{\Prop} = \set{(\Bot,\Top),(\Top,\Top),(\Bot,\Bot)}\\
  \LEmono_{(\efty,\tagnotuse)} = \LEmono_{\efty}\\
  \LEmono_{(\efty,\taguse)} = \set{((n,w), (n',w'))\mid n\le n', w\LEmono_{\efty}w'}\\
  \LEmono_{\tagty\to\efty} =
   \set{(f,f')\mid \forall w,w'.w\LEmono_{\tagty}w'\imp f\,w\LEmono_{\efty}f'\,w'}
\end{array}
\]
We write \(\rho\LEmono_{\envv}\rho'\) if \(\rho(p_{x,\envv(x)})\LEmono_{\envv(x)}\rho'(p_{x,\envv(x)})\) for
every \(x\in\dom(\envv)\).
\begin{lemma}
  \label{lem:monotonicity}
  Suppose \(\judgesimpc{\envv}{}{\form}{\efty}{c_1,d_1}{\form'^{(c_1,d_1)}}\) and
  \(\judgesimpc{\envv}{}{\form}{\efty}{c_2,d_2}{\form'^{(c_2,d_2)}}\)
  are derived from the same derivation except the values of \(c,d\).
  Suppose also \(\rho\LEmono_{\envv}\rho'\).
  If \(0\le c_1\le c_2\) and \(0\le d_1\le d_2\), then
  \[\sem{\trT{\envv}\pST \form'^{(c_1,d_1)}:\trT{\efty}}\rho
     \LEmono_{\efty}
\sem{\trT{\envv}\pST \form'^{(c_2,d_2)}:\trT{\efty}}\rho'.\]
\end{lemma}
\begin{proof}
  This follows by induction on the derivation of
  \(\judgesimpc{\envv}{}{\form}{\efty}{c_1,d_1}{\form'^{(c_1,d_1)}}\), with case analysis on
  the last rule.
  We discuss only the case for \rn{Tr-Mu}, since the other cases are trivial.
  In the case for \rn{Tr-Mu}, we have:
  \[
  \begin{array}{l}
    \form=\mu x.\form_1\\
    \efty=\tagty_1\to\cdots\to\tagty_n\to\Prop\\
    \efty'=\tagty'_1\to\cdots\to\tagty'_n\to\Prop\\
    \judgesimpc{\envv,\envpair{x}{\efty'}{\tagv}}{\rho}{\form_1}{\efty'}{c,d}{\form_1''^{(c,d)}}\\
      \tagty_i\raweq\tagty_i'\mbox{ for each $i\in\set{1,\ldots,n}$}\\
      \gettags(\restrict{(\envv,y_1\COL\tagty_1,\ldots,y_n\COL\tagty_n)}{\FV(\form_1\,y_1\,\cdots\,y_n)})=\{\taguse\}\\
      \form_1'''^{(c,d)}=\left\{\begin{array}{ll}
      \letexp{\exv{x}}{\exarg^{(c,d)}(\restrict{\envv}{\FV(\mu x.\form_1)})}\form_1''^{(c,d)}
      &\mbox{if $\tagv=\taguse$}\\
      \form_1''^{(c,d)} & \mbox{if $\tagv=\tagnotuse$}
      \end{array}\right.\\
      \form_1'^{(c,d)}=
      \big(\nu x.\lambda u.\lambda z_1\cdots z_n.u>0\land\hfill\\
      \qquad ([x(u-1)/x]\form_1'''^{(c,d)})\,z_1\,\cdots\,z_n\big)\,
      \exarg^{(c,d)}(\restrict{\envv}{\FV((\mu x.\form_1)y_1\,\cdots\,y_n)})\\
    \form'^{(c,d)}=\lambda p_{y_1,\tagty_1}\cdots p_{y_n,\tagty_n}.\form_1'^{(c,d)}\, p_{y_1,\tagty_1'}\cdots\, p_{y_n,\tagty_n'}\\
   \end{array}
  \]
  for \((c,d)\in\set{(c_1,d_1),(c_2,d_2)}\).
  Here, we have made \(c,d\) explicit in \(\exarg\).
  Suppose \(\rho\LEmono_{\envv}\rho'\).
  By the induction hypothesis,
  for any \(w\LEmono_{(\efty',t)}w'\),
  we have
  \[
\begin{array}{l}
  \sem{\trT{\envv},p_{x,(\efty',t)}\COL\trT{(\efty',t)}\pST \form_1''^{(c_1,d_1)}:\trT{\efty'}}
  (\rho\set{p_{x,(\efty',t)}\mapsto w})\\
    \LEmono_{\efty'}
    \sem{\trT{\envv},p_{x,(\efty',t)}\COL\trT{(\efty',t)}\pST \form_1''^{(c_2,d_2)}:\trT{\efty'}}
    (\rho'\set{p_{x,(\efty',t)}\mapsto w'}).
\end{array}
\]
Let \(w_1,w_1'\) be the second component of \(w,w'\) if \(\tagv=\taguse\)
and \(w_1=w, w_1'=w'\) otherwise.
Since
\[
\begin{array}{l}
\sem{\trT(\envv)\pST\exarg^{(c_1,d_1)}(\restrict{\envv}{\FV(\mu x.\form_1)}):\INT}\rho\\
\le
\sem{\trT(\envv)\pST\exarg^{(c_2,d_2)}(\restrict{\envv}{\FV(\mu x.\form_1)}):\INT}\rho',
\end{array}\]
we have:
  \[
\begin{array}{l}
  \sem{\trT{\envv},x\COL\trT{\efty'}\pST \form_1'''^{(c_1,d_1)}:\trT{\efty'}}
  (\rho\set{x\mapsto w_1})\\
    \LEmono_{\efty'}
    \sem{\trT{\envv},x\COL\trT{\efty'}\pST \form_1'''^{(c_2,d_2)}:\trT{\efty'}}
    (\rho'\set{x\mapsto w_1'}).
\end{array}
\]
Let \(m\) and \(m'\) be
\(\sem{\trT{\envv}\pST
  \exarg^{(c_1,d_1)}(\restrict{\envv}{\FV((\mu x.\form_1)y_1\,\cdots\,y_n)}):\INT}\rho\)
and 
\(\sem{\trT{\envv}\pST
  \exarg^{(c_2,d_2)}(\allowbreak\restrict{\envv}{\FV((\mu x.\form_1)y_1\,\cdots\,y_n)}):\INT}\rho'\)
respectively. Since \(m\le m'\), we have:
\[
\begin{array}{l}
  \sem{\trT{\envv}\pST\form_1'^{(c_1,d_1)}:\trT{\efty'}}\rho\\
  =(\lambda w_1.\sem{\trT{\envv},x\COL\trT{\efty'}\pST \form_1'''^{(c_1,d_1)}:\trT{\efty'}}
  (\rho\set{x\mapsto w_1}))^m(\Bot_{\trT{\efty'}})\\
  \LEmono_{\efty'}
(\lambda w_1.\sem{\trT{\envv},x\COL\trT{\efty'}\pST \form_1'''^{(c_1,d_1)}:\trT{\efty'}}
  (\rho\set{x\mapsto w_1}))^{m'}(\Bot_{\trT{\efty'}})\\  
  \LEmono_{\efty'}
(\lambda w_1'.\sem{\trT{\envv},x\COL\trT{\efty'}\pST \form_1'''^{(c_2,d_2)}:\trT{\efty'}}
  (\rho'\set{x\mapsto w_1'}))^{m'}(\Bot_{\trT{\efty'}})\\
  =
  \sem{\trT{\envv}\pST\form_1'^{(c_2,d_2)}:\trT{\efty'}}\rho'.
\end{array}
\]
We have thus
  \[\sem{\trT{\envv}\pST \form'^{(c_1,d_1)}:\trT{\efty}}\rho
     \LEmono_{\efty}
     \sem{\trT{\envv}\pST \form'^{(c_2,d_2)}:\trT{\efty}}\rho'\]
     as required.
\end{proof}

Theorem~\ref{th:monotonicity} is an immediate corollary of the above lemma.
\begin{proof}[Proof of Theorem~\ref{th:monotonicity}]
  A special case of Lemma~\ref{lem:monotonicity},
  where \(\envv=\emptyset\) and \(\efty=\Prop\).
\end{proof}

\else
The following theorem states that our transformation provides a sound underapproximation
of \hflz{} formulas.
\begin{theorem}[soundness]
  \label{th:soundness}
  Suppose \(\judgesimp{\emptyset}{}{\form}{\Prop}{\form'}\).
  If \(\form'\) is valid, then so is \(\form\).
\end{theorem}

The transformation relation defined in the previous section was implicitly parameterized
by the constants \(c\) and \(d\).
To make them explicit, let us write \(\judgesimpc{\emptyset}{}{\form}{\Prop}{c,d}{\form'}\).
The theorem below states that the precision of the approximation is monotonic with
respect to \(c\) and \(d\).
The theorem justifies our approximation refinement process in \figref{fig:hflz},
which just increases the values of \(c\) and \(d\).
\begin{theorem}[monotonicity of the approximation]
  \label{th:monotonicity}
  Assume
  \(0\le c_1\le c_2\) and \(0\le d_1\le d_2\).
  Suppose 
  \(\judgesimpc{\emptyset}{}{\form}{\Prop}{c_1,d_1}{\form'^{(c_1,d_1)}}\) and
  \(\judgesimpc{\emptyset}{}{\form}{\Prop}{c_2,d_2}{\form'^{(c_2,d_2)}}\)
  are obtained by the same derivation except the values of \(c,d\).
    If \(\form'^{(c_1,d_1)}\) is valid, then so is \(\form'^{(c_2,d_2)}\).
\end{theorem}
Since the transformation rules are non-deterministic, we need to compare
\(\form'^{(c_1,d_1)}\) and \(\form'^{(c_2,d_2)}\) obtained by the same derivation
    in the theorem above.
    Because the shapes of possible derivations do not depend on the values of
    \(c\) and \(d\), we can keep using the same derivation during the approximation
    refinement process, to ensure that
    the approximation is always refined at each iteration of the refinement loop
    in \figref{fig:hflz}.
Proofs of the theorems above are found in
 a longer version of this paper~\cite{DBLP:journals/corr/abs-2203-07601}.
\fi

\begin{remark}
  Theorem~\ref{th:monotonicity} guarantees that the precision of the approximation monotonically
  increases, but does not guarantee that the approximation is precise enough for some \(c\) and \(d\).
  Indeed, there are cases where
\(\judgesimpc{\emptyset}{}{\form}{\Prop}{c,d}{\form'}\) and
\(\form\) is valid but \(\form'\) is invalid for any values of \(c\) and \(d\):
recall Remark~\ref{rem:ack}.
\end{remark}


\section{On the Power of Our Verification Method}
\label{sec:disc}

As discussed in Remark~\ref{rem:computability}, our
reduction from \hflz{} validity checking to \nuhflz{} validity
checking (hence also the overall verification method) is necessarily incomplete.
In this section, we 
compare our method (extended as sketched in Remark~\ref{rem:ack})
with previous automated
 methods for temporal property verification (especially termination and fair termination),
 and show that our method is
 strictly more powerful than
 previous methods based on (i) (lexicographic) linear ranking functions (LLRF)
 and those based on
 (ii) disjunctive well-founded relations with linear ranking functions (DWFLR).
 In other words, as mentioned in Remark~\ref{rem:complete-fragment},
 we characterize the class of \hflz{} formulas for which our method is (relatively)
 complete in terms of the classes of formulas for which previous methods are complete.
 As mentioned already, for first-order formulas (or programs),
 the idea of bounding the number of unfoldings (or recursive calls) to reduce
 termination/liveness properties to safety properties is not
 new~\cite{freqterm,DBLP:conf/sas/0001NIU19}, but the characterization
 of the power of such a method in terms of the popular
 methods using LLRF and DWFLR is new.
  
  The comparison with DWFLR is based on the observation that any sequence
 that conforms to DWFLR can be embedded into a monotonically decreasing sequence
 over \(\Nat^k\), which may be of independent interest.
 Below we consider only first-order formulas, as the issue of adding extra parameters
 (as discussed in Sections~\ref{sec:ho} and \ref{sec:opt}) is orthogonal to the discussion below.

\subsection{Methods Based on Well-Founded Relations Expressed as Linear Ranking Functions}
\label{sec:lrf}
Let us consider a formula \(X\), defined by
\(X\,y =_\mu \form(X)\) (i.e., \(X =\mu^{\INT\to\Prop}x.\lambda y.\form(x)\)),
where \(X\) does not occur in \(\form\), and
suppose that we wish to prove that \(X\,n\) holds for every integer \(n\).
Our approach was to approximate the formula \(X\,n\) with
\(\forall u\ge c |n|+d. X'\,u\,n\), where
\[X'\,u\,y=_\nu u>0\land \form(X'(u-1)).\]

An alternative approach (suggested, e.g., in \cite{DBLP:conf/lics/Nanjo0KT18,DBLP:conf/pepm/WatanabeTO019}
for fixpoint logics) is to pick a well-founded relation \(W\), and check that
the relation \(W\) holds between the arguments of recursive calls.
With this approach, the formula \(X\,n\) would be replaced with
\(X_{\LLRF}\,\infty\,n\), where
\[X_{\LLRF}\,y_p\,y =_\nu W(y,y_p)\land \form(\lambda y'.X_{\LLRF}\,y\,y'),\]
and \(\infty\) denotes a maximum integer with respect to the well-founded relation \(W\).
Here, the extra argument \(y_p\) has been added, which
represents the argument of the previous recursive call for \(X\); thus
it is checked that \(W\) holds between \(y\) and \(y_p\), and \(y_p\) has been
updated to \(y\) in the recursive use of \(X_{\LLRF}\) in \(\form\,(\lambda y'.X_{\LLRF}\,y\,y')\).

In \emph{automated} verification based on the latter approach, we have to fix a method
to pick an appropriate well-founded relation \(W\). The simplest approach is to select
a linear ranking function \(r(y)=c_r y+d_r\), let \(W(y,y_p)\) be
\(0\le r(y)<r(y_p)\),
and infer appropriate values for \(c_r\) and \(d_r\).
If \(X_{\LLRF}\,\infty\,n\) is valid, then the depth of recursion without violating
the relation \(W(y_p,y)\equiv 0\le r(y)<r(y_p)\) must be at most 
\(r(n)+1\; (=c_rn+d_r+1)\). Thus, \(X'\,(c|n|+d)\,n\) is also valid,
for \(c=|c_r|\) and \(d=|d_r|+1\). 
Thus, whenever the method based on linear ranking functions succeeds, our method should also succeed.

\begin{remark}
  \label{rem:lrf}
We have defined \(W(y,y_p)\) as \(0\le r(y)<r(y_p)\) above.
Alternatively, we could define \(W(y,y_p)\) as
  \(r(y)<r(y_p)\land 0\le r(y_p)\)~\cite{DBLP:conf/tacas/CookSZ13},
  so that the value of \(r(y)\) can be negative.
  We use the former definition for the sake of simplicity,
  but the latter definition can be obtained by setting
  \(r'(y) = \max(0, r(y)+1)\). This change does not affect the discussions
  below. In Sections~\ref{sec:lrf} and \ref{sec:llrf}, it suffices to
  increase the bound on the number of unfoldings in our approach by one,
  and in Section~\ref{sec:dwflrf}, it suffices to replace \(r(y)\) in the bound
  on the number of unfoldings with \(|r(y)|\).
  See also Example~\ref{ex:dwf}.
  \qed
\end{remark}

Furthermore, our method is superior to the linear ranking function approach,
in the following sense.
\begin{itemize}
  \item
There are formulas (or programs) for which our method succeeds but the approach of
linear ranking functions would fail.
Consider the following
recursive function \(f\) defined by:
\begin{align*}
  f\,y =& \IF\;y\le 0\;\THEN\; (\,)\; 
  \ELSE\; \IF\;y\;\MOD\;2=0\;\THEN\; f(y+1)\;\ELSE\;f(y-3).
\end{align*}
The termination of \(f(n)\) is represented by \(X\,n\), where:
\begin{align*}
  X\,y =_\mu & (y\le 0\imp\TRUE) \land\\
&  (y>0\imp ( (y\;\MOD\;2=0\imp X(y+1))\land (y\;\MOD\;2\ne 0\imp
  X(y-3)))).
\end{align*}
The formula \(X\,n\) is valid for all \(n\) (indeed, \(f\,n\) terminates for all \(n\)),
but since the argument of \(X\) goes up and down
(e.g. \(X(6)\to X(7)\to X(4)\to X(5)\to X(2)\to\cdots\)),
there exists no linear ranking function
\(r(y)\) such that \(X_{\LLRF}\,\infty\,n\) is valid.
In contrast, since the depth of required unfoldings of \(X\) (corresponding
to recursive calls for \(f\,n\)) is at most \(|n|+1\),
our approach succeeds
for any \(c\ge 1\) and \(d\ge 1\).
\item It is easier to systematically find appropriate values of \(c\) and \(d\),
  rather than to find the coefficients \(c_r\) and \(d_r\) for
  the ranking function. Recall that our approximation of an \hflz{} formula by a \nuhflz{} formula is
  \emph{monotonic} on \(c\) and \(d\) (Theorem~\ref{th:monotonicity}); thus, we just
  need to monotonically increase the values of \(c\) and \(d\), until the verification succeeds.
  In contrast, the precision of the ranking function approach is \emph{not} monotonic on the coefficients
  of ranking functions. For example, consider the termination
  of \(f\,y\,z\) where \(f\) is  defined by:
\begin{align*}
  f\,y\,z =_\mu & \IF\;y+z\le 0\;\THEN\; (\,)\;\ELSE\; f\,(y-2)\,(z+1).
\end{align*}
Then, the ranking function \(r(y,z)=y+z\) serves as a termination argument, but
the ranking function \(r'(y,z)=y+2z\), which has larger coefficients,
does NOT serve as a termination argument.
Thus, the search for appropriate ranking functions would require some heuristics.
\end{itemize}

\subsection{Methods Based on Lexicographic Linear Ranking Functions}
\label{sec:llrf}
The approach based on linear ranking functions discussed above is often too restrictive,
and a common approach for improvement is to use \emph{lexicographic linear ranking functions}~\cite{DBLP:conf/tacas/CookSZ13,DBLP:conf/cav/BradleyMS05}:
let \(r_1,\ldots,r_k\) be a sequence of linear ranking functions, and define
the well-found relation \(W\) by:
\[W(y,y_p) \IFF 0\le r_1(y)<r_1(y_p) \lor (0\le r_1(y)=r_1(y_p)\land 0\le r_2(y)<r_2(y_p))
\lor \cdots.
\]
For example, the termination argument for the Ackermann function
(given in Remark~\ref{rem:ack})
can be given by
\((r_1,r_2)\), where \(r_1(y,z) = y\) and \(r_2(y,z)=z\).

The extension discussed in Remark~\ref{rem:ack}
is at least as powerful as the method based on lexicographic linear ranking functions,
as discussed below. (We consider the case for \(k=2\) for the sake of simplicity;
the argument generalizes to an arbitrary sequence of
linear ranking functions \(r_1,\ldots,r_k\).)

Let us consider a formula \(X\)  defined by \(X\,\seq{y}=_\mu \form(X)\),
where \(X\) does not occur in \(\form\).
With the lexicographic linear ranking functions \(r_1,r_2\),
    \(X\,\seq{y}\) would
  be approximated by \(X_{\LLRF}\,\seq{y}_0\,\seq{y}\), where
  \(\seq{y}_0\) may be an arbitrary argument greater than \(\seq{y}\) with
  respect to \(W\), and
\[
X_{\LLRF}\,\seq{y}_p\,\seq{y} =_\nu
W(\seq{y}, \seq{y_p})\land
 \form\,(\lambda \seq{y}'.X_{\LLRF}\,\seq{y}\,\seq{y}').
\]
with \(W(\seq{y},\seq{y}_p)\equiv
0\le r_1(\seq{y})<r_1(\seq{y}_p) \lor (0\le r_1(\seq{y})=r_1(\seq{y}_p)\land
0\le r_2(\seq{y})<r_2(\seq{y}_p))\).

The extension discussed in Remark~\ref{rem:ack} (the special case of \(X_{\MC}\)
where \(k=2\)) instead approximates
\(X\,\seq{y}\) by
\(\forall u\ge c(|y_1|+\cdots+|y_\ell|)+d.
X'\,u\,u\,\seq{y}\)
where \(\seq{y}=y_1,\ldots,y_\ell\) and
\begin{align*}
X'\,u_1\,u_0\,\seq{y} =_\nu\; & u_1\ge 0\land u_0\ge 0\\
& \land
\form\,(\lambda \seq{y}'.
\forall u'_0\ge c(u_1+u_0+|y_1'|+\cdots+|y_\ell'|)+d.X'\,(u_1-1)\,u_0'\,\seq{y}')
\lor X'\,u_1\,(u_0-1)\,\seq{y}'.
\end{align*}

Suppose \(r_i(y_1,\ldots,y_\ell)=a_{i,1}y_1+\cdots+a_{i,\ell}y_\ell+b_i\) for \(i\in\set{1,2}\), and let \(c= \max_{i,j}(|a_{i,j}|)\) and \(d=\max(|b_1|,|b_2|)\),
so that \(r_i(y_1,\ldots,y_\ell)\le c(|y_1|+\cdots+|y_\ell|)+d\).
The following lemma ensures that our approximation (using \(X'\))
is at least as good as the method based on lexicographic linear ranking functions
(using \(X_{\LLRF}\)).
\begin{lemma}
  Let \(X'\) and \(X_{\LLRF}\) be the formulas as given above.
  For any integers \(\seq{n}_p\) and \(\seq{n}\), if
  \(m_1\ge r_1(\seq{n})\) and \(m_2\ge r_2(\seq{n})\), then
  \(\sem{X_{\LLRF}\,\seq{n}_p\,\seq{n}}\LEQ_\Prop
  \sem{X'\,m_1\,m_2\,\seq{n}}\).
\end{lemma}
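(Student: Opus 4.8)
The plan is to prove the inequality by coinduction on the greatest-fixpoint predicate \(X'\). Writing \(\sem{X'}=\GFP(g)\), where \(g\) is the monotone functional
\[
g(Y)(u_1,u_0,\seq{y}) = u_1\ge 0\land u_0\ge 0\land \form\bigl(\lambda\seq{y}'.(\forall u_0'\ge c(u_1{+}u_0{+}|y_1'|{+}\cdots{+}|y_\ell'|)+d.\,Y(u_1{-}1,u_0',\seq{y}'))\lor Y(u_1,u_0{-}1,\seq{y}')\bigr),
\]
it suffices, by the characterization \(\GFP(g)=\LUB\set{Y\mid Y\LEQ g(Y)}\), to exhibit a post-fixpoint \(P\LEQ g(P)\) that already dominates the target. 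I would take \(P(m_1,m_2,\seq{n})=\Top\) to hold exactly when \(m_1\ge 0\), \(m_2\ge 0\), \(m_1\ge r_1(\seq{n})\), \(m_2\ge r_2(\seq{n})\), and \(\sem{X_{\LLRF}\,\seq{n}_p\,\seq{n}}=\Top\) for some \(\seq{n}_p\). Once \(P\LEQ g(P)\) is established, coinduction yields \(P\LEQ\sem{X'}\), and the lemma follows at once: under the hypotheses \(m_1\ge r_1(\seq{n})\), \(m_2\ge r_2(\seq{n})\), and \(\sem{X_{\LLRF}\,\seq{n}_p\,\seq{n}}=\Top\) we have \(P(m_1,m_2,\seq{n})=\Top\) (the nonnegativity conjuncts being supplied by the normalization \(r_1,r_2\ge 0\) justified in Remark~\ref{rem:lrf}, so that \(m_i\ge r_i(\seq{n})\ge 0\)), whence \(\sem{X'\,m_1\,m_2\,\seq{n}}=\Top\).

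The heart of the proof is verifying \(P\LEQ g(P)\). Assuming \(P(m_1,m_2,\seq{n})=\Top\) with witness \(\seq{n}_p\), unfolding the \(\nu\)-definition of \(X_{\LLRF}\) gives \(W(\seq{n},\seq{n}_p)=\Top\) and \(\form(\lambda\seq{y}'.X_{\LLRF}\,\seq{n}\,\seq{y}')=\Top\). The conjuncts \(m_1\ge 0\) and \(m_2\ge 0\) of \(g(P)(m_1,m_2,\seq{n})\) hold by definition of \(P\). For the remaining \(\form\)-conjunct, by monotonicity of \(\sem{\form}\) it is enough to show, for every \(\seq{n}'\),
\[
\sem{X_{\LLRF}\,\seq{n}\,\seq{n}'}\LEQ (\forall u_0'\ge c(m_1{+}m_2{+}|n_1'|{+}\cdots{+}|n_\ell'|)+d.\,P(m_1{-}1,u_0',\seq{n}'))\lor P(m_1,m_2{-}1,\seq{n}'),
\]
since then \(\form\) applied to the pointwise-larger argument is again \(\Top\). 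Assuming the left-hand side is \(\Top\), a further unfolding of \(X_{\LLRF}\) yields \(W(\seq{n}',\seq{n})=\Top\), on which I case-split.

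In the first case \(0\le r_1(\seq{n}')<r_1(\seq{n})\) I aim for the left disjunct and decrement \(u_1\): for any \(u_0'\ge c(m_1{+}m_2{+}|n_1'|{+}\cdots{+}|n_\ell'|)+d\) I check the defining conjuncts of \(P(m_1{-}1,u_0',\seq{n}')\). From \(m_1\ge r_1(\seq{n})\ge r_1(\seq{n}')+1\) we get \(m_1{-}1\ge r_1(\seq{n}')\ge 0\); and since the choice \(c=\max_{i,j}|a_{i,j}|\), \(d=\max_i|b_i|\) ensures \(r_2(\seq{n}')\le c(|n_1'|{+}\cdots{+}|n_\ell'|)+d\), the reset value satisfies \(u_0'\ge c(|n_1'|{+}\cdots{+}|n_\ell'|)+d\ge r_2(\seq{n}')\) and \(u_0'\ge 0\); finally \(\seq{n}\) witnesses the existential because \(\sem{X_{\LLRF}\,\seq{n}\,\seq{n}'}=\Top\). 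In the second case \(0\le r_1(\seq{n}')=r_1(\seq{n})\land 0\le r_2(\seq{n}')<r_2(\seq{n})\) I aim for the right disjunct \(P(m_1,m_2{-}1,\seq{n}')\): here \(m_1\ge r_1(\seq{n})=r_1(\seq{n}')\) and \(m_2\ge r_2(\seq{n})\ge r_2(\seq{n}')+1\) give \(m_2{-}1\ge r_2(\seq{n}')\ge 0\), and again \(\seq{n}\) witnesses the existential.

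The step I expect to be the main obstacle is the reset in the first case: decrementing \(u_1\) forces \(u_0\) to be re-chosen, and one must guarantee that every freshly chosen \(u_0'\) is still an upper bound for \(r_2\) at the child. This is exactly where the built-in lower bound \(c(u_1{+}u_0{+}|\seq{y}'|)+d\) of the left disjunct of \(X'\) is needed, as it dominates \(c(|n_1'|{+}\cdots{+}|n_\ell'|)+d\) and hence \(r_2(\seq{n}')\); semantically this mirrors the reset of the lower lexicographic component when the higher one strictly decreases. A secondary subtlety is the nonnegativity bookkeeping for \(u_0\)/\(m_2\), which the first disjunct of \(W\) does not itself supply and which is discharged by the normalization of Remark~\ref{rem:lrf}. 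The generalization from \(k=2\) to arbitrary \(k\) (the full \(X_{\MC}\) of Remark~\ref{rem:ack}) introduces only more cases in the split on \(W\) and more counters to reset, with no new ideas.
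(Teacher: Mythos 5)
Your argument is correct, but it takes a genuinely different route from the paper's. The paper proves the lemma by well-founded induction on the lexicographic pair \((r_1(\seq{n}),r_2(\seq{n}))\): assuming \(\sem{X_{\LLRF}\,\seq{n}_p\,\seq{n}}=\Top\), it applies the induction hypothesis at each child \(\seq{n}'\) for which \(W(\seq{n}',\seq{n})\) holds (the two disjuncts of \(W\) giving a strict lexicographic decrease), obtains \(\sem{X_{\LLRF}\,\seq{n}\,\seq{n}'}\LEQ\sem{X'\,(m_1-1)\,m_2'\,\seq{n}'}\) or \(\sem{X_{\LLRF}\,\seq{n}\,\seq{n}'}\LEQ\sem{X'\,m_1\,(m_2-1)\,\seq{n}'}\), and concludes by monotonicity of \(\form\). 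You instead argue by coinduction on the \(\nu\)-defined \(X'\): you package ``the counters dominate the ranking functions and \(X_{\LLRF}\) holds'' into a predicate \(P\), verify \(P\LEQ g(P)\), and invoke the paper's own characterization \(\GFP(g)=\LUB\set{Y\mid Y\LEQ g(Y)}\). The one-step preservation check you perform (case split on \(W(\seq{n}',\seq{n})\); decrement \(u_1\) and re-establish the bound on the reset \(u_0'\) via \(c|\seq{n}'|+d\ge r_2(\seq{n}')\) in the first case; decrement \(u_0\) in the second) is exactly the paper's inductive step, so the combinatorial content coincides; only the scaffolding differs. Your framing has the advantage of needing no well-founded order at all --- the invariant is preserved one unfolding at a time --- whereas the paper's induction consumes the \(0\le\) guards of \(W\) to get well-foundedness of the lexicographic order on \(\Nat^2\). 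Both arguments meet the same edge case at the root: the hypotheses \(m_i\ge r_i(\seq{n})\) do not by themselves yield the guards \(m_1\ge 0\land m_2\ge 0\) in the body of \(X'\) when \(r_2(\seq{n})<0\); the paper's final equality silently discards these conjuncts, while you discharge them explicitly by appealing to the normalization \(r_i\ge 0\) of Remark~\ref{rem:lrf}, which is the more careful reading and is consistent with how the lemma is actually instantiated (with \(m_1=m_2=u\ge c(|y_1|+\cdots+|y_\ell|)+d\ge 0\)).
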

\begin{proof}
  The proof proceeds by well-founded induction on \((r_1(\seq{n}),r_2(\seq{n}))\).
  Suppose \(\sem{X_{\LLRF}\,\seq{n}_p\,\seq{n}}=\Top\).
  Then it must be the case that \(W(\seq{n},\seq{n}_p)\), i.e.,
  \(0\le r_1(\seq{y})<r_1(\seq{y}_p) \lor (r_1(\seq{y})=r_1(\seq{y}_p)\land
0\le r_2(\seq{y})<r_2(\seq{y}_p))\).
  For every \(\seq{n}'\), if \(W(\seq{n},\seq{n}')\) does not hold,
  then \(\sem{X_{\LLRF}\,\seq{n}'\,\seq{n}}=\Bot\).
  Otherwise (i.e., if \(W(\seq{n},\seq{n}')\) holds),
  by the induction hypothesis,
  we have either
  \(\sem{X_{\LLRF}\,\seq{n}'\,\seq{n}}\LEQ_{\Prop} \sem{X'\,m_1\,(m_2-1)\,\seq{n}'}\)
   (if \(0\le r_1(\seq{n}')=r_1(\seq{n})\land 0\le r_2(\seq{n}')< r_2(\seq{n})\)), or
  \(\sem{X_{\LLRF}\,\seq{n}'\,\seq{n}}\LEQ_{\Prop} \sem{X'\,(m_1-1)\,m_2'\,\seq{n}'}\)
  (if \(0\le r_1(\seq{n}') < r_1(\seq{n})\))
  for any \(m_2'\ge r_2(\seq{n})\),
  which implies \[
  \sem{\lambda \seq{y}'.X_{\LLRF}\,\seq{n}\,\seq{y}'}
  \LEQ_{\seq{\INT}\to\Prop}
  \sem{\lambda \seq{y}'.
 X'\,m_1\,(m_2-1)\,\seq{y}'
 \lor \forall u'_2\ge c(|y_1'|+\cdots+|y_\ell'|)+d.X_{\LLRF}\,(m_1-1)\,u_2'\,\seq{y}'}.\]
 Thus, we have
 \begin{align*}
   \sem{X_{\LLRF}\,\seq{n}_p\,\seq{n}}
  & = \sem{W(\seq{n},\seq{n}_p)\land
   \form\,(\lambda \seq{y}'.X_{\LLRF}\,\seq{n}\,\seq{y}')}\\
 & 
   \LEQ_{\Prop}
   \sem{\form\,
(\lambda \seq{y}'.
 X'\,m_1\,(m_2-1)\,\seq{y}'
 \lor \forall u'_2\ge c(|y_1'|+\cdots+|y_\ell'|)+d.X_{\LLRF}\,(m_1-1)\,u_2'\,\seq{y}')}\\
 &= \sem{X'\,m_1\,m_2\,\seq{n}},
\end{align*}
as required.
\end{proof}

The argument above implies that our method (extended with
multiple counters as discussed above and in Remark~\ref{rem:ack})
is at least as powerful as the method based on lexicographic linear ranking functions.
Furthermore, the two points discussed at the end of Section~\ref{sec:lrf}
also apply to the comparison with lexicographic linear ranking functions.
Thus, our method is strictly more powerful,
and easier to automate, than 
the method based on lexicographic linear ranking functions.


\subsection{Methods Based on Disjunctive Well-Founded Relations with Linear Ranking Functions}
\label{sec:dwflrf}
  An alternative popular approach to proving termination or other liveness properties
  is to use disjunctive well-founded relations~\cite{DBLP:conf/lics/PodelskiR04,Kuwahara2014Termination}.
  In the context of the HFL model checking problem,
  the method can be recast as the following variation of \(X_{\LLRF}\) above:
  \[
X_{\DWF}\,\seq{y}_p\,\seq{y} =_\nu
D(\seq{y}, \seq{y_p})\land
 \form\,(\lambda \seq{y}'.X_{\DWF}\,\seq{y}\,\seq{y}'\land X_{\DWF}\;\seq{y}_p\;\seq{y}').
\]
Here, \(D\) is a finite union of well-founded relations.
The main difference from \(X_{\LLRF}\) is that the arguments of recursive calls
are compared between any ancestors and descendants, instead of just
between parents and children.
In practice, a linear ranking function is often used to represent
each well-founded relation composing \(D\).
It is known that the method based on disjunctive well-founded relations (DWF)
with linear ranking functions (called DWFLR below)
is more powerful than the method based on disjunctive well-founded relations
(in fact, the example discussed at the end of Section~\ref{sec:lrf} can be
handled by the former)~\cite{DBLP:conf/tacas/CookSZ13}, but
Cook et al.~\cite{DBLP:conf/tacas/CookSZ13} have empirically shown that
the latter is often more efficient than the former.

There exists an example for which our method works, but
DWFLR does not. Consider the predicate \(X\) defined by:
\[
X\,y\,a\,p =_\mu y=a \lor X\,((y\times a)\;\mathtt{mod}\;p)\;a\;p,
\]
and suppose that we wish to prove that for any positive
integer \(a>0\) and prime number \(p\), \(X\,(a^2)\,a\,p\) holds.
Since \(a^p\equiv a\; \mathtt{mod}\; p\) (Fermat's little theorem),
it suffices to approximate
\(X\,(a^2)\,a\,p\) with \(X'\,(p+1)\,(a^2)\,a\,p\) where 
\[X'\,u\,y\,a\,p =_\nu u>0\land
((y=a)\lor X'\,(u-1)\,((y\times a)\;\mathtt{mod}\;p)\;a\;p)\]
in our method.
However, there exists no appropriate disjunctive well-founded relation
that can be expressed as a combination of linear ranking functions.

Below we show that our method is strictly more powerful than DWFLR.
Suppose that \(D= R_{r_1}\cup\cdots \cup R_{r_k}\), where
\(r_i\) is an affine function (i.e., \(r_i(y_1,\ldots,y_\ell)\) is
of the form \(c_0+c_1y_1+\cdots +c_\ell y_\ell\))
and 
\(R_{r_i} = \set{(\seq{v}, \seq{w}) \mid 0 \le r_i(\seq{v}) < r_i(\seq{w})}\).
We show that any sequence \(\set{\seq{v}_i}_{0\le i\le m-1}\) such that
\(\forall i,j. i<j \imp (v_i,v_j)\in D\) can be mapped to
a decreasing sequence over \(\Nat^k\) (where \(\Nat\) is the set of
natural numbers) with respect to the lexicographic order on \(\Nat^k\).

We first prepare some definitions.
Given a set \(V \subseteq \Nat^\ell\), we write
\(\Down{V}\subseteq \Nat^\ell\) for the set:
\[
\set{\seq{v}\in\Nat^\ell \mid \forall \seq{w}\in V.(\seq{v},\seq{w})\in D}.
\]
Note that \(\seq{v}_{j+1} \in \Down{V_j}\) holds
for any sequence
\(\set{\seq{v}_i}_{1\le i\le m}\) that satisfies the above condition,
where \(V_j := \set{\seq{v}_i\mid 1\le i\le j}\).
For a tuple \(\seq{b} = (b_1,\ldots,b_k)\) with \(b_i\in \Nat\cup\set{\omega}\),
we define the \emph{base set} \(\Base{(b_1,\ldots,b_k)}\) by:
\[
\Base{(b_1,\ldots,b_k)} :=
\set{\set{v}\in\Nat^\ell\mid
    \forall i\in\set{1,\ldots,k}.b_i=\omega \lor 0\le r_i(\set{v})= b_i}.
\]

For \(j\in\set{0,\ldots,m-1}\), 
we can construct a finite set \(S_j\subseteq (\Nat\cup\set{\omega})^k\)
such that \(\Down{V_j}\subseteq \bigcup_{\seq{b}\in S_j} \Base{\seq{b}}\)
(in which case we say \(S_j\) \emph{covers} \(\Down{V_j}\)).
Given \(\seq{v}_0 \in \Nat^\ell\), we set \(S_0\) to
\begin{align*}
  \bigcup_{i\in\set{1,\ldots,k}}
   \set{(\omega^{i-1},x,\omega^{k-i}) \mid 0\le x < r_i(\seq{v}_0)}.
\end{align*}
Suppose that \(S_{j-1}\) covers \(\Down{V_{j-1}}\) (with \(j\ge 1\)), i.e.,
  \(\Down{V_{j-1}}\subseteq \bigcup_{\seq{b}\in S_{j-1}} \Base{\seq{b}}\),
  and \(\seq{v}_j \in \Down{V_{j-1}}\).
  Since
  \(\seq{v}_j \in
  \Down{V_{j-1}}\subseteq \bigcup_{\seq{b}\in S_{j-1}} \Base{\seq{b}}\),
  we can pick \(\seq{b}'=(b'_1,\ldots,b'_k)\) such that
  \(\seq{v}_j \in\Base{\seq{b}'}\),
  and let \(I_{\seq{b}'} = \set{i\in\set{1,\ldots,k}\mid b'_i=\omega}\).
  If \(I_{\seq{b}'}=\emptyset\), then 
  let \(S_j\) be \(S_{j-1}\setminus \set{\seq{b}'}\).
  Otherwise, let 
  \begin{align*}
  S_j :=  ( S_{j-1}\setminus \set{\seq{b}'})
    \cup
    \bigcup_{i\in I_{\seq{b}'}} \set{(b'_1,\ldots,b'_{i-1},x,b'_{i+1},\ldots,b'_k)
      \mid 0\le x < r_i(\seq{v}_j)}.
  \end{align*}
  Then \(\Down{V_j}\) is covered by \(S_j\), as required. A concrete example
  of \(S_j\) is given in Example~\ref{ex:cover} below.

  Now, let us define the measure \(\measure{S_j}\) of \(S_j\) as
  \((\meas{S_j}{k-1},\ldots,\meas{S_j}{0})\) where
  \(\meas{S}{i} = |\set{\seq{b}\in S\mid |I_{\seq{b}}|=i}|\).
  Intuitively, \(\meas{S}{i}\) denotes the number of \(i\)-dimensional hyperplanes
  used to cover \(\Down{V_i}\).
  By the construction of \(S_j\) above, \(\measure{S_0},\measure{S_1},\ldots\)
  forms a monotonically decreasing sequence with respect to
  the lexicographic ordering on \(\Nat^k\). Furthermore, whenever an \(i\)-dimensional
  hyperplane is removed, the number of (\(i-1\))-dimensional
  hyperplanes added to the covering is bounded above by \(\Sigma_{\ell\in \set{1,\ldots,k}} r_\ell(\seq{v})\).

  By the observation above, \(X'\,(r_1(\seq{v})+\cdots + r_k(\seq{v}), 0^{k-1})\,\seq{v}\)
  is at least as good an approximation of \(X\,\seq{v}\) as
  \(X_{\DWF}\,\seq{\infty}\,\seq{v}\), where \(X'\) is defined by:
  \begin{align*}
    X'\,(u_{k-1},\ldots,u_0)\,\seq{y}
    =_\nu\, & u_{k-1}\ge 0\land\cdots \land u_0\ge 0\land\\&
    \form(\lambda \seq{y}'.
    X'\,(u_{k-1}-1,u_{k-2}+r_1(\seq{y}')+\cdots + r_k(\seq{y}'), \ldots,u_0)\,\seq{y}'\\&\quad
    \lor 
    X'\,(u_{k-1},u_{k-2}-1, u_{k-3}+r_1(\seq{y}')+\cdots + r_k(\seq{y}'), \ldots,u_0)\,\seq{y}'\\&\quad
    \lor 
    \cdots
    \lor 
    X'\,(u_{k-1},\ldots,u_1, u_0-1)_k\,\seq{y}' 
    ).
  \end{align*}
  Thus, our approximation (using \(X_{\MC}\) in Remark~\ref{rem:ack})
  is at least as good as DWFLR if
  we set \(c\ge 1\) and \(d\) so that \(c(|y_1|+\cdots+|y_\ell|)+d \ge
  r_1(y_1,\ldots,y_\ell)+\cdots + r_k(y_1,\ldots,y_\ell)\).

  \begin{example}
    \label{ex:cover}
    Let \(k=\ell=3\), \(\seq{v}_i\,(0\le i\le 4)\)
    and \(r_j(x,y,z)\, (j\in\set{1,2,3})\) be given as follows.
    \begin{align*}
    & \seq{v}_0=(1,1,1),\quad
    \seq{v}_1=(0,2,1),\quad
    \seq{v}_2=(2,0,1),\quad
    \seq{v}_3=(2,2,0),\quad
    \seq{v}_4=(1,0,1),\\
    & r_1(x,y,z)=x,\quad r_2(x,y,z)=y,\quad r_3(x,y,z)=z.
    \end{align*}
    Then, \(S_j\)  and \(\measure{S_j}\) (\(0\le i\le 4\)) are:
    \allowdisplaybreaks[4]
    \begin{align*}
      &  S_0 = \set{(0,\omega,\omega), (\omega,0,\omega), (\omega,\omega,0)}      & \measure{S_0}=(3,0,0)\;\qquad\ \  \\&
      S_1 = \set{(0,1,\omega),(0,0,\omega),(0,\omega,0), (\omega,0,\omega), (\omega,\omega,0)}
      & \measure{S_1}=(2,3,0)\;\qquad\ \  \\&
      S_2 = \set{(0,1,\omega),(0,0,\omega),(0,\omega,0), (1,0,\omega), (\omega,0,0), (\omega,\omega,0)} & \measure{S_2}=(1,5,0)\;\qquad\ \  \\&
      S_3 = \set{(0,1,\omega),(0,0,\omega),(0,\omega,0), (1,0,\omega), (\omega,0,0), (1,\omega,0),(\omega,1,0)} & \measure{S_3}=(0,7,0)\;\qquad\ \  \\&
      S_4 = \set{(0,1,\omega),(0,0,\omega),(0,\omega,0), (1,0,0), (\omega,0,0), (1,\omega,0),(\omega,1,0)} & \measure{S_4}=(0,6,1). 
    \end{align*}
   \end{example}

  \begin{example}
    \label{ex:dwf}
    Let us consider the termination of the following
    imperative program~\cite{DBLP:conf/tacas/CookSZ13}.
\begin{verbatim}
      assume(m>0); while x<>m do if x>m then x:=0 else x:=x+1.
\end{verbatim}
\newcommand\Loop{\texttt{Loop}}
The termination is expressed as the validity of the \hflz{} formula
\(\forall m. \forall x.m>0 \imp \texttt{Loop}\;x\;m\) where
\begin{align*}
  \Loop\;x\;m =_\mu x=m \lor (x>m\land \Loop\;0\;m)\lor
  (x\le m\land \Loop\;(x+1)\;m).
\end{align*}
Cook et al.~\cite{DBLP:conf/tacas/CookSZ13} gave the following disjunctive
well-founded relation as the termination argument:
\[
(x < x_p \land 0\le x_p) \lor (m-x < m_p-x_p \land 0\le m_p-x_p).
\]
Let \(r_1(x,m) = \max(0,x+1)\) and \(r_2(x,m)=\max(0,m-x+1)\)
(recall Remark~\ref{rem:lrf}).
Based on the discussion above, the formula can be approximated in
our approach by:
\(\forall m. \forall x.m>0 \imp \Loop'\;(r_1(x,m)+r_2(x,m),0)\;x\;m\) where:
\begin{align*}
&  \Loop'\;(u_1,u_0)\;x\; m =_\nu
  u_1\ge 0\land u_0\ge 0\\& \qquad\land
  \form(\lambda x'.\lambda m'.\Loop'(u_1-1,u_0+r_1(x',m')+r_2(x',m'))\;x'\;m'
  \lor \Loop'(u_1,u_0-1)\;x'\;m'),
\end{align*}
with \(\form \equiv \lambda p.x=m \lor (x>m\land p\;0\;m)\lor
(x\le m\land p\;(x+1)\;m)\).
The part \(r_1(x,m)+r_2(x,m)\) can be replaced by
\(|x+1|+|m-x+1| \le (|x|+1)+(|m|+|x|+1) = 2|x|+|m|+2\).
This approximation is a conservative one obtained from the theory above;
the approximation by
\(\forall m. \forall x.m>0 \imp \Loop''\;(|x|+|m|+2)\;x\;m\) where:
\begin{align*}
  \Loop''\;u\;x\;m =_\nu
  u\ge 0\land
  \form(\Loop''\;(u-1))
\end{align*}
would actually suffice.
\qed
\end{example}


\section{Implementation and Experiments}
\label{sec:exp}

\subsection{Implementation}
\label{sec:imp}
We have implemented an \hflz{} validity checker \muhfl{}
based on the method described in Section~\ref{sec:method}
(including the extension discoursed in Remark~\ref{rem:ack}).
We use \rethfl{}
~\cite{DBLP:conf/aplas/KatsuraIKT20}
as the backend \nuhflz{} solver.\footnote{Our solver can also use other \nuhflz{} solvers, such as \pahfl~\cite{DBLP:conf/sas/IwayamaKST20},
  but \rethfl{} has performed the best in our use case.}
For the sake of simplicity of the implementation,
our current implementation does not support the subsumption rule \rn{Tr-Sub}
in the type-based transformation described in Section~\ref{sec:opt};
the lack of TR-SUB may miss some optimization opportunity in theory,
but we have not observed any problem caused by it
in the experiments reported in Section~\ref{sec:eval}.

In the implementation, the coefficients for bounds for the number of
unfoldings (\(c\), \(d\)), the coefficients for extra arguments
(\(c'\), \(d'\)), and the number of counters described in Remark
\ref{rem:ack} are set as shown in Table
\ref{tab:PrametersForEachIteration} for the first four iterations of
the approximation. After the fourth iteration, \(c\), \(d\), \(c'\), and \(d'\) are
doubled for every two iterations, and the number of counters
alternates between 1 and 2.

\begin{table}[tbp]
  \caption{Parameters for each iteration of the approximation.}
  \label{tab:PrametersForEachIteration}
  \begin{center}
    \begin{tabular}{|c|c|c|c|c|c|}
    \hline
    iteration & \(c\) & \(d\) & \(c'\) & \(d'\) & the number of counters \\ \hline
    1         & 1 & 2  & 1  & 1  & 1                                      \\
    2         & 1 & 2  & 1  & 1  & 2                                      \\
    3         & 1 & 16 & 1  & 1  & 1                                      \\
    4         & 1 & 16 & 1  & 1  & 2                                      \\ \hline
    \end{tabular}
  \end{center}
\end{table}

\begin{remark}
  As discussed already, 
  the precision of the approximation monotonically increases
  with respect to \(c,\,d,\,c',d'\) and the number of counters.
  In practice, however, choosing large values for
  \(c,\,d,\,c',d'\) and the number of counters may slow down the backend \nuhflz{} solver.
  Thus, it would be better to choose different values for those parameters
  for each least fixpoint formula.
  Developing a better way to determine the values is left for future work. \qed
\end{remark}

We additionally implemented an optimization to omit some extra arguments for consecutive higher-order arguments.
For example, consider a formula \(\lambda x.\lambda y.\form\) whose type is \((\efty, {\taguse})\to(\efty, {\taguse})\to\Prop\).
If partial applications of this formula never occur, we can transform the formula to \(\lambda (v_{x,y},x,y).\form\), instead of \(\lambda (v_{x},x).\lambda (v_{y},y).\form\), because the two arguments are always passed together.
We infer which extra arguments can be omitted by using a type-based analysis.
In this section, we call this optimization ``Optimization 2,'' and the optimization described in Section~\ref{sec:opt} ``Optimization 1''.

\nk{I have removed the comment about disjunction elimination.}

\subsection{Evaluation}
\label{sec:eval}
To evaluate the effectiveness of our method,
we conducted the following \fix{three} experiments:
\begin{itemize}
  \item comparison with previous verification tools for temporal properties of higher-order programs~\cite{Kuwahara2014Termination,Kuwahara2015Nonterm,MTSUK16POPL,Watanabe16ICFP},
  \item comparison with and without the two optimizations of extra arguments, and
  \item  \fix{further evaluation of our tool using \hflz{} formulas that are reduced from temporal property verification problems for higher-order programs~\cite{Koskinen14,Hofmann14CSL,Lester2011,DBLP:conf/pepm/WatanabeTO019}, which cannot be solved (at least directly)
    by the previous verification methods used in the first experiment.}
    This benchmark set includes
    branching-time properties of higher-order programs, for which there were no automated
    tools to our knowledge.
\end{itemize}
Note that all the problems used in the experiments involve higher-order predicates;
thus, the previous tool for
the first-order fragment of \hflz{}~\cite{DBLP:conf/sas/0001NIU19} is not applicable.

The experiments were conducted on a machine
with Intel Xeon CPU E5-2680 v3 and 64GB of RAM.
We set the timeout to 900 seconds.
The benchmark instances
are available at \url{https://github.com/hopv/hflz-benchmark} and the docker image containing the source code and the binary of \muhfl{} is available at \url{https://www.kb.is.s.u-tokyo.ac.jp/vm-images/popl-2023-muapprox.tar.gz}.

\subsubsection{Comparison with previous higher-order program verification tools}
\label{sec:expComparisonWithOtherTools}

We compared
\muhfl{} with the previous automated verification tools for temporal properties
of higher-order programs~\cite{Kuwahara2014Termination,Kuwahara2015Nonterm,MTSUK16POPL,Watanabe16ICFP}.
We used the following six benchmark sets consisting of verification problems for
OCaml programs.
\begin{itemize}
\item \texttt{termination}: termination verification problems taken from
  ~\cite{Kuwahara2014Termination}.
\item \texttt{non-termination}: non-termination verification problems
  taken from~\cite{Kuwahara2015Nonterm}.
\item \texttt{fair-termination}: verification problems for fair termination,
  taken from ~\cite{MTSUK16POPL}.
\item \texttt{fair-non-termination}: verification problems for fair
  non-termination, taken from~\cite{Watanabe16ICFP}.
\item \texttt{termination-ho}: a variation of \texttt{termination},
  where integer values have been converted to closures
  in a manner similar to the example in Section~\ref{sec:ho}.
\item \texttt{fair-termination-ho}: a variation of \texttt{fair-termination},
  where integer values have been converted to closures.
\end{itemize}
%
From the benchmark sets, we excluded out instances that can be directly
reduced to \nuhflz{} formulas (i.e., formulas without the least fixpoint operator \(\mu\)).

For each verification problem instance in the benchmark sets, we
 (automatically) converted it to the \hflz{} validity checking problem
by using the reductions in \cite{ESOP2018full,DBLP:conf/pepm/WatanabeTO019},
and ran our tool \muhfl{}.
We compared its performance with the result of running 
the corresponding previous verification tool
(e.g. Kuwahara et al.'s tool~\cite{Kuwahara2014Termination} for
\texttt{termination} and \texttt{termination-ho}).

\begin{figure}[tp]
  \begin{center}
    \includegraphics[scale=0.8]{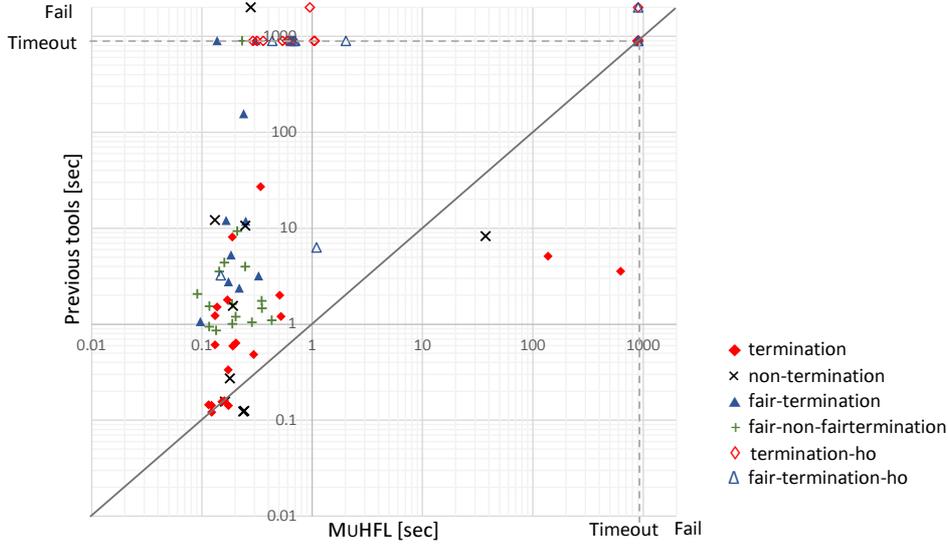}
  \end{center}
  \caption{Comparison with previous tools.}
  \label{tab:ComparisonWithExistingTools}
\end{figure}

\fix{
\begin{table}[tbp]
  \caption{The number of solved instances per benchmark set.}
  \label{tab:NumberOfSolvedInstancesPerBenchmarkSet}
  \begin{center}
\begin{tabular}{|l|l|l|l|} \hline
benchmark set        & no. of instances & solved by \muhfl{} & solved by previous tools \\ \hline
\texttt{termination}          & 21               & 20              & 20                       \\
\texttt{non-termination}      & 9                & 9               & 8                        \\
\texttt{fair-termination}     & 10               & 10              & 8                        \\
\texttt{fair-non-termination} & 16               & 16              & 15                       \\
\texttt{termination-ho}       & 21               & 11              & 0                        \\
\texttt{fair-termination-ho}  & 10               & 7               & 2                        \\ \hline
total                & 87               & 73              & 53                       \\  \hline
\end{tabular}
\end{center}
\end{table}
}

The result is summarized in \figref{tab:ComparisonWithExistingTools}.
In the figure, ``Fail'' means that
the tool was aborted with some error.
The number of solved instances per benchmark set is shown in Table~\ref{tab:NumberOfSolvedInstancesPerBenchmarkSet}.
In total, our tool \muhfl{} could solve more problems than (the combination of)
the four previous tools.
In particular, for the benchmark sets with more higher-order values
 (\texttt{termination-ho} and \texttt{fair-termination-ho}),
the previous tools could solve only two instances
while \muhfl{} solved \fix{18} instances.
We believe that the failure of \muhfl{} to solve the \fix{14} instances is mainly due to
the current limitations of the backend solver \rethfl{}, rather than a fundamental
limitation of our approach.
In fact, we confirmed that most of the \fix{14} instances  
could be solved
by the backend solver after some manual \fix{transformation} of the \nuhflz{} formulas
generated by our reduction from \hflz{} to \nuhflz{}.

As for the instances for which both the previous tools and \muhfl{} succeed,
\fix{\muhfl{} were often faster than the previous tools.}
There are three outliers in \figref{tab:ComparisonWithExistingTools},
for which our tool is significantly slower:
one non-termination problem and two termination problems.
One of those instances requires a large value for the constant \(d\),
hence requiring the number of iterations of the approximation
(cf. Table~\ref{tab:PrametersForEachIteration}).
For the other two, the generated \nuhflz{} formulas belong to a class of formulas
which the current backend solver \rethfl{} is not good at (specifically,
the class of formulas that contain disjunction on fixpoint formulas).
This problem can be remedied by a further improvement of the backend solver.


Overall, our tool \muhfl{} outperformed the previous tools,
which is remarkable, considering that the previous tools 
were specialized for
particular verification problems (such as termination and non-termination), whereas
our tool can deal with all of those verification problems in a uniform manner.
More details of the experimental results are given in
\iffull
Appendix~\ref{app:expdetails}
\else
a longer version~\cite{DBLP:journals/corr/abs-2203-07601}.
\fi

\subsubsection{Comparison with and without the optimizations of extra arguments}

To evaluate the effectiveness of the optimizations on extra arguments (Optimization 1 discussed in
Section~\ref{sec:opt} and Optimization 2 explained in Section~\ref{sec:imp}),
we compared the running times of our tool with and without the optimizations.
From the previous experiments, we picked
instances for which extra arguments are required and \muhfl{} successfully terminated.

The comparison of the total running times (measured in seconds)
is shown in \fix{Fig.}~\ref{fig:ComparisonWithAndWithoutOptimization}.
Fig.~\ref{fig:ComparisonWithAndWithoutOptimizationBackend} shows the times taken by the backend solver,
and \fix{Fig.}~\ref{fig:ComparisonExtraWithAndWithoutOptimization}
compares the numbers of extra parameters.
The instances \fix{from \texttt{murase-closure-ho} to \texttt{koskinen-4-ho}} are from \texttt{fair-termination-ho},
and the other instances are from \texttt{termination-ho}.
As observed in
\fix{Fig.}~\ref{fig:ComparisonWithAndWithoutOptimization},
 the optimizations were generally effective;
\fix{the two optimizations reduced the total running times except for \texttt{sum-ho}, with a maximum reduction of 2.1 seconds for \texttt{binomial-ho}.}
For \fix{all instances}, the number of extra parameters was
reduced \fix{to less than half by the optimizations,} which we think is
one of the reasons for the reduction of the running time.
\fix{The optimizations were particularly effective for \texttt{binomial-ho}.
The reason could be that the number of extra parameters for it was significantly reduced by the optimizations.}
The number of iterations of the approximation refinement cycle
required to solve the instances \fix{was one for all instances and was} not changed by the optimizations.
The optimizations (in particular, Optimization 2 alone for \fix{\texttt{fibonacci-ho}})
sometimes increased the running times by up to a factor of 12.
That seems to be due to some unexpected behavior of the backend solver.%
\footnote{For example, sometimes just a renaming of predicates
  significantly changes the running time of a backend CHC solver used inside \rethfl{}.}
%
\fix{
\begin{figure}[tbp]
  \begin{center}
    \includegraphics[scale=0.8]{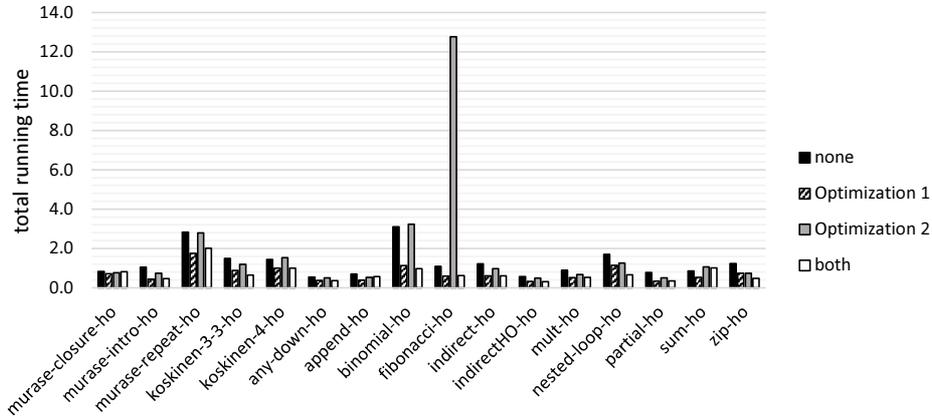}
  \end{center}
  \caption{Comparison of total running times with and without the optimizations.}
  \label{fig:ComparisonWithAndWithoutOptimization}
\end{figure}
\begin{figure}[tbp]
  \begin{center}
    \includegraphics[scale=0.8]{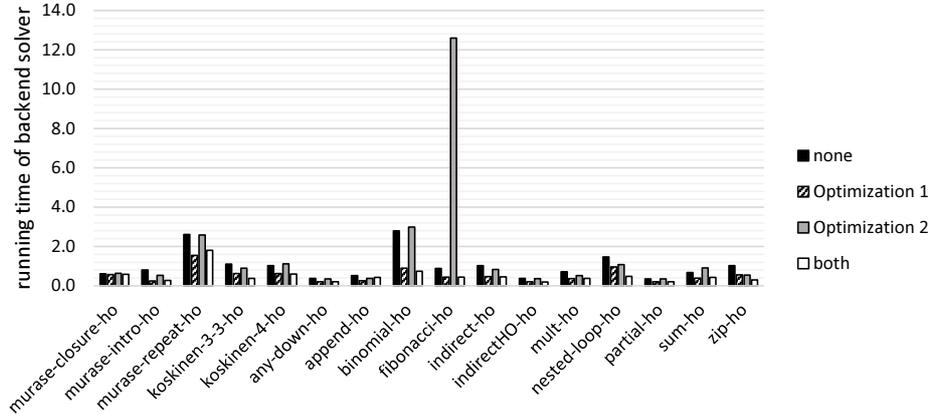}
  \end{center}
  \caption{Comparison of running times of the backend solver with and without the optimizations.}
  \label{fig:ComparisonWithAndWithoutOptimizationBackend}
\end{figure}
\begin{figure}[tbp]
  \begin{center}
    \includegraphics[scale=0.8]{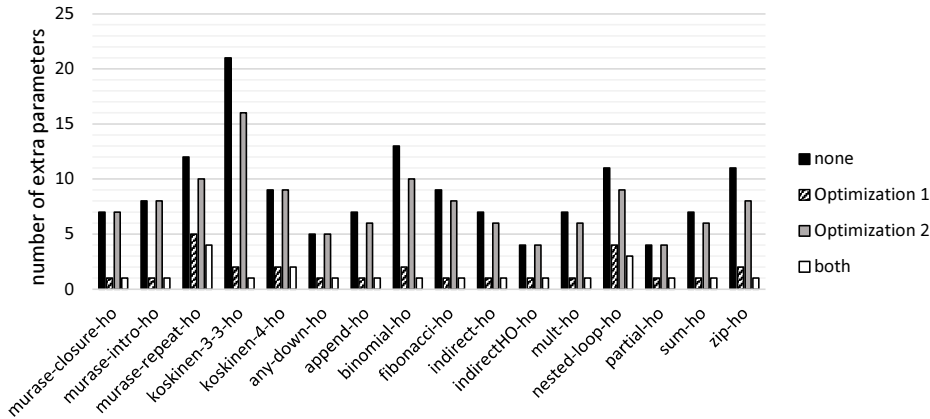}
  \end{center}
  \caption{Comparison of the number of extra parameters with and without the optimizations.}
  \label{fig:ComparisonExtraWithAndWithoutOptimization}
\end{figure}}
\fix{
  \subsubsection{Solving
    other temporal verification problems for higher-order programs}

  To test our tool for other temporal verification problems,
  we have collected a new benchmark set from previous papers
  on temporal verification of higher-order programs~\cite{Hofmann14CSL,Koskinen14,Lester2011,DBLP:conf/pepm/WatanabeTO019}.
  We automatically converted the problems to the \hflz{} validity checking problem by
  using the reduction of Watanabe et al.~\cite{DBLP:conf/pepm/WatanabeTO019},
and ran \muhfl{}.
The results are shown in Table \ref{tab:exp4}, where times are shown in seconds.
The instance \texttt{hofmann-2-direct} comes from the work of
Hofmann and Chen~\cite{Hofmann14CSL}. Specifically, it is the second example
from Appendix A of the extended technical report~\cite{Hofmann14CSLCoRR}.
The instances \texttt{koskinen-1-direct}, \texttt{koskinen-2-direct}, \texttt{koskinen-3-direct}, and \texttt{koskinen-4-direct} are from Figure 10 of the paper by Koskinen and Terauchi~\cite{Koskinen14},
where \texttt{koskinen-1-direct} is \textsc{Reduce}, \texttt{koskinen-2-direct} is \textsc{Rumble},
\texttt{koskinen-3-direct} are \textsc{Eventually Global}, and \texttt{koskinen-4-direct} is \textsc{Alternate Inevitability}.
The instances whose names contain \texttt{buggy} are variations of the Koskinen and Terauchi's instances,
where the original instances are modified so that
the specified properties are violated.
The instance \texttt{lester-direct} is from Appendix H.1 of the paper
by Lester et al.~\cite{Lester2011}.
The previous verification tools used in the first experiment cannot directly solve
those instances~\footnote{Some of the instances were
   \emph{manually} translated to fair-(non-)termination problems
   and were used in the experiments reported in \cite{MTSUK16POPL,%
     Watanabe16ICFP}.}.
The last three instances, i.e., \texttt{sas19-tab1-24-ho}, \texttt{repeat}, and \texttt{repeat2},
are about branching-time properties of higher-order programs,
for which there were no previous automated tools to our knowledge.
The instance \texttt{sas19-tab1-24-ho} is a higher-order version of
the corresponding instance used in \cite{DBLP:conf/sas/0001NIU19} (\#24 in Table~1).
The instance \texttt{repeat} has been taken from \cite{DBLP:conf/pepm/WatanabeTO019}
(Example~3.3) and \texttt{repeat2} is a variation of it.
Watanabe et al.~\cite{DBLP:conf/pepm/WatanabeTO019} proved the validity of
\texttt{repeat} manually by using Coq, but our tool can now prove it fully
automatically.

As shown in Table \ref{tab:exp4}, all the instances were successfully solved by our
tool.
Note that, among the previous work~\cite{Hofmann14CSL,Koskinen14,Lester2011,DBLP:conf/pepm/WatanabeTO019} which
those instances come from, only Lester et al.~\cite{Lester2011} implemented an actual
automated verification tool,
which can solve \texttt{lester-direct} (but not others)
in 0.035 second. For the other instances,
we are not aware of other fully automated tools that can directly solve them.

%
\begin{table}[tbp]
  \caption{Results and running times for solving \hflz{} formulas translated from verification problems for higher-order programs.}
  \label{tab:exp4}
  \begin{center}
\begin{tabular}{|l|l|l|} \hline
instance                   & result  & running time \\ \hline
\texttt{hofmann-2-direct}          & valid   & 0.27         \\
\texttt{koskinen-1-buggy-direct}   & invalid & 0.31         \\
\texttt{koskinen-1-direct}         & valid   & 0.19         \\
\texttt{koskinen-2-buggy-direct}   & invalid & 1.38         \\
\texttt{koskinen-2-direct}         & valid   & 0.70         \\
\texttt{koskinen-3-buggy-1-direct} & invalid & 0.32         \\
\texttt{koskinen-3-buggy-2-direct} & invalid & 0.39         \\
\texttt{koskinen-3-direct}         & valid   & 0.37         \\
\texttt{koskinen-4-direct}         & valid   & 0.38         \\
\texttt{lester-direct}             & valid   & 0.22         \\
\texttt{sas19-tab1-24-ho}                     & valid   & 0.21         \\
\texttt{repeat}                    & valid   & 6.99         \\
\texttt{repeat2}                   & valid   & 8.05         \\  \hline
\end{tabular}
\end{center}
\end{table}
}

\section{Related Work}
\label{sec:related}

As already mentioned,
the framework of program verification by reduction to \hflz{} validity checking
has been advocated by Kobayashi et al.~\cite{KTW18ESOP,DBLP:conf/pepm/WatanabeTO019}.
It can be considered a generalization of the CHC-based program
verification framework~\cite{Bjorner15}, where higher-order predicates and
fixpoint alternations are allowed in the target logic.
Burn et al.~\cite{DBLP:journals/pacmpl/BurnOR18} also considered a higher-order extension of CHC
and its application to program verification, but their logic corresponds to
the \nuhflz{} fragment, which does not support fixpoint alternations.
Higher-order predicates
are useful for modeling higher-order programs, and fixpoint alternations are
useful for dealing with temporal properties. The effectiveness of the
framework has been partially demonstrated for the first-order
fragment of \hflz{}~\cite{DBLP:conf/sas/0001NIU19}.
There have been implementations of automated validity checkers for
\nuhflz{} (a fragment of \hflz{} without least fixpoint operators)~\cite{DBLP:conf/sas/IwayamaKST20,DBLP:conf/aplas/KatsuraIKT20},
and a satisfiability checker for HoCHC~\cite{DBLP:journals/pacmpl/BurnOR18}.
To our knowledge, however, there have been no tools for full \hflz{}.

Watanabe et al. (\cite{DBLP:conf/pepm/WatanabeTO019}, Section~4.2) sketched (but have not
implemented) another method for approximating least fixpoint formulas with
greatest fixpoint formulas. Their method relies on the discovery of a well-founded
relation on the arguments of fixpoint predicates, which is hard to automate, especially
in the presence of higher-order arguments.
In contrast, our approach is much easier to automate; to refine the approximation,
we just need to monotonically increase constant parameters (\(c,d,c',d'\)
in Section~\ref{sec:method}).
The idea of our approach has been inspired
by the work of Kobayashi et al.~\cite{DBLP:conf/sas/0001NIU19} on
the first-order fragment of \hflz{}.
That idea can further be traced back to
the method of Fedyukovich et al.~\cite{freqterm} for termination analysis.

The idea of adding extra integer parameters for higher-order arguments
has been inspired by Unno et al.'s work~\cite{UnnoTK13} on a relatively
complete refinement type system, but the details on the way extra parameters
are different. In particular, our method of adding extra parameters is
easier to automate. We have also proposed a type-based optimization to avoid
redundant extra parameters. Our type-based optimization may be considered
an instance of type-based flow analysis~\cite{Palsberg01,NielsonBook}.

Various techniques have been proposed and implemented for automated verification of
various linear-time temporal properties of higher-order programs, including safety
properties~\cite{Jhala08,Terauchi10POPL,KSU11PLDI,SUK13PEPM,Ong11POPL,DBLP:journals/pacmpl/PavlinovicSW21,zhu_2015},
termination~\cite{Kuwahara2014Termination}, non-termination~\cite{Kuwahara2015Nonterm},
fair termination~\cite{MTSUK16POPL}, and fair non-termination~\cite{Watanabe16ICFP}.
In contrast to those studies, which developed separate techniques
and tools for
proving different properties, our \hflz{} validity checker serves as a common backend for
all of those properties, and can also be used for the verification of
branching-time properties of higher-order programs.

\section{Conclusion}
\label{sec:conc}

We have proposed an automated method for \hflz{} validity checking,
which provides a streamlined approach to fully automated verification
of temporal properties of higher-order programs, and proved the soundness
of our method. We have also compared our approach with previous verification
methods for proving termination and liveness properties, such as those using
lexicographic linear ranking functions and disjunctively well-founded relations.
We have implemented a tool based on the proposed method, and
confirmed its effectiveness through experiments. To our knowledge,
our tool is the first automated \hflz{} validity checker,
which serves as a common backend tool for automated verification of temporal
properties of functional programs.

\subsection*{Acknowledgment}
This work was supported by
JSPS KAKENHI Grant Numbers JP20H05703.


\newpage
\section*{Appendix}
\appendix
\section{Correctness of the Transformation}
\label{sec:correctness}

  We show correctness of the transformation given in Section~\ref{sec:opt}.
  
  We first show that the output of the transformation is a well-typed formula
  (which implies, in particular, extra variables are appropriately passed around).
  Since we have extended the syntax of the target language with pairs,
  we extend simple types by:
  \[
\begin{array}{l}
\sty \mbox{ (extended simple types)} ::= \INT \mid \fty \mid \INT\times \fty\\
\fty \mbox{ (extended predicate types)} ::= \Prop \mid \sty \to \fty.
\end{array}
\]
and
 extend the typing rules in \figref{fig:st} with the following rules.
\infrule[T-Pair]{\stenv\pST e:\INT\andalso \stenv\pSTex \form:\fty}
      {\stenv\pSTex (e,\form):\INT\times\fty}
\infrule[T-PAbs]{\stenv, v_x:\INT, x:\fty_1\pSTex \form:\fty_2}
        {\stenv\pSTex \lambda (v_x,x).\form: \INT\times\fty_1\to\fty_2}

For tagged types \(\tagty\) and \(\efty\), the corresponding simple types
\(\trT{\tagty}\) and \(\trT{\efty}\) are defined by:
\[
\begin{array}{l}
\trT{(\efty,\taguse)} = \INT\times \trT{\efty}\\
\trT{(\efty,\tagnotuse)} = \trT{\efty}\\
\trT{\INT} = \INT\\
\trT{(\tagty\to\efty)}=\trT{\tagty}\to\trT{\efty}.
\end{array}
\]
We extend the operation to type environments by:
\[\trT{(x_1\COL\tagty_1,\ldots,x_k\COL\tagty_k)} =
p_{x_1,\tagty_1}\COL\trT{\tagty_1},\ldots,p_{x_k,\tagty_k}\COL\trT{\tagty_k}.\]
Here, \((v_x,x)\COL\INT\times\fty\) is considered a shorthand for \(v_x\COL\INT,x\COL\fty\).
For example, \[
\trT{(x\COL (\Prop,\tagnotuse), y\COL(\INT\to\Prop,\taguse))}
= x\COL\Prop, v_y\COL\INT, y\COL\INT\to\Prop.\]

The following lemma states that the output of the transformation is a well-typed
formula.
\begin{lemma}
  If \(\judgesimp{\envv}{}{\form}{\fty}{\form'}\),
  then \(\trT{\envv}\pST \form':\trT{\fty}\).
  If \(\judgesimp{\envv}{}{\form}{\tagty}{\form'}\),
  then \(\trT{\envv}\pST \form':\trT{\tagty}\).
  In particular,
  \(\judgesimp{\emptyset}{}{\form}{\Prop}{\form'}\)
  implies \(\emptyset\pST \form':\Prop\).
\end{lemma}
\begin{proof}
  This follows by straightforward induction on the derivations of
  \(\judgesimp{\envv}{}{\form}{\fty}{\form'}\) and
  \(\judgesimp{\envv}{}{\form}{\tagty}{\form'}\).
\end{proof}


To prove Theorem~\ref{th:soundness},
we extend the semantics of \hflz{} formulas defined in Section \ref{sec:pre} with pairs introduced in Section~\ref{sec:ho}.
\[
\begin{array}{l}
  \semd{\INT\times\fty} = \set{(n,w)\mid n\in \Z,w\in\semd{\fty}}\\
  \LEQ_{\INT\times\fty} = \set{((n,w),(n,w'))\in \semd{\INT\times\fty}\times
    \semd{\INT\times\fty}
    \mid 
    w\LEQ_{\fty}w'}\\
  \sem{\Gamma\pST (e,\form):\INT\times\fty}(\rho)=
  (\sem{\Gamma\pST e:\INT}\rho,  \sem{\Gamma\pST \form:\fty}\rho)\\
  \sem{\Gamma\pST \lambda (v_x,x^{\fty_1}).\form: \INT\times\fty_1\to\fty}{\rho}
  = \\\qquad
  \lambda (n,w)\in \Z\times \semd{\fty_1}.
  \sem{\Gamma,v_x\COL\INT, x\COL\fty_1\pST\form:\fty}(\rho\set{v_x\mapsto n, x\mapsto w})\\
\end{array}  
\]
We define
the approximation relation
\(\Simge_{\efty}\subseteq \semd{\ST(\efty)}\times \semd{\trT{\efty}}\) by:
\[
\begin{array}{l}
  \Simge_{\INT} = \set{(n,n)\mid n\in \Z}\\
  \Simge_{\Prop} = \set{(\Top,\Bot),(\Top,\Top),(\Bot,\Bot)}\\
  \Simge_{(\efty,\tagnotuse)} = \Simge_{\efty}\\
  \Simge_{(\efty,\taguse)} = \set{(w, (n,w'))\mid n\in \Z, w\Simge_{\efty}w'}\\
  \Simge_{\tagty\to\efty} =
   \set{(f,f')\mid \forall w,w'.w\Simge_{\tagty}w'\imp f\,w\Simge_{\efty}f'\,w'}.
\end{array}  
\]
For \(\envv\), we define \(\Simge_{\envv}\subseteq \semd{\ST(\envv)}\times \semd{\trT{\envv}}\) by:
\[\rho \Simge_{\envv}\rho'
\IFF \forall x\in\dom(\envv).\rho(x)\Simge_{\envv(x)}\rho'(p_{x,\envv(x)}).\]
\begin{lemma}
  \label{lem:soundness-of-trans}
  If \(\judgesimp{\envv}{}{\form}{\efty}{\form'}\)
  and \(\rho \Simge_{\envv} \rho'\), then
  \[\sem{\ST(\envv)\pST \form:\ST(\efty)}\rho
     \Simge_{\efty} \sem{\trT{\envv}\pST \form':\trT{\efty}}\rho'.\]
\end{lemma}
\begin{proof}
  This follows by induction on the derivation of
  \(\judgesimp{\envv}{}{\form}{\efty}{\form'}\), with case analysis on the last
  rule. Since the other cases are trivial, we discuss only the case for \rn{Tr-Mu}.
  Suppose that the last rule used for deriving
  \(\judgesimp{\envv}{}{\form}{\efty}{\form'}\) 
  is \rn{Tr-Mu}.
  Then, we have:
  \[
  \begin{array}{l}
    \form=\mu x.\form_1\\
    \efty=\tagty_1\to\cdots\to\tagty_n\to\Prop\\
    \efty'=\tagty'_1\to\cdots\to\tagty'_n\to\Prop\\
    \judgesimp{\envv,\envpair{x}{\efty'}{\tagv}}{\rho}{\form_1}{\efty'}{\form_1''}\\
      \tagty_i\raweq\tagty_i'\mbox{ for each $i\in\set{1,\ldots,n}$}\\
      \gettags(\restrict{(\envv,y_1\COL\tagty_1,\ldots,y_n\COL\tagty_n)}{\FV(\form_1\,y_1\,\cdots\,y_n)})\subseteq \{\taguse\}\\
      \form_1'''=\left\{\begin{array}{ll}
      \letexp{\exv{x}}{\exarg(\restrict{\envv}{\FV(\mu x.\form_1)})}\form_1''
      &\mbox{if $\tagv=\taguse$}\\
      \form_1'' & \mbox{if $\tagv=\tagnotuse$}
      \end{array}\right.\\
      \form_1'=
      \big(\nu x.\lambda u.\lambda z_1\cdots z_n.u>0\land\hfill\\
      \qquad ([x(u-1)/x]\form_1''')\,z_1\,\cdots\,z_n\big)\,
      \exarg(\restrict{\envv}{\FV((\mu x.\form)y_1\,\cdots\,y_n)})\\
    \form'=\lambda p_{y_1,\tagty_1}\cdots p_{y_n,\tagty_n}.\form_1'\, p_{y_1,\tagty_1'}\cdots\, p_{y_n,\tagty_n'}\\
   \end{array}
  \]
  By the induction hypothesis, for any \(w,w'\) such that
  \(w\Simge_{(\efty',t)} w'\), we have:
  \[
  \begin{array}{l}\sem{\ST(\envv),x\COL\ST(\efty')\pST \form_1:\ST(\efty')}\rho\set{x\mapsto w}\\
  \Simge_{\efty} \sem{\trT{\envv},\trT{(x\COL(\efty',t))}
    \pST \form_1'':\trT{\efty'}}\rho'\set{p_{x,(\efty',t)}\mapsto w'}.
  \end{array}\]
  Let \(w''\) be the second element of \(w'\) if \(t=\taguse\), and \(w''=w'\) otherwise.
  Then, from the relation above and the definition of \(\form_1'''\), we obtain:
  \[
  \begin{array}{l}\sem{\ST(\envv),x\COL\ST(\efty')\pST \form_1:\ST(\efty')}\rho\set{x\mapsto w}\\
  \Simge_{\efty} \sem{\trT{\envv}, x\COL \trT{\efty'}
    \pST \form_1''':\trT{\efty'}}\rho'\set{x\mapsto w''}.
  \end{array}
  \]
  Therefore, we have:
  \[
  \begin{array}{l}
    \lambda w\in\semd{\ST(\efty')}.
    \sem{\ST(\envv),x\COL\ST(\efty')\pST \form_1:\ST(\efty')}\rho\set{x\mapsto w}\\
    \Simge_{(\efty',\tagnotuse)\to\efty'}
    \lambda w''\in\semd{\trT{\efty'}}.
    \sem{\trT{\envv}, x\COL \trT{\efty'}
    \pST \form_1''':\trT{\efty'}}\rho'\set{x\mapsto w''}.
  \end{array}
  \]
  Let \(m\) be
  \(\sem{\trT{\envv}\pST\exarg(\restrict{\envv}{\FV((\mu x.\form)y_1\,\cdots\,y_n)}):\INT}\rho'\).
  It follows by easy induction on \(m\) that
  \[
  \sem{\trT{\envv}\pST\form_1':\trT{\efty'}}\rho'
  =
  (\lambda w\in\semd{\trT{\efty'}}.
    \sem{\trT{\envv}, x\COL \trT{\efty'}
    \pST \form_1''':\trT{\efty'}}\rho'\set{x\mapsto w})^m (\Bot_{\trT{\efty'}}).
    \]
    Thus, we have:
    \[
    \begin{array}{l}
      \sem{\ST(\envv)\pST \form:\ST(\efty)}\rho\\
      \GEQ_{\ST(\efty)}
(\lambda w\in\semd{\ST(\efty')}.
      \sem{\ST(\envv),x\COL\ST(\efty')\pST \form_1:\ST(\efty')}\rho\set{x\mapsto w})^m (\Bot_{\ST(\efty')})\\
      \Simge_{\efty'}
  (\lambda w\in\semd{\trT{\efty'}}.
    \sem{\trT{\envv}, x\COL \trT{\efty'}
      \pST \form_1''':\trT{\efty'}}\rho'\set{x\mapsto w})^m (\Bot_{\trT{\efty'}})\\
= \sem{\trT{\envv}\pST\form_1':\trT{\efty'}}\rho'.    
    \end{array}
    \]
    Therefore, we have
      \(\sem{\ST(\envv)\pST \form:\ST(\efty)}\rho
     \Simge_{\efty} \sem{\trT{\envv}\pST \form':\trT{\efty}}\rho'\) as required.
\end{proof}

Theorem~\ref{th:soundness} follows as an immediate corollary of the above lemma.
\begin{proof}[Proof of Theorem~\ref{th:soundness}]
  A special case of Lemma~\ref{lem:soundness-of-trans},
  where \(\envv=\emptyset\) and \(\efty=\Prop\).
\end{proof}


To prove Theorem~\ref{th:monotonicity}, we define another family
of relations \(\set{\LEmono_{\efty}}_\efty\) parameterized by \(\efty\).
\[
\begin{array}{l}
  \LEmono_{\INT} = \set{(n,n)\mid n\in \Z}\\
  \LEmono_{\Prop} = \set{(\Bot,\Top),(\Top,\Top),(\Bot,\Bot)}\\
  \LEmono_{(\efty,\tagnotuse)} = \LEmono_{\efty}\\
  \LEmono_{(\efty,\taguse)} = \set{((n,w), (n',w'))\mid n\le n', w\LEmono_{\efty}w'}\\
  \LEmono_{\tagty\to\efty} =
   \set{(f,f')\mid \forall w,w'.w\LEmono_{\tagty}w'\imp f\,w\LEmono_{\efty}f'\,w'}
\end{array}
\]
We write \(\rho\LEmono_{\envv}\rho'\) if \(\rho(p_{x,\envv(x)})\LEmono_{\envv(x)}\rho'(p_{x,\envv(x)})\) for
every \(x\in\dom(\envv)\).
\begin{lemma}
  \label{lem:monotonicity}
  Suppose \(\judgesimpc{\envv}{}{\form}{\efty}{c_1,d_1}{\form'^{(c_1,d_1)}}\) and
  \(\judgesimpc{\envv}{}{\form}{\efty}{c_2,d_2}{\form'^{(c_2,d_2)}}\)
  are derived from the same derivation except the values of \(c,d\).
  Suppose also \(\rho\LEmono_{\envv}\rho'\).
  If \(0\le c_1\le c_2\) and \(0\le d_1\le d_2\), then
  \[\sem{\trT{\envv}\pST \form'^{(c_1,d_1)}:\trT{\efty}}\rho
     \LEmono_{\efty}
\sem{\trT{\envv}\pST \form'^{(c_2,d_2)}:\trT{\efty}}\rho'.\]
\end{lemma}
\begin{proof}
  This follows by induction on the derivation of
  \(\judgesimpc{\envv}{}{\form}{\efty}{c_1,d_1}{\form'^{(c_1,d_1)}}\), with case analysis on
  the last rule.
  We discuss only the case for \rn{Tr-Mu}, since the other cases are trivial.
  In the case for \rn{Tr-Mu}, we have:
  \[
  \begin{array}{l}
    \form=\mu x.\form_1\\
    \efty=\tagty_1\to\cdots\to\tagty_n\to\Prop\\
    \efty'=\tagty'_1\to\cdots\to\tagty'_n\to\Prop\\
    \judgesimpc{\envv,\envpair{x}{\efty'}{\tagv}}{\rho}{\form_1}{\efty'}{c,d}{\form_1''^{(c,d)}}\\
      \tagty_i\raweq\tagty_i'\mbox{ for each $i\in\set{1,\ldots,n}$}\\
      \gettags(\restrict{(\envv,y_1\COL\tagty_1,\ldots,y_n\COL\tagty_n)}{\FV(\form_1\,y_1\,\cdots\,y_n)})=\{\taguse\}\\
      \form_1'''^{(c,d)}=\left\{\begin{array}{ll}
      \letexp{\exv{x}}{\exarg^{(c,d)}(\restrict{\envv}{\FV(\mu x.\form_1)})}\form_1''^{(c,d)}
      &\mbox{if $\tagv=\taguse$}\\
      \form_1''^{(c,d)} & \mbox{if $\tagv=\tagnotuse$}
      \end{array}\right.\\
      \form_1'^{(c,d)}=
      \big(\nu x.\lambda u.\lambda z_1\cdots z_n.u>0\land\hfill\\
      \qquad ([x(u-1)/x]\form_1'''^{(c,d)})\,z_1\,\cdots\,z_n\big)\,
      \exarg^{(c,d)}(\restrict{\envv}{\FV((\mu x.\form_1)y_1\,\cdots\,y_n)})\\
    \form'^{(c,d)}=\lambda p_{y_1,\tagty_1}\cdots p_{y_n,\tagty_n}.\form_1'^{(c,d)}\, p_{y_1,\tagty_1'}\cdots\, p_{y_n,\tagty_n'}\\
   \end{array}
  \]
  for \((c,d)\in\set{(c_1,d_1),(c_2,d_2)}\).
  Here, we have made \(c,d\) explicit in \(\exarg\).
  Suppose \(\rho\LEmono_{\envv}\rho'\).
  By the induction hypothesis,
  for any \(w\LEmono_{(\efty',t)}w'\),
  we have
  \[
\begin{array}{l}
  \sem{\trT{\envv},p_{x,(\efty',t)}\COL\trT{(\efty',t)}\pST \form_1''^{(c_1,d_1)}:\trT{\efty'}}
  (\rho\set{p_{x,(\efty',t)}\mapsto w})\\
    \LEmono_{\efty'}
    \sem{\trT{\envv},p_{x,(\efty',t)}\COL\trT{(\efty',t)}\pST \form_1''^{(c_2,d_2)}:\trT{\efty'}}
    (\rho'\set{p_{x,(\efty',t)}\mapsto w'}).
\end{array}
\]
Let \(w_1,w_1'\) be the second component of \(w,w'\) if \(\tagv=\taguse\)
and \(w_1=w, w_1'=w'\) otherwise.
Since
\[
\begin{array}{l}
\sem{\trT(\envv)\pST\exarg^{(c_1,d_1)}(\restrict{\envv}{\FV(\mu x.\form_1)}):\INT}\rho\\
\le
\sem{\trT(\envv)\pST\exarg^{(c_2,d_2)}(\restrict{\envv}{\FV(\mu x.\form_1)}):\INT}\rho',
\end{array}\]
we have:
  \[
\begin{array}{l}
  \sem{\trT{\envv},x\COL\trT{\efty'}\pST \form_1'''^{(c_1,d_1)}:\trT{\efty'}}
  (\rho\set{x\mapsto w_1})\\
    \LEmono_{\efty'}
    \sem{\trT{\envv},x\COL\trT{\efty'}\pST \form_1'''^{(c_2,d_2)}:\trT{\efty'}}
    (\rho'\set{x\mapsto w_1'}).
\end{array}
\]
Let \(m\) and \(m'\) be
\(\sem{\trT{\envv}\pST
  \exarg^{(c_1,d_1)}(\restrict{\envv}{\FV((\mu x.\form_1)y_1\,\cdots\,y_n)}):\INT}\rho\)
and 
\(\sem{\trT{\envv}\pST
  \exarg^{(c_2,d_2)}(\allowbreak\restrict{\envv}{\FV((\mu x.\form_1)y_1\,\cdots\,y_n)}):\INT}\rho'\)
respectively. Since \(m\le m'\), we have:
\[
\begin{array}{l}
  \sem{\trT{\envv}\pST\form_1'^{(c_1,d_1)}:\trT{\efty'}}\rho\\
  =(\lambda w_1.\sem{\trT{\envv},x\COL\trT{\efty'}\pST \form_1'''^{(c_1,d_1)}:\trT{\efty'}}
  (\rho\set{x\mapsto w_1}))^m(\Bot_{\trT{\efty'}})\\
  \LEmono_{\efty'}
(\lambda w_1.\sem{\trT{\envv},x\COL\trT{\efty'}\pST \form_1'''^{(c_1,d_1)}:\trT{\efty'}}
  (\rho\set{x\mapsto w_1}))^{m'}(\Bot_{\trT{\efty'}})\\  
  \LEmono_{\efty'}
(\lambda w_1'.\sem{\trT{\envv},x\COL\trT{\efty'}\pST \form_1'''^{(c_2,d_2)}:\trT{\efty'}}
  (\rho'\set{x\mapsto w_1'}))^{m'}(\Bot_{\trT{\efty'}})\\
  =
  \sem{\trT{\envv}\pST\form_1'^{(c_2,d_2)}:\trT{\efty'}}\rho'.
\end{array}
\]
We have thus
  \[\sem{\trT{\envv}\pST \form'^{(c_1,d_1)}:\trT{\efty}}\rho
     \LEmono_{\efty}
     \sem{\trT{\envv}\pST \form'^{(c_2,d_2)}:\trT{\efty}}\rho'\]
     as required.
\end{proof}

Theorem~\ref{th:monotonicity} is an immediate corollary of the above lemma.
\begin{proof}[Proof of Theorem~\ref{th:monotonicity}]
  A special case of Lemma~\ref{lem:monotonicity},
  where \(\envv=\emptyset\) and \(\efty=\Prop\).
\end{proof}

\pagenumbering{arabic}
\section{More Information on the Experimental Results}
\label{app:expdetails}
The full results of the experiment for comparison with previous higher-order program verification tools are shown in Table \ref{tab:resultextendedone} and \ref{tab:resultextendedtwo}.
For all instances, the expected result is ``valid.''

{\small
\setlength{\tabcolsep}{1mm}
\begin{table}[bp]
\caption{Results of the experiment for comparison with previous higher-order program verification tools (1/2).}
\label{tab:resultextendedone}
\begin{center}
\begin{tabular}{|l|l|l|r|l|l|l|r|} \hline
                     &                      & \multicolumn{4}{l|}{\muhfl{}} & \multicolumn{2}{l|}{previous   tools} \\ \hline
benchmark            & instance             & result  & \multicolumn{1}{l|}{time} &
  \begin{tabular}[c]{@{}l@{}}no. of iter.\\ (prover)\end{tabular} & \begin{tabular}[c]{@{}l@{}}no. of iter.\\ (disprover)\end{tabular} & result & \multicolumn{1}{l|}{time} \\ \hline
\texttt{termination}          & \texttt{ackermann}            & valid   & 136.52                   & 2      & 1    & verified  & 5.13      \\
\texttt{termination}          & \texttt{any-down}             & valid   & 0.17                     & 1      & 1    & verified  & 0.14      \\
\texttt{termination}          & \texttt{append}               & valid   & 0.12                     & 1      & 1    & verified  & 0.14      \\
\texttt{termination}          & \texttt{binomial}             & valid   & 0.20                     & 1      & 1    & verified  & 0.64      \\
\texttt{termination}          & \texttt{fibonacci}            & valid   & 0.16                     & 1      & 1    & verified  & 0.16      \\
\texttt{termination}          & \texttt{foldr}                & valid   & 0.13                     & 1      & 1    & verified  & 1.23      \\
\texttt{termination}          & \texttt{indirect}             & valid   & 0.52                     & 1      & 1    & verified  & 1.20      \\
\texttt{termination}          & \texttt{indirectHO}           & valid   & 0.19                     & 1      & 1    & verified  & 8.12      \\
\texttt{termination}          & \texttt{indirectIntro}        & valid   & 0.34                     & 1      & 1    & verified  & 27.03     \\
\texttt{termination}          & \texttt{loop2}                & valid   & 0.29                     & 2      & 1    & verified  & 0.48      \\
\texttt{termination}          & \texttt{map}                  & valid   & 0.51                     & 1      & 1    & verified  & 2.00      \\
\texttt{termination}          & \texttt{mc91}                 & valid   & 623.44                   & 5      & 3    & verified  & 3.57      \\
\texttt{termination}          & \texttt{mult}                 & valid   & 0.11                     & 1      & 1    & verified  & 0.15      \\
\texttt{termination}          & \texttt{nested-loop}          & valid   & 0.17                     & 1      & 1    & verified  & 0.33      \\
\texttt{termination}          & \texttt{partial}              & valid   & 0.14                     & 1      & 1    & verified  & 1.51      \\
\texttt{termination}          & \texttt{quicksort}            & timeout & -                        & 5      & 3    & timeout   & -         \\
\texttt{termination}          & \texttt{sum}                  & valid   & 0.12                     & 1      & 1    & verified  & 0.12      \\
\texttt{termination}          & \texttt{toChurch}             & valid   & 0.13                     & 1      & 1    & verified  & 0.61      \\
\texttt{termination}          & \texttt{up-down}              & valid   & 0.19                     & 1      & 1    & verified  & 0.59      \\
\texttt{termination}          & \texttt{x-plus-2-n}           & valid   & 0.17                     & 1      & 1    & verified  & 1.79      \\
\texttt{termination}          & \texttt{zip}                  & valid   & 0.15                     & 1      & 1    & verified  & 0.16      \\
\texttt{non-termination}      & \texttt{fib-CPS-nonterm}      & valid   & 0.16                     & 1      & 1    & verified  & 0.16      \\
\texttt{non-termination}      & \texttt{fixpoint-nonterm}     & valid   & 0.18                     & 1      & 2    & verified  & 0.27      \\
\texttt{non-termination}      & \texttt{foldr-nonterm}        & valid   & 0.28                     & 1      & 2    & fail      & -         \\
\texttt{non-termination}      & \texttt{indirectHO-e}         & valid   & 0.24                     & 1      & 2    & verified  & 0.13      \\
\texttt{non-termination}      & \texttt{indirect-e}           & valid   & 0.24                     & 1      & 2    & verified  & 0.12      \\
\texttt{non-termination}      & \texttt{inf-closure}          & valid   & 0.25                     & 1      & 1    & verified  & 10.66     \\
\texttt{non-termination}      & \texttt{loopHO}               & valid   & 0.19                     & 1      & 2    & verified  & 1.55      \\
\texttt{non-termination}      & \texttt{passing-cond}         & valid   & 37.15                    & 1      & 2    & verified  & 8.30      \\
\texttt{non-termination}      & \texttt{unfoldr-nonterm}      & valid   & 0.13                     & 1      & 1    & verified  & 12.15     \\
\texttt{fair-termination}     & \texttt{murase-closure}       & valid   & 0.16                     & 1      & 1    & verified  & 12.05     \\
\texttt{fair-termination}     & \texttt{murase-intro}         & valid   & 0.25                     & 1      & 1    & verified  & 11.87     \\
\texttt{fair-termination}     & \texttt{murase-repeat}        & valid   & 0.22                     & 1      & 1    & verified  & 2.37      \\
\texttt{fair-termination}     & \texttt{hofmann-2}            & valid   & 0.10                     & 1      & 1    & verified  & 1.07      \\
\texttt{fair-termination}     & \texttt{koskinen-1}           & valid   & 0.14                     & 1      & 1    & timeout   & -         \\
\texttt{fair-termination}     & \texttt{koskinen-2}           & valid   & 0.33                     & 2      & 1    & verified  & 3.18      \\
\texttt{fair-termination}     & \texttt{koskinen-3-1}         & valid   & 0.17                     & 1      & 1    & verified  & 2.77      \\
\texttt{fair-termination}     & \texttt{koskinen-3-3}         & valid   & 0.18                     & 1      & 1    & verified  & 5.27      \\
\texttt{fair-termination}     & \texttt{koskinen-4}           & valid   & 0.24                     & 1      & 1    & verified  & 156.58    \\
\texttt{fair-termination}     & \texttt{lester}               & valid   & 0.31                     & 1      & 2    & timeout   & -         \\ \hline
\end{tabular}
\end{center}
\end{table}

\begin{table}[tbp]
\caption{Results of the experiment for comparison with previous higher-order program verification tools (2/2).}
\label{tab:resultextendedtwo}
\begin{center}
\begin{tabular}{|l|l|l|r|l|l|l|r|} \hline
                     &                      & \multicolumn{4}{l|}{\muhfl{}} & \multicolumn{2}{l|}{previous   tools} \\ \hline
benchmark            & instance             & result  & \multicolumn{1}{l|}{time} &
  \begin{tabular}[c]{@{}l@{}}no. of iter.\\ (prover)\end{tabular} & \begin{tabular}[c]{@{}l@{}}no. of iter.\\ (disprover)\end{tabular} & result & \multicolumn{1}{l|}{time} \\ \hline
\texttt{fair-non-termination} & \texttt{call-twice}           & valid   & 0.28                     & 1      & 1    & verified  & 1.05      \\
\texttt{fair-non-termination} & \texttt{compose}              & valid   & 0.13                     & 1      & 1    & verified  & 0.86      \\
\texttt{fair-non-termination} & \texttt{intro}                & valid   & 0.25                     & 1      & 2    & verified  & 3.99      \\
\texttt{fair-non-termination} & \texttt{loop-CPS}             & valid   & 0.12                     & 1      & 1    & verified  & 1.55      \\
\texttt{fair-non-termination} & \texttt{loop}                 & valid   & 0.12                     & 1      & 1    & verified  & 0.95      \\
\texttt{fair-non-termination} & \texttt{murase-closure-buggy} & valid   & 0.19                     & 1      & 2    & verified  & 1.01      \\
\texttt{fair-non-termination} & \texttt{murase-repeat-buggy}  & valid   & 0.20                     & 1      & 1    & verified  & 1.20      \\
\texttt{fair-non-termination} & \texttt{nested-if}            & valid   & 0.35                     & 1      & 3    & verified  & 1.47      \\
\texttt{fair-non-termination} & \texttt{odd-nonterm}          & valid   & 0.23                     & 1      & 2    & timeout   & -         \\
\texttt{fair-non-termination} & \texttt{op-loop}              & valid   & 0.19                     & 1      & 2    & verified  & 1.63      \\
\texttt{fair-non-termination} & \texttt{update-max}           & valid   & 0.43                     & 1      & 3    & verified  & 1.11      \\
\texttt{fair-non-termination} & \texttt{update-max-CPS}       & valid   & 0.35                     & 1      & 3    & verified  & 1.76      \\
\texttt{fair-non-termination} & \texttt{koskinen-1-buggy}     & valid   & 0.14                     & 1      & 1    & verified  & 3.55      \\
\texttt{fair-non-termination} & \texttt{koskinen-2-buggy}     & valid   & 0.16                     & 1      & 1    & verified  & 4.40      \\
\texttt{fair-non-termination} & \texttt{koskinen-3-1-buggy}   & valid   & 0.09                     & 1      & 1    & verified  & 2.07      \\
\texttt{fair-non-termination} & \texttt{koskinen-3-3-buggy}   & valid   & 0.21                     & 1      & 2    & verified  & 9.34      \\
\texttt{termination-ho}       & \texttt{ackermann-ho}         & timeout & -                        & 4      & 3    & fail      & -         \\
\texttt{termination-ho}       & \texttt{any-down-ho}          & valid   & 0.36                     & 1      & 1    & timeout   & -         \\
\texttt{termination-ho}       & \texttt{append-ho}            & valid   & 0.54                     & 1      & 1    & timeout   & -         \\
\texttt{termination-ho}       & \texttt{binomial-ho}          & valid   & 0.95                     & 1      & 1    & fail      & -         \\
\texttt{termination-ho}       & \texttt{fibonacci-ho}         & valid   & 1.04                     & 1      & 1    & timeout   & -         \\
\texttt{termination-ho}       & \texttt{foldr-ho}             & timeout & -                        & 4      & 8    & timeout   & -         \\
\texttt{termination-ho}       & \texttt{indirect-ho}          & valid   & 0.59                     & 1      & 1    & timeout   & -         \\
\texttt{termination-ho}       & \texttt{indirectHO-ho}        & valid   & 0.29                     & 1      & 1    & timeout   & -         \\
\texttt{termination-ho}       & \texttt{indirectIntro-ho}     & timeout & -                        & 5      & 3    & timeout   & -         \\
\texttt{termination-ho}       & \texttt{loop2-ho}             & timeout & -                        & 5      & 3    & timeout   & -         \\
\texttt{termination-ho}       & \texttt{map-ho}               & timeout & -                        & 4      & 9    & timeout   & -         \\
\texttt{termination-ho}       & \texttt{mc91-ho}              & timeout & -                        & 4      & 3    & timeout   & -         \\
\texttt{termination-ho}       & \texttt{mult-ho}              & valid   & 0.67                     & 1      & 1    & timeout   & -         \\
\texttt{termination-ho}       & \texttt{nested-loop-ho}       & valid   & 0.64                     & 1      & 1    & timeout   & -         \\
\texttt{termination-ho}       & \texttt{partial-ho}           & valid   & 0.32                     & 1      & 1    & timeout   & -         \\
\texttt{termination-ho}       & \texttt{quicksort-ho}         & timeout & -                        & 3      & 3    & timeout   & -         \\
\texttt{termination-ho}       & \texttt{sum-ho}               & valid   & 1.05                     & 1      & 1    & timeout   & -         \\
\texttt{termination-ho}       & \texttt{toChurch-ho}          & timeout & -                        & 5      & 9    & timeout   & -         \\
\texttt{termination-ho}       & \texttt{up-down-ho}           & timeout & -                        & 5      & 3    & timeout   & -         \\
\texttt{termination-ho}       & \texttt{x-plus-2-n-ho}        & timeout & -                        & 5      & 3    & timeout   & -         \\
\texttt{termination-ho}       & \texttt{zip-ho}               & valid   & 0.60                     & 1      & 1    & timeout   & -         \\
\texttt{fair-termination-ho}  & \texttt{murase-closure-ho}    & valid   & 0.68                     & 1      & 1    & timeout   & -         \\
\texttt{fair-termination-ho}  & \texttt{murase-intro-ho}      & valid   & 0.43                     & 1      & 1    & timeout   & -         \\
\texttt{fair-termination-ho}  & \texttt{murase-repeat-ho}     & valid   & 2.02                     & 1      & 1    & timeout   & -         \\
\texttt{fair-termination-ho}  & \texttt{hofmann-2-ho}         & valid   & 0.15                     & 1      & 1    & verified  & 3.25      \\
\texttt{fair-termination-ho}  & \texttt{koskinen-1-ho}        & timeout & -                        & 4      & 8    & fail      & -         \\
\texttt{fair-termination-ho}  & \texttt{koskinen-2-ho}        & timeout & -                        & 3      & 3    & timeout   & -         \\
\texttt{fair-termination-ho}  & \texttt{koskinen-3-1-ho}      & timeout & -                        & 6      & 3    & timeout   & -         \\
\texttt{fair-termination-ho}  & \texttt{koskinen-3-3-ho}      & valid   & 0.64                     & 1      & 1    & timeout   & -         \\
\texttt{fair-termination-ho}  & \texttt{koskinen-4-ho}        & valid   & 1.10                     & 1      & 1    & verified  & 6.33      \\
\texttt{fair-termination-ho}  & \texttt{lester-ho}            & valid   & 0.70                     & 1      & 1    & timeout   & -         \\  \hline
\end{tabular}
\end{center}
\end{table}
}

\end{document}
\endinput